\documentclass[10pt,twocolumn]{IEEEtran}

\usepackage[utf8]{inputenc}
\usepackage{amssymb,amsmath,amsfonts,amsthm,bm}
\usepackage{mathrsfs}
\usepackage{graphicx}
\usepackage{cite}
\usepackage{color}
\usepackage{lipsum}
\usepackage[bookmarks]{hyperref}
\usepackage{booktabs}
\usepackage[mathscr]{euscript}
\usepackage{soul}
\usepackage[scr]{rsfso}
\usepackage{tablefootnote}
\usepackage{enumerate}
\usepackage{array,threeparttable}
\usepackage{makecell}
\usepackage{relsize}
\usepackage{bbm}

\newcolumntype{M}[1]{>{\centering\arraybackslash}m{#1}}
\setlength{\columnsep}{0.22 in}
\def\BibTeX{{\rm B\kern-.05em{\sc i\kern-.025em b}\kern-.08em
    T\kern-.1667em\lower.7ex\hbox{E}\kern-.125emX}}

\usepackage[font=footnotesize]{caption}
\usepackage{float}
\usepackage[list=true]{subcaption}
\usepackage{multirow}
\usepackage[export]{adjustbox}
\def\BibTeX{{\rm B\kern-.05em{\sc i\kern-.025em b}\kern-.08em
    T\kern-.1667em\lower.7ex\hbox{E}\kern-.125emX}}
\makeatletter
  \def\hrulefill{\leavevmode\leaders\hrule height 1pt\hfill\kern\z@}
\makeatother
\usepackage{wrapfig}
\usepackage{listings}

\definecolor{gris245}{RGB}{245,245,245}
\definecolor{olive}{RGB}{50,140,50}
\definecolor{brun}{RGB}{175,100,80}
\usepackage{fancybox}
\makeatletter
\makeatother

\hypersetup{
  unicode=false,
  colorlinks=true,       
  linkcolor=black,       
  citecolor=black,       
  filecolor=black,       
  urlcolor=black,
  bookmarksdepth=4
}

\hyphenation{ma-the-ma-ti-cal}

\newcommand{\Alice}[1]{\textnormal{\texttt{A}}}
\newcommand{\Bob}[1]{\textnormal{\texttt{B}}}
\newcommand{\Eve}[1]{\textnormal{\texttt{E}}}

\newtheorem{proposition}{Proposition}
\newtheorem{corollary}{Corollary}

\newtheorem*{definition}{Definition}

\begin{document}

\title{Partial Secrecy Analysis in Wireless Systems: Diversity-Enhanced PLS over Generalized \\Fading Channels}

\author{
Henry~Carvajal~Mora, \textit{Senior Member, IEEE},  Nathaly Orozco, \textit{Senior Member, IEEE}, \\ Fernando Almeida García, \textit{Senior Member, IEEE}, José Vega-Sánchez, \textit{Senior Member, IEEE}, \\ Felipe Grijalva, \textit{Senior Member, IEEE}, and Edgar Benitez Olivo, \textit{Member, IEEE}
\thanks{H.~C.~Mora, J. V.-Sánchez, and F.~Grijalva are with the Colegio de Ciencias e Ingenierías "El Politécnico", Universidad San Francisco de Quito (USFQ), Diego de Robles S/N, Quito 170157, Ecuador (e-mails: \{hcarvajal,dvega,fgrijalva\}@usfq.edu.ec).

N.~Orozco is with the Networking and Telecommunications Engineering, ETEL Research Group, Faculty of Engineering and Applied Sciences, Universidad de Las Américas (UDLA), 170503 Quito, Ecuador (e-mail: nathaly.orozco@udla.edu.ec).

F.~D.~A.~García is with the Wireless and Artificial Intelligence (WAI) laboratory, National Institute of Telecommunications (INATEL), Santa Rita do Sapucaí, MG, 37540-000, Brazil (e-mail: fernando.almeida@inatel.br).

E.~B.~Olivo is with the Department of Communications, School of Electrical and Computer Engineering, University of Campinas (UNICAMP), Campinas 138083-852, São Paulo, Brazil (e-mail: ebenitez@unicamp.br).}
\thanks{
This work was supported in part by the Universidad de Las Américas (UDLA), Quito, Ecuador under Project 577.A.XVII.25, by the Universidad San Francisco de Quito through the Poli-Grants Program under Grant 33595, in part by the National Council for Scientific and Technological Development (CNPq) under Grant 420148/2023-0, and in part by the Fundo de Apoio ao Ensino, Pesquisa e Extensão da UNICAMP (FAEPEX) under Grants 2198/25 and 2621/25.
This work was also funded by the Brasil 6G Project with support from RNP/MCTI (Grant 01245.010604/2020-14), and by the xGMobile Project (Code \mbox{XGM-AFCCT-2025-8-1-1}) with resources from EMBRAPII/MCTI (Grant 052/2023 PPI IoT/Manufatura 4.0 and FAPEMIG Grant \mbox{PPE-00124-23})
}
}

\maketitle

\begin{abstract}
Securing information in future mobile networks is particularly challenging for devices with limited computational resources. In such contexts, physical layer security (PLS) offers a promising alternative by exploiting the inherent randomness of wireless channels. When full secrecy cannot be guaranteed, the partial secrecy regime provides a realistic and effective approach to system design.
This study investigates the partial secrecy performance of wireless systems operating over the recently introduced generalized multicluster fluctuating two-ray (MFTR) fading model. This model is highly regarded for its adaptability in representing a broad range of propagation environments and includes various classical fading models as exceptional cases.
We consider a secure communication setup involving a transmitter (\Alice{}), an intended receiver (\Bob{}), and an eavesdropper (\Eve{}). Inspired by previous work showing the benefits and simplicity of diversity schemes for improving PLS, we assume that both \Bob{} and \Eve{} are equipped with antenna arrays and apply maximal ratio combining (MRC). The model also considers independently but not identically distributed (i.n.i.d.) fading conditions on the $\Alice{}\rightarrow\Bob{}$ and $\Alice{}\rightarrow\Eve{}$ links.
We derive exact and closed-form approximate expressions for three central metrics in partial secrecy theory: generalized secrecy outage probability (GSOP), average fractional equivocation (AFE), and average information leakage rate (AILR). Unlike previous approaches based on simpler fading models, the proposed expressions maintain a constant computational complexity, irrespective of the number of diversity branches, and are compatible with standard mathematical tools.
Validation via Monte Carlo simulations confirms the accuracy of the derived expressions and highlights the impact of key system parameters on secrecy performance. The flexible MFTR framework enables assessment across varied fading conditions. In particular, increasing the number of MRC branches at \Bob{} improves the GSOP diversity order and alters PLS behavior based on the fading characteristics of the $\Alice{}\rightarrow\Eve{}$ link.
\end{abstract}

\begin{IEEEkeywords}
Physical layer security, generalized MFTR fading, maximal-ratio-combining, generalized secrecy outage probability, average fractional equivocation, average information leakage rate.
\end{IEEEkeywords}

\section{Introduction}
\label{sec:introduction}

\IEEEPARstart{A}{s} wireless communications evolve toward the sixth generation (6G), ensuring resilient and efficient information security is more critical than ever. The vision of 6G includes ultra-dense deployments, ubiquitous connectivity, and massive machine-type communications (mMTC), all of which dramatically increase the number of potential attack vectors. This growing complexity amplifies the risk of eavesdropping and data breaches, particularly in scenarios involving resource-constrained devices. As a result, there is a pressing need for security solutions that are not only robust but also lightweight and scalable.

Physical layer security (PLS) is a complementary approach to traditional cryptographic methods in telecommunications, offering resilience against eavesdropping without relying on high computational complexity. This is particularly important in scenarios where conventional security mechanisms may be impractical due to latency constraints, energy limitations, or large-scale deployments. Moreover, the adoption of millimeter-wave (mmWave), sub-terahertz (sub-THz), and terahertz (THz) frequencies, expected to be fundamental in 6G, introduces unique propagation characteristics that influence security analysis and necessitate novel countermeasures tailored to these high-frequency environments.

Classical PLS assumes perfect secrecy, often unrealistic in practical wireless systems. Moreover, the classical secrecy outage probability (SOP) definition is restrictive, as it only indicates whether secrecy is lost without quantifying the eavesdropper’s decoding capability or the amount of leaked information. To address this, the partial secrecy regime introduces refined metrics that provide a more practical assessment of confidentiality \cite{Biao_2016}. 

Fractional equivocation (FE) is a fundamental security metric for analyzing the partial secrecy regime. FE quantifies the degree of uncertainty at the eavesdropper, reflecting how much of the transmitted information remains indecipherable \cite{Moya_2020}. Building on this concept, alternative secrecy metrics have been developed to evaluate security comprehensively in quasi-static fading channels. These metrics include the generalized secrecy outage probability (GSOP), which quantifies the probability of secrecy being compromised at different levels of confidentiality; the average fractional equivocation (AFE), which measures the average degree of uncertainty at the eavesdropper; and the average information leakage rate (AILR), which assesses the average amount of information leaked over time \cite{Biao_2016}. These metrics offer deeper insights into the security performance of wireless systems and are formally defined in subsequent sections.

The security of wireless systems has been extensively studied under various scenarios, often relying on traditional fading models such as Rayleigh, Rician, Nakagami-$m$, and $\kappa$-$\mu$ \cite{Mahmoud_2017,Yun_2018,Akter_2019,Devi2023,Wei_2024,Mora_2024}. While effective in certain propagation conditions, these models struggle to accurately capture the complexities of next-generation wireless networks, particularly those operating in mmWave, sub-THz, and THz frequency ranges \cite{Marins21,Tekbıyık_2021,Papasotiriou_2021}. Higher-frequency bands introduce unique propagation challenges, requiring more adaptable fading models that better reflect real-world conditions.

A step in this direction is embodied by the multi-cluster fluctuating two-ray (MFTR) fading model, introduced in \cite{Vega_2023}, which extends the characteristics of the fluctuating two-ray (FTR) and $\kappa$-$\mu$ shadowed fading models \cite{Paris_2014}. The MFTR model offers versatility in characterizing complex fading environments by incorporating additional multipath clusters and fluctuating specular components. Its probability density function (PDF) can exhibit bimodal behavior under specific parameter configurations, making it particularly well-suited for modeling fading in high-frequency bands~\cite{Vega_2023}. Furthermore, the MFTR distribution generalizes several well-known models, including one-sided Gaussian, Rayleigh, Nakagami-$q$ (Hoyt), Nakagami-$m$, Rician, Rician shadowed, $\kappa$-$\mu$, $\kappa$-$\mu$ shadowed, $\eta$-$\mu$, two-wave, two-wave with diffuse power (TWDP), and FTR.

Additionally, diversity techniques provide a simple yet effective means of enhancing security in wireless systems. By combining multiple independent signal paths, diversity improves the reliability of communication links for legitimate users, ensuring robust performance even in adverse propagation conditions. This enhanced reliability directly strengthens PLS by reducing the probability of successful interception by an eavesdropper \cite{He_2011,Mahmoud_2017,Akter_2019,Rizve_2024,Hasan_2024,Rizve_2024_2,Donghun_2024}.

\subsection{Literature Review}

Some studies have examined PLS in wireless systems operating over channels modeled with generalized fading distributions, incorporating diversity techniques to enhance performance and security. These studies explore advanced statistical models to capture the diverse and complex propagation conditions encountered in modern and future wireless communication scenarios. For instance, in \cite{Wang_2014}, the PLS of maximal ratio combining (MRC) systems over TWDP fading \cite{Durgin_2002} is analyzed considering that a multiple-antenna eavesdropper intercepts messages sent from a single antenna transmitter to a multiple-antenna receiver. An exact integral-form expression for the average secrecy capacity (ASC) and an expression formulated as nested infinite summations for the SOP are derived. Considering the high signal-to-noise ratio (SNR) regime, asymptotic expressions are also provided. 

Other works explore PLS considering the FTR fading distribution introduced in \cite{ftr}. For example, in \cite{Ouyang_2020}, a scenario is assumed where the legitimate receiver and eavesdropper operate with a single antenna under FTR fading, while the transmitter employs frequency-diverse arrays (FDA). That study provides closed-form approximate expressions for the average secrecy rate (ASR) and SOP. These expressions involve more than two nested infinity summations. 
A similar scenario is analyzed in \cite{Zhao_2019}, but it considers the alternative definition of the SOP described in \cite{Zhou_2011}, which states that when the received SNR at the legitimate receiver surpasses a specific threshold, the source sends signals to this receiver.

In \cite{Cheng_2021}, a multi-carrier FDA model is introduced, considering range, azimuth angle, and elevation angle for PLS transmission over generalized FTR fading channels. The study presents integral-form expressions involving nested infinite summations to calculate lower and upper bounds for the SOP. 
In \cite{Gao_2022}, a multiple-input single-output (MISO) wiretap channel model is examined, featuring a transmitter equipped with $N$ antennas and a receiver and an eavesdropper, each equipped with a single antenna. 
The transmitter selects the transmit antenna with the highest instantaneous SNR. Expressions for ASC and SOP are derived, with the former presented in integral form and the latter expressed as nested infinite summations, where the number of summations is proportional to $N$. 

In~\cite{Mathur_2018}, the secrecy performance of wireless systems over $\alpha$--$\eta$--$\kappa$--$\mu$ fading channels is analyzed for the classic Wyner’s wiretap scenario, where the transmitter, the legitimate receiver, and the eavesdropper are all assumed to be single-antenna nodes. The authors derive closed-form expressions for the ASC and SOP, which are expressed in terms of advanced special functions, including the multivariate Fox H and Meijer G functions. While the results are mathematically rigorous, it is worth noting that these special functions are not built-in features of widely used computational tools such as MATLAB or Wolfram Mathematica.

Recent efforts have explored secrecy performance under correlated fading conditions. In particular, \cite{Mathur_2019} assumes that both the main and wiretap links experience correlated $\alpha$--$\mu$ fading. The authors derive novel exact expressions for the ASC and SOP, taking into account the effect of channel correlation. These expressions, however, are also obtained in terms of the multivariate Fox H and Meijer G functions.

In~\cite{Kong_2019}, the authors analyze PLS over wiretap channels whose fading distribution is modeled with the general Fox H-function. Expressions for secrecy metrics, such as SOP and ASC, are derived in both non-colluding and colluding eavesdropper scenarios, with results expressed in terms of univariate, bivariate, or multivariate Fox H-functions.

All the previously discussed works rely on the classical information-theoretic definition of secrecy, which assumes that the eavesdropper experiences complete decoding failure, i.e., essentially adhering to the classical SOP criterion. 

Under the partial secrecy regime, \cite{Sharma_2023} presents a secrecy performance analysis of a system consisting of a transmitter, a receiver, and an eavesdropper. In this system model, all nodes are single-antenna devices and operate over an FTR fading channel. An exact expression for calculating the GSOP was obtained in terms of an infinite summation. Furthermore, exact expressions for calculating the AFE and AILR were derived and written in terms of four infinite summations. Additionally, asymptotic expressions for these metrics are obtained, considering that the received SNR at the eavesdropper tends to infinity. Unfortunately, the numerical results do not include these asymptotes, preventing validation of their accuracy across all scenarios.

The numerical results reported in \cite{Zhao_2019} and \cite{Sharma_2023} assume independent and identically distributed (i.i.d.) fading for the links between the transmitter and both the legitimate receiver and the eavesdropper. However, this assumption may not hold in practical systems due to differences in propagation environments among the links \cite{Gao_2022}.

The authors in \cite{Sharma_2023} also present an analysis of a multi-branch system in which only the eavesdropper uses an antenna array. Therein, the cumulative distribution function (CDF) of the eavesdropper's SNR is derived. However, a limitation of their approach is that the resulting CDF contains nested infinite summations that scale proportionally with the number of antennas in the eavesdropper’s array.
This complexity hinders theoretical analysis and adds a significant computational burden when evaluating secrecy metrics, such as GSOP, AILR, and AFE.

\subsection{Motivation and Contributions}

Drawing from the literature review and to the best of the authors' knowledge, existing secrecy performance analyses frequently yield intricate mathematical formulations, where the computational burden increases significantly as the number of receiving antennas at the eavesdropper increases. 
Moreover, some existing studies consider scenarios in which only the eavesdropper is equipped with an antenna array, often to model a worst-case secrecy condition. However, in practical deployments, the legitimate receiver is also likely to be equipped with multiple antennas and can benefit from diversity techniques such as MRC. Accounting for this capability is important, as it not only enhances the overall secrecy performance but also leads to a more general and realistic system model, particularly relevant for modern wireless networks, where multi-antenna receivers are becoming increasingly common.



Traditional secrecy metrics such as ASC and SOP follow a binary model of confidentiality, assuming either perfect secrecy or total leakage. However, this assumption often fails under quasi-static fading, where the eavesdropper may still extract partial information due to finite blocklengths, coding imperfections, or favorable channel realizations~\cite{Biao_2016}. To address these limitations, the partial secrecy framework introduces more refined metrics---such as FE, GSOP, AFE, and AILR---which offer a more realistic quantification of the eavesdropper’s uncertainty and the information exposure. Unlike SOP, which only measures the probability of outage, the aforementioned metrics provide deeper insights into how much information is leaked, making them more suitable for the analysis and design of future wireless systems.



While previous studies have analyzed secrecy performance under classical fading distributions such as Rayleigh, Nakagami-$m$, Rician, or $\kappa$--$\mu$ shadowed models, these fading models represent particular cases of more general frameworks. Recent works have adopted generalized fading models whose distributions are expressed via the Fox H-function, a versatile mathematical construct capable of encapsulating a wide range of fading distributions within a single formulation. However, the resulting expressions for secrecy metrics often involve univariate or multivariate Fox H-functions, whose numerical evaluation requires complex contour integration and specialized mathematical tools, making them not trivial to implement using standard mathematical software. In contrast, the MFTR fading model is a recently proposed physical model that offers a balance between modeling versatility and computational practicality. Specifically, it generalizes the $\kappa$--$\mu$ shadowed and FTR models by integrating clustered multipath with jointly fluctuating specular components—features that are particularly relevant in mmWave, sub-THz, and THz propagation. Despite its expressive power and physical grounding, the MFTR model has not yet been explored in the context of PLS, particularly under partial secrecy metrics, which reveals a meaningful gap in the literature.


Finally, although the fading affecting the legitimate and wiretap links is often considered statistically independent due to the spatial separation between the legitimate receiver and the eavesdropper \cite{Sharma_2023}, this assumption does not preclude the possibility that the fading statistics for these links may be either identical or different. In covert eavesdropping scenarios, the eavesdropper may be located near the legitimate receiver and experience similar propagation conditions, leading to identically distributed fading \cite{Mathematics2022}. Conversely, in many practical deployments, environmental asymmetries may cause the two links to exhibit distinct fading characteristics. Hence, it is important to adopt a flexible framework that supports both i.i.d. and independent non-identically distributed (i.n.i.d.) fading, enabling the evaluation of secrecy performance across diverse conditions.

By tackling these critical aspects—the necessity of simplified secrecy expressions in diversity-based systems, the relevance of the partial secrecy regime, the importance of adopting a general fading model, and the need to account for i.n.i.d fading conditions—our work aims to bridge the existing research gap and provide deeper insights into the PLS of future wireless networks.

This work considers a secrecy communication system comprising a transmitting source (\Alice{}), a legitimate receiver (\Bob{}), and an external eavesdropper (\Eve{}). Unlike previous studies that focus on scenarios where devices either use a single antenna or only \Eve{} employs an antenna array, our system model considers a more general case: both \Bob{} and \Eve{} are equipped with antenna arrays and apply MRC to exploit diversity in signals experiencing i.i.d. fading 
across branches at each receiver, but i.n.i.d. fading between the legitimate and wiretap links. The secrecy performance is evaluated in the context of the partial secrecy regime, focusing on systems operating over the generalized MFTR fading model. 
In addition to the points above, one of the objectives of this work is not only to introduce a new comprehensive channel model in secrecy analysis but also to derive mathematically tractable expressions for secrecy performance metrics, including GSOP, AFE, and AILR. A key additional goal is to ensure that these expressions are computationally efficient and maintain low complexity, regardless of the number of diversity branches employed by either \Bob{} and \Eve{}. In this sense, the main contributions of this work can be summarized as follows:
\begin{itemize}
    
    \item Using the FE framework, we derive exact expressions for computing the GSOP, AFE, and AILR. These expressions account for the possibility that the MFTR fading channels affecting the $\Alice{}\rightarrow\Bob{}$ and $\Alice{}\rightarrow\Eve{}$ links are identically or non-identically distributed.
    
    \item Unlike prior studies on simpler fading models, our secrecy performance expressions preserve a consistent computational complexity, irrespective of the number of diversity branches at both \Bob{} and \Eve{}. In addition, these expressions can be easily computed using standard mathematical software such as Matlab or Mathematica.
    
    \item Accurate closed-form approximate expressions are derived for calculating the GSOP, AFE, and AILR in the high SNR regime. Also, an asymptotic expression is obtained for the GSOP, revealing that the diversity gain is determined by the product of the number of diversity branches at \Bob{} and the total number of multipath clusters within the MFTR fading model of the $\Alice{}\rightarrow\Bob{}$ link. The asymptote further reveals that the coding gain depends on the fading distribution parameters of the $\Alice{}\rightarrow \Bob{}$ and $\Alice{}\rightarrow \Eve{}$ links, the number of diversity branches at both \Bob{} and \Eve{}, and the SNR at \Eve{}.
  
    \item Monte Carlo simulations confirm the accuracy of our expressions and show how system parameters impact secrecy performance. Leveraging the flexibility of the MFTR fading model, we analyze system security under various fading channel conditions, including specific cases of this model for both the $\Alice{}\rightarrow\Bob{}$ and $\Alice{}\rightarrow\Eve{}$ links. We provide a graphical analysis of the GSOP, including the CDF of the received SNR at \Bob{} and the PDF of the received SNR at \Eve{}, offering a detailed interpretation of the results. Our findings highlight how PLS is highly dependent on the fading conditions affecting both $\Alice{}\rightarrow\Bob{}$ and $\Alice{}\rightarrow\Eve{}$ links.

\end{itemize}

\subsection{Paper Organization and Notation}

The remainder of this manuscript is organized as follows. In Section \ref{sec:System_Model}, we introduce the system and channel models. Subsequently, Section \ref{sec:Exact_Analysis} conducts the secrecy analysis, yielding exact expressions to assess the GSOP, AFE, and AILR. Furthermore, in Section \ref{sec:Approximate_Analysis}, we provide approximate and asymptotic expressions to compute these metrics under high SNR conditions. Section \ref{sec:Num_Results} encompasses numerical results and extensive discussions of the findings. Finally, Section \ref{sec: Conclusions} closes the work with concluding remarks.

\textit{Notation:} $\mathbb{E} \left[\cdot\right]$ denotes expectation, $\Gamma(\cdot)$ is the gamma function \cite[eq. (6.1.1)]{abramowitz72}, $\Upsilon(\cdot)$ is the lower incomplete gamma function \cite[eq. (6.5.2)]{abramowitz72}, $\mathcal{B}(\cdot,\cdot,\cdot)$ is the incomplete Beta function \cite[eq. (1)]{Osborn_1968}, $\log(\cdot)$ is the natural logarithm, $(\cdot)_{(\cdot)}$ is the Pochhammer symbol \cite[eq. (5.2.3)]{Olver10}, $_1 F_1(\cdot,\cdot,\cdot)$ is the Kummer's confluent hypergeometric function \cite[eq. (13.1.2)]{abramowitz72}, $_2 F_1(\cdot,\cdot;\cdot;\cdot)$ is the Gauss hypergeometric function \cite[eq. (15.1.1)]{abramowitz72}, $\mathbb{Z}$ and $\mathbb{C}$ are the sets of integer and complex numbers, respectively, $\mathcal{U}[x,y]$ is a uniformly distributed random variable (RV) between $x$ and $y$, $\mathcal{CN}(0,x)$ is a zero-mean complex normal RV with variance $x$, and $\mathbbm{i}=\sqrt{-1}$.

\section{System and Channel Models}
\label{sec:System_Model}

Consider the wireless system in Fig. \ref{fig:SystemModel}, where \Alice{} sends uncoded information to \Bob{}, which combines the signals captured from $L_{\Bob{}}$ diversity branches (antennas) by using MRC. An external eavesdropper, \Eve{}, is also present, attempting to intercept and decode the information transmitted by \Alice{} through the capture of signals via $L_{\Eve{}}$ diversity branches.
Likewise, \Eve{} employs MRC to meet its goal. It is assumed that both \Bob{} and \Eve{} have awareness of the channel state information (CSI) associated with the links originating from \Alice{}.

\begin{figure}[t]
    \centerline{\includegraphics[width=\linewidth]
    {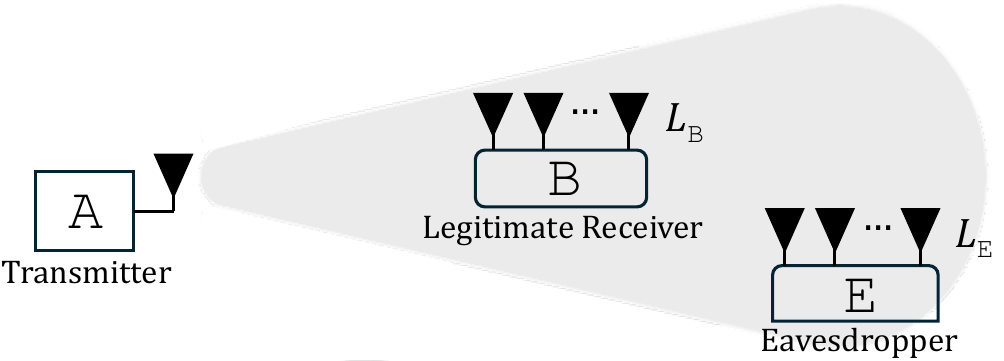}}
    \caption{System model.}
    \label{fig:SystemModel}
    \vspace{-3mm}
\end{figure}

The signals captured by \Bob{} and \Eve{} on their respective diversity branches experience i.i.d. MFTR fading, while exhibiting i.n.i.d. fading between the legitimate and wiretap links. Hence, the instantaneous SNRs at the outputs of the MRC 
receivers at \Bob{} and \Eve{} can be written as \cite[eq. (19)]{Da_Costa_2008}
\begin{align} \label{eq:Total_SNR_Bob}
    \Psi_{\texttt{X}} = \sum_{\ell=1}^{L_{\texttt{X}}} \underbrace{\frac{\overline{\gamma_{\texttt{X}}}\left| \alpha_{\ell,\texttt{X}}\right|^2}{\Omega_{\ell,\texttt{X}}}}_{\gamma_{\ell,\texttt{X}}},
\end{align}
for $\texttt{X}\in\{\Bob{},\Eve{}\}$, where $\overline{\gamma_{\Bob{}}}$ and $\overline{\gamma_{\Eve{}}}$ are the average received SNRs at \Bob{} and \Eve{}, respectively, $\alpha_{\ell,\texttt{X}}$ is the $\ell$th MFTR complex base-band fading amplitude, and \mbox{$\Omega_{\ell,\texttt{X}} = \mathbb{E} [\left|\alpha_{\ell,\texttt{X}}\right|^2]$} is the average channel power gain\footnote{The inclusion of $\Omega_{\ell,\texttt{X}}$ in the denominator of \eqref{eq:Total_SNR_Bob} is solely intended to normalize the fading mean power.}. 

From \cite[eq. (5)]{Vega_2023}, the instantaneous power of the $\ell$th MFTR fading amplitude in \eqref{eq:Total_SNR_Bob} can be written as\footnote{We have simplified the notation by omitting the subscript $\ell$.} 
\begin{align} \label{eq:R2}
    \nonumber \left| \alpha_{\texttt{X}} \right|^2 =& \left|\sqrt{g_{\texttt{X}}}\left(V_{1,\texttt{X}}\exp(\mathbbm{i}\phi_{1,\texttt{X}})
    + V_{2,\texttt{X}}\exp(\mathbbm{i}\phi_{2,\texttt{X}})\right) + \mathcal{G}_{1,\texttt{X}} \right|^2 \\
    &+ \sum_{n=2}^{\mu_{\texttt{X}}} |\sqrt{g_{\texttt{X}}}\, U_{n,\texttt{X}}\exp(\mathbbm{i}\psi_{n,\texttt{X}}) + \mathcal{G}_{n,\texttt{X}}|^2,
\end{align}
where $g_{\texttt{X}}$ is a unitary mean Gamma RV with PDF given by
\begin{align} \label{eq:PDF_zeta}
    f_{g_{\texttt{X}}}(z) = \frac{m_{\texttt{X}}^{m_{\texttt{X}}} z^{m_{\texttt{X}}-1}}{\Gamma{(m_{\texttt{X}})}}\exp(-m_{\texttt{X}} z),
\end{align}
where $m_{\texttt{X}}>0$ denotes the shadowing severity index of the specular wave components. In addition, the dominant specular components of the first arriving cluster have constant amplitudes $V_{1,\texttt{X}}$ and $V_{2,\texttt{X}}$, and uniformly distributed phases $\phi_{1,\texttt{X}}$ and $\phi_{2,\texttt{X}}$, respectively, such that $\phi_{y,\texttt{X}} \sim \mathcal{U}[0,2\pi)$, for $y\in\{1,2\}$. Moreover, $U_{n,\texttt{X}}\exp(\mathbbm{i}\psi_{n,\texttt{X}})$ represents the specular component of the $n$th cluster, where $U_{n,\texttt{X}}$ is a constant amplitude and $\psi_{n,\texttt{X}} \sim \mathcal{U}[0,2\pi)$. Finally, $\mathcal{G}_{n,\texttt{X}}$ is a circularly-symmetric RV such that $\mathcal{G}_{n,\texttt{X}}\sim \mathcal{CN}(0,2\sigma_{\texttt{X}}^2)$ for $n=1,2,...,\mu$, where $\mu$ is the number of multipath clusters.

The PDF of $\gamma_{\ell,\texttt{X}}$, defined in \eqref{eq:Total_SNR_Bob}, is given by \cite[eq. (20)]{Vega_2023}
\begin{align}
    \label{eq: marginal PDF}
    \nonumber f_{\gamma_{\ell,\texttt{X}}} (z) = &\sum_{i=0}^{\infty} \omega_{i,\texttt{X}}\frac{ \, z^{\mu_{\texttt{X}} + i -1} }{ \Gamma (\mu_{\texttt{X}} +i)}\left( \frac{\overline{\gamma_{\texttt{X}}}}{\mu_{\texttt{X}} \left(K_{\texttt{X}}+1 \right)} \right)^{-(\mu_{\texttt{X}} + i)} \\
    &\hspace{0.6cm}\times \exp \left(- \frac{  \mu_{\texttt{X}} \ z \left( K_{\texttt{X}}+1 \right)}{\overline{\gamma_{\texttt{X}}}} \right),
\end{align}
for $\texttt{X}\in\{\Bob{},\Eve{}\}$, where 
\begin{align}
    \label{eq: omega_i} 
    \nonumber \omega_{i,{\texttt{X}}} = & \frac{m_{{\texttt{X}}}^{m_{{\texttt{X}}}} \,\Gamma(m_{\texttt{X}}+i)  \left(\mu_{{\texttt{X}}} K_{{\texttt{X}}} (1- \Delta_{{\texttt{X}}}) \right)^{i}}{\sqrt{\pi} \, \Gamma (m_{{\texttt{X}}} ) \Gamma (i+1) \left(m_{{\texttt{X}}}+ \mu_{{\texttt{X}}} K_{{\texttt{X}}} (1- \Delta_{_{\texttt{X}}}) \right)^{m_{_{\texttt{X}}}+i}} \\
    & \times \sum^{i}_{q=0} \binom{i}{q} \frac{\Gamma(q+\frac{1}{2})}{\Gamma(q+1)} \left( \frac{2 \Delta_{{\texttt{X}}}}{1- \Delta_{{\texttt{X}}}} \right)^q \\
    \nonumber & \times \hspace{-0.1cm}\, _2F_1 \left( m_{{\texttt{X}}}+i,q+ \frac{1}{2}; q+1; -\frac{2 \mu_{{\texttt{X}}} K_{{\texttt{X}}} \Delta_{{\texttt{X}}}}{m_{{\texttt{X}}} + \mu_{{\texttt{X}}} K_{{\texttt{X}}} (1 - \Delta_{{\texttt{X}}})}\right).
\end{align}
In addition, in \eqref{eq: marginal PDF}, we have that
\begin{align} \label{eq:K}
    K_{\texttt{X}} = \frac{V_{1,\texttt{X}}^2 + V_{2,\texttt{X}}^2 + \sum_{n=2}^{\mu_{\texttt{X}}}U_{n,\texttt{X}}^{2}}{2\sigma^2 \mu_{\texttt{X}}},
\end{align}
is the ratio of the average power of the specular components to the power of the remaining scattered components, and
\begin{align} \label{eq:Delta}
    \Delta_{\texttt{X}} = \frac{2V_{1,\texttt{X}}V_{2,\texttt{X}}}{V_{1,\texttt{X}}^2 + V_{2,\texttt{X}}^2 + \sum_{n=2}^{\mu_{\texttt{X}}}U_{n,\texttt{X}}^{2}}, \quad 0\leq \Delta_{\texttt{X}} \leq 1,
\end{align}
represents the degree of similarity among the average received power of the dominant components within cluster 1.

The PDF of the instantaneous SNR at the output of an MRC receiver over MFTR fading is given by \cite{Garzon_2024}
\begin{align}
    \label{eq: sum PDF}
    f_{\Psi_{\textnormal{\textnormal{\texttt{X}}}}} (z) = \sum_{i=0}^{\infty} \varphi_{i,\textnormal{\texttt{X}}} \frac{z^{\nu_{i,\textnormal{\texttt{X}}}-1} \rho_{\textnormal{\texttt{X}}}^{-\nu_{i,\textnormal{\texttt{X}}}}}{\Gamma (\nu_{i,\textnormal{\texttt{X}}})} \exp \left(- \frac{z}{\rho_{\textnormal{\texttt{X}}}} \right),
\end{align}
where
\begin{equation}
    \label{eq: Coefficients}
    \varphi_{i,\textnormal{\texttt{X}}} = 
    \begin{cases}
    \omega_{0,\texttt{X}}^{L_{\textnormal{\texttt{X}}}} & i = 0\\
    \frac{1}{i \omega_{0,\texttt{X}}} \sum _{\ell=1}^i \varphi _{i-\ell,\textnormal{\texttt{X}}} \, \omega_{\ell,\textnormal{\texttt{X}}}  (L_{\textnormal{\texttt{X}}} \ell+\ell-i) & i \geq 1,
    \end{cases}
\end{equation}
\begin{align}
    \label{eq:rho}
        \rho_{\textnormal{\texttt{X}}}=  \frac{\overline{\gamma_{\textnormal{\texttt{X}}}}}{\mu_{\textnormal{\texttt{X}}} \left(K_{\textnormal{\texttt{X}}}+1 \right)},
\end{align}
and
\begin{align}
    \label{eq:nu}
        \nu_{i,\textnormal{\texttt{X}}}=i+ \mu_{\textnormal{\texttt{X}}} L_{\textnormal{\texttt{X}}}.
\end{align} 

\begin{proposition} \label{prop:CDF_SNR}
The CDF of the instantaneous SNR at the output of an MRC receiver over MFTR fading is given by
\begin{align}
    \label{eq: sum CDF}
    F_{\Psi_{\textbf{\texttt{X}}}} (z) = \sum_{i=0}^{\infty} \frac{\varphi_{i,\textnormal{\texttt{X}}} }{\Gamma (\nu_{i,\textnormal{\texttt{X}}})}\Upsilon \left(\nu_{i,\textnormal{\texttt{X}}},\frac{z}{\rho_{\textnormal{\texttt{X}}}} \right).
\end{align}
\end{proposition}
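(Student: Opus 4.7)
The plan is to obtain the CDF by direct termwise integration of the PDF in \eqref{eq: sum PDF}, relying on the fact that each summand is, up to a constant factor, the PDF of a Gamma random variable, whose integral yields a lower incomplete gamma function.

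First, I would start from the definition $F_{\Psi_{\texttt{X}}}(z) = \int_{0}^{z} f_{\Psi_{\texttt{X}}}(t)\, dt$ and substitute \eqref{eq: sum PDF} into the integrand. The next step is to interchange the summation over $i$ with the integration over $t$. This interchange is justified because, for fixed $z \geq 0$ and $\rho_{\texttt{X}} > 0$, the series in \eqref{eq: sum PDF} represents a valid PDF (as established in \cite{Garzon_2024}), so its partial sums are dominated by a non-negative integrable envelope on $[0,z]$, allowing monotone/dominated convergence to apply.

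Once the sum is pulled outside, I would focus on the inner integral
\begin{align*}
I_i = \int_{0}^{z} \frac{t^{\nu_{i,\textnormal{\texttt{X}}}-1}\, \rho_{\textnormal{\texttt{X}}}^{-\nu_{i,\textnormal{\texttt{X}}}}}{\Gamma(\nu_{i,\textnormal{\texttt{X}}})} \exp\!\left(-\frac{t}{\rho_{\textnormal{\texttt{X}}}}\right) dt.
\end{align*}
Applying the change of variable $u = t/\rho_{\textnormal{\texttt{X}}}$ (so that $du = dt/\rho_{\textnormal{\texttt{X}}}$ and the upper limit becomes $z/\rho_{\textnormal{\texttt{X}}}$), the integral collapses to $I_i = \frac{1}{\Gamma(\nu_{i,\textnormal{\texttt{X}}})} \int_{0}^{z/\rho_{\textnormal{\texttt{X}}}} u^{\nu_{i,\textnormal{\texttt{X}}}-1} \exp(-u)\, du$, which is precisely $\Upsilon(\nu_{i,\textnormal{\texttt{X}}},\, z/\rho_{\textnormal{\texttt{X}}})/\Gamma(\nu_{i,\textnormal{\texttt{X}}})$ by \cite[eq.~(6.5.2)]{abramowitz72}. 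Reinserting the coefficient $\varphi_{i,\textnormal{\texttt{X}}}$ and summing over $i$ yields exactly \eqref{eq: sum CDF}.

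The only genuinely delicate point is the justification of the term-by-term integration, since the series expansion of the MFTR PDF involves an infinite sum with coefficients $\varphi_{i,\textnormal{\texttt{X}}}$ that can alternate in sign or grow in magnitude before the exponential factor tames them. I would handle this by invoking the fact that $f_{\Psi_{\textnormal{\texttt{X}}}}$ is a bona fide PDF with absolutely convergent series representation on compact subsets of $(0,\infty)$ (as shown in the derivation in \cite{Garzon_2024}), so Fubini--Tonelli, or equivalently dominated convergence applied to the partial sums, legitimizes the interchange. Everything else reduces to the elementary Gamma-to-incomplete-Gamma computation sketched above.
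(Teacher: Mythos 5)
Your proposal is correct and follows essentially the same route as the paper, which simply integrates the series representation of the PDF in \eqref{eq: sum PDF} term by term from $0$ to $z$ and recognizes each resulting integral as a lower incomplete gamma function. The additional care you take in justifying the interchange of summation and integration is sound but goes beyond what the paper records.
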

\begin{proof}
The CDF of $\Psi_{\texttt{X}}$ can be derived by integrating \eqref{eq: sum PDF} from zero to $z$ (i.e., $\int_{0}^{z} f_{\Psi_{\textnormal{\texttt{X}}}} (y) \text{d}y$), which yields \eqref{eq: sum CDF}.
\end{proof}

\section{Exact Secrecy Performance Analysis}
\label{sec:Exact_Analysis}

This section derives exact expressions for the GSOP, AFE, and AILR. 

\begin{definition} \label{def:FE}
    Fractional equivocation is a secrecy performance metric that quantifies the level of uncertainty that an eavesdropper has about the transmitted message. It is defined as the ratio between the eavesdropper’s equivocation rate and the transmission rate. For a given fading realization, the fractional equivocation can be written as \cite{Biao_2016}
    \begin{equation}
        \label{eq:Frac_Eq}
        \Lambda = \begin{cases} 
                    1, & C_{\Bob{}} - C_{\Eve{}} \geq R_{s} \\
                    \frac{1}{R_{s}}(C_{\Bob{}} - C_{\Eve{}}), & 0< C_{\Bob{}} - C_{\Eve{}}< R_{s}\\
                    0, &C_{\Bob{}} - C_{\Eve{}} \leq 0,
                  \end{cases}
    \end{equation}
    where $C_{\Bob{}}$ and $C_{\Eve{}}$ are the instantaneous channel capacities for the \Alice{}$\rightarrow$\Bob{} and \Alice{}$\rightarrow$\Eve{} links, respectively, and $R_{s}$ is defined as the secrecy rate. 
\end{definition}

    A value of FE close (or equal) to 1 indicates that the eavesdropper gains little (or no) information about the transmitted message, so that high (or perfect) secrecy is attained.
    In contrast, a value close (or equal) to 0 implies that most (or all) of the message has been compromised. 

We have that $C_{\Bob{}} = \log_{2}(1 + \Psi_{\Bob{}})$ and  $C_{\Eve{}} = \log_{2}(1 + \Psi_{\Eve{}})$. Then, it can be readily shown that $C_{\Bob{}}~-~ C_{\Eve{}}=\log_{2}\Phi$,~where
\begin{align}
    \label{eq:Phi}
    \Phi = \frac{1 + \Psi_{\Bob{}}}{1 + \Psi_{\Eve{}}}.
\end{align}

By using \eqref{eq:Phi}, the FE defined in \eqref{eq:Frac_Eq} can be rewritten as
\begin{equation}
    \label{eq:Frac_Eq_2}
    \Lambda = \begin{cases} 
                0, &\Phi \leq 1 \\
                \frac{1}{R_{s}}\log_{2}\Phi, &1 < \Phi < 2^{R_{s}}\\
                1, &\Phi \geq 2^{R_{s}}.
              \end{cases}
\end{equation}

Next, we derive the CDF of $\Phi$, given by \eqref{eq:Phi}, which will streamline subsequent analyses.
\begin{proposition}
\label{prop:CDF_Phi}
The CDF of $\Phi$ can be written as
\begin{align}
    \label{eq:Phi_CDF}
    \nonumber F_{\Phi} (z) =& \sum_{i=0}^{\infty} \varphi_{i,\Bob{}} \sum_{j=0}^{\infty} \varphi_{j,\Eve{}} \left(1 - \frac{\rho_{\Bob{}}^{\nu_{j,\Eve{}}}}{\Gamma(\nu_{j,\Eve{}})}\exp\left(- \frac{z-1}{\rho_{\Bob{}}}\right) \right.\\
    \nonumber &\times \sum_{a=0}^{\nu_{i,\Bob{}}-1} \frac{(z\rho_{\Eve{}})^{a} \Gamma(a + \nu_{j,\Eve{}})}{a! (\rho_{\Bob{}} + z\rho_{\Eve{}})^{a + \nu_{j,\Eve{}}}}\\
    &\times \left.\, _1F_1 \left(-a,1-a-\nu_{j,\Eve{}},\frac{(z-1)(\rho_{\Bob{}} + z\rho_{\Eve{}})}{z\rho_{\Bob{}}\rho_{\Eve{}}}\right)\right),
\end{align}
where $\varphi_{i,\textnormal{\texttt{X}}}$, $\rho_{\textnormal{\texttt{X}}}$, and $\nu_{i,\textnormal{\texttt{X}}}$, for  $\textnormal{\texttt{X}}\in\{\Bob{},\Eve{}\}$ are calculated using \eqref{eq: Coefficients}, \eqref{eq:rho}, and \eqref{eq:nu}, respectively.
\end{proposition}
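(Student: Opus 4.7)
The plan is to express the CDF through conditioning on $\Psi_{\Eve{}}$. For $z\geq 1$, since $zy+z-1\geq 0$ for all $y\geq 0$, the event $\Phi\leq z$ is equivalent to $\Psi_{\Bob{}}\leq z(1+\Psi_{\Eve{}})-1$, so
\begin{align*}
F_\Phi(z)=\int_{0}^{\infty}F_{\Psi_{\Bob{}}}(zy+z-1)\,f_{\Psi_{\Eve{}}}(y)\,dy.
\end{align*}
Because the MFTR parameter $\mu_{\Bob{}}$ counts multipath clusters, $\nu_{i,\Bob{}}=i+\mu_{\Bob{}}L_{\Bob{}}$ is a positive integer, so Proposition~\ref{prop:CDF_SNR} can be rewritten with the finite-series form $\Upsilon(n,x)/\Gamma(n)=1-e^{-x}\sum_{a=0}^{n-1}x^{a}/a!$. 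Substituting this expansion for $F_{\Psi_{\Bob{}}}$, pulling out the constant factor $e^{-(z-1)/\rho_{\Bob{}}}$, and inserting \eqref{eq: sum PDF} for $f_{\Psi_{\Eve{}}}$ reduces the task to evaluating, for each triple $(i,j,a)$ with $0\leq a\leq \nu_{i,\Bob{}}-1$, the integral
\begin{align*}
I_{a}=\int_{0}^{\infty}e^{-p y}(zy+z-1)^{a}\,y^{\nu_{j,\Eve{}}-1}\,dy,\qquad p=\frac{\rho_{\Bob{}}+z\rho_{\Eve{}}}{\rho_{\Bob{}}\rho_{\Eve{}}}.
\end{align*}

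I would compute $I_{a}$ by expanding $(zy+z-1)^{a}$ via the binomial theorem (a terminating sum because $a\in\mathbb{Z}_{\geq 0}$) and integrating term by term using $\int_{0}^{\infty}y^{\nu+k-1}e^{-py}dy=\Gamma(\nu+k)p^{-\nu-k}$. Rewriting the resulting finite sum with Pochhammer symbols identifies it as a terminating ${}_2F_0\bigl(-a,\nu_{j,\Eve{}};;-z/[p(z-1)]\bigr)$. The classical identity $z^{-a}\,{}_2F_0(-a,\nu;;-1/z)=U(-a,1-a-\nu,z)$ converts $I_{a}$ into Tricomi's $U$-function, and the non-positive integer first argument invokes the clean Laguerre-polynomial connection $U(-a,1-a-\nu_{j,\Eve{}},z)=(\nu_{j,\Eve{}})_{a}\,{}_1F_1(-a,1-a-\nu_{j,\Eve{}},z)$, where $(\nu_{j,\Eve{}})_{a}=\Gamma(a+\nu_{j,\Eve{}})/\Gamma(\nu_{j,\Eve{}})$; this Pochhammer factor is precisely what produces the $\Gamma(a+\nu_{j,\Eve{}})$ numerator in \eqref{eq:Phi_CDF}. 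Collecting the prefactors $p^{-\nu_{j,\Eve{}}-a}=[\rho_{\Bob{}}\rho_{\Eve{}}/(\rho_{\Bob{}}+z\rho_{\Eve{}})]^{\nu_{j,\Eve{}}+a}$, $\rho_{\Eve{}}^{-\nu_{j,\Eve{}}}/\Gamma(\nu_{j,\Eve{}})$, $z^{a}$, and $1/(a!\,\rho_{\Bob{}}^{a})$ yields exactly the bracketed summand of \eqref{eq:Phi_CDF}. The constant ``$1$'' left by the $1-e^{-x}\sum\cdots$ expansion is absorbed into the double series via the normalization $\sum_i \varphi_{i,\Bob{}}=\sum_j \varphi_{j,\Eve{}}=1$, which is immediate from integrating \eqref{eq: sum PDF}.

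The main obstacle is the algebraic bookkeeping in the inner integral: the usual Kummer relation between $U$ and ${}_1F_1$ becomes singular when \emph{both} parameters $(-a,\,1-a-\nu_{j,\Eve{}})$ reduce to non-positive integers, so some care is needed to take a route—the Laguerre-polynomial detour above, using $U(-a,b,z)=(-1)^{a}a!\,L_{a}^{(b-1)}(z)$ together with $L_{a}^{(b-1)}(z)=[(b)_{a}/a!]\,{}_1F_1(-a,b,z)$—that yields the particular ${}_1F_1$ structure advertised in \eqref{eq:Phi_CDF} without passing through an indeterminate limit and that keeps the denominator exponents $a+\nu_{j,\Eve{}}$ on $\rho_{\Bob{}}+z\rho_{\Eve{}}$ correctly tracked.
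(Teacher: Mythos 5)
Your proposal is correct and follows essentially the same route as the paper's proof: condition on $\Psi_{\Eve{}}$ to write $F_\Phi(z)=\int_0^\infty F_{\Psi_{\Bob{}}}((y+1)z-1)f_{\Psi_{\Eve{}}}(y)\,\text{d}y$, expand the integer-order lower incomplete gamma function as a finite exponential sum, and reduce the remaining integral to a terminating $_1F_1$. The only difference is that you spell out the evaluation of the inner integral (binomial expansion, terminating $_2F_0$, Tricomi $U$, Laguerre-polynomial connection), a step the paper states without derivation; your chain of identities and the resulting prefactors check out against \eqref{eq:I2} and \eqref{eq:Phi_CDF}.
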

\begin{proof}
The proof can be found in Appendix \ref{app:Proof_CDF_Phi}.
\end{proof}

\subsection{Generalized Secrecy Outage Probability (GSOP)} \label{subsec:GSOP}

\begin{definition}
    GSOP is defined as the probability of FE falling below a given threshold $\theta$, such that $0<\theta\leq 1$~\cite{Biao_2016}, i.e., 
    \begin{align}
        \label{eq:GSOP}
        \text{GSOP} = \text{P}(\Lambda < \theta).
    \end{align}
    The generalized secrecy outage probability can also be defined as the probability that the information leakage ratio (ILR), i.e., the percentage of confidential information disclosed to \Eve{}, $1-\Lambda$, exceeds a specified value, $1- \theta$.
\end{definition}

    FE quantifies the uncertainty that \Eve{} retains about the transmitted message, which is directly related to their decoding error probability. According to Fano’s inequality \cite{cover2006elements}, a higher fractional equivocation corresponds a higher decoding error probability at \Eve{}.
    Consequently, GSOP proves versatile for systems with diverse secrecy requirements, which are assessed through variations in \Eve{}'s capability to decode confidential messages, achieved by selecting different values of $\theta$. 

\begin{proposition} \label{prop:GSOP}
The exact generalized secrecy outage probability for a wireless system over MFTR fading channels where the legitimate user and the eavesdropper employ MRC at the receivers is given by
\begin{align}
    \label{eq:GSOP_2}
    \nonumber &\textnormal{GSOP} = \\
    \nonumber & \sum_{i=0}^{\infty} \varphi_{i,\Bob{}} \sum_{j=0}^{\infty} \varphi_{j,\Eve{}} \left(1 - \frac{\delta_{\theta}^{\nu_{j,\Eve{}}}}{\Gamma(\nu_{j,\Eve{}})}\exp\left(\frac{1}{\rho_{\Bob{}}} - \frac{1}{\delta_{\theta}\rho_{\Eve{}}}\right) \right.\\
    \nonumber &\times \sum_{a=0}^{\nu_{i,\Bob{}}-1} \frac{\Gamma(a + \nu_{j,\Eve{}})}{a!}\left(\delta_{\theta}+1\right)^{-(a + \nu_{j,\Eve{}})}\\
    &\times \left.\, _1F_1 \left(-a,1-a-\nu_{j,\Eve{}},\left(\delta_{\theta}+1\right)\left(\frac{1}{\delta_{\theta}\rho_{\Eve{}}} - \frac{1}{\rho_{\Bob{}}}\right)\right)\right),
\end{align}
where $\varphi_{i,\textnormal{\texttt{X}}}$, $\rho_{\textnormal{\texttt{X}}}$, and $\nu_{i,\textnormal{\texttt{X}}}$, for  $\textnormal{\texttt{X}}\in\{\Bob{},\Eve{}\}$ are calculated using \eqref{eq: Coefficients}, \eqref{eq:rho}, and \eqref{eq:nu}, respectively, and 
\begin{align}\label{eq:delta_theta}
    \delta_{\theta} = \frac{\rho_{\Bob{}}}{2^{\theta R_{s}}\rho_{\Eve{}}}.
\end{align}
\end{proposition}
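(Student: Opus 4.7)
The plan is to reduce the GSOP to a single evaluation of the CDF $F_{\Phi}$ established in Proposition \ref{prop:CDF_Phi}, and then simplify the resulting expression through the auxiliary variable $\delta_{\theta}$ defined in \eqref{eq:delta_theta}.

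First, I would convert the event $\{\Lambda < \theta\}$, with $0<\theta\leq 1$, into an event on $\Phi$ using the piecewise definition in \eqref{eq:Frac_Eq_2}. On $\{\Phi \leq 1\}$ we have $\Lambda = 0 < \theta$, so this is always in the event; on $\{\Phi \geq 2^{R_s}\}$ we have $\Lambda = 1 \geq \theta$, so this is never in the event; on $\{1 < \Phi < 2^{R_s}\}$ the condition $(1/R_{s})\log_{2}\Phi < \theta$ is equivalent to $\Phi < 2^{\theta R_{s}}$. Since $2^{\theta R_{s}} \leq 2^{R_{s}}$, these three cases collapse into $\{\Phi < 2^{\theta R_{s}}\}$, and therefore $\textnormal{GSOP} = F_{\Phi}(2^{\theta R_{s}})$.

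Second, I would substitute $z = 2^{\theta R_{s}}$ into \eqref{eq:Phi_CDF} and rewrite every occurrence of $z$ in terms of $\delta_{\theta}$. The key identities are
\begin{align*}
    \frac{\rho_{\Bob{}}+z\rho_{\Eve{}}}{z\rho_{\Eve{}}} = \delta_{\theta}+1, \qquad \frac{z}{\rho_{\Bob{}}} = \frac{1}{\delta_{\theta}\rho_{\Eve{}}},
\end{align*}
which collapse the algebraic prefactor $(z\rho_{\Eve{}})^{a}\rho_{\Bob{}}^{\nu_{j,\Eve{}}}/(\rho_{\Bob{}}+z\rho_{\Eve{}})^{a+\nu_{j,\Eve{}}}$ into $\delta_{\theta}^{\nu_{j,\Eve{}}}(\delta_{\theta}+1)^{-(a+\nu_{j,\Eve{}})}$, convert the exponential into $\exp(1/\rho_{\Bob{}}-1/(\delta_{\theta}\rho_{\Eve{}}))$, and turn the third argument of the Kummer function into $(\delta_{\theta}+1)(1/(\delta_{\theta}\rho_{\Eve{}})-1/\rho_{\Bob{}})$ after factoring $(z-1)(\rho_{\Bob{}}+z\rho_{\Eve{}})/(z\rho_{\Bob{}}\rho_{\Eve{}})$ as $(\delta_{\theta}+1)/(\delta_{\theta}\rho_{\Eve{}})-(\delta_{\theta}+1)/(z\delta_{\theta}\rho_{\Eve{}})$ and noting that $1/(z\delta_{\theta}\rho_{\Eve{}}) = 1/\rho_{\Bob{}}$. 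Assembling these pieces into the double series of \eqref{eq:Phi_CDF} yields \eqref{eq:GSOP_2}.

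The derivation requires no further series manipulation or integration beyond what was already used to obtain Proposition \ref{prop:CDF_Phi}; the only real difficulty is bookkeeping, namely verifying that each factor of $z$ is absorbed consistently into $\delta_{\theta}$ so that the prefactors, the exponential, and the argument of $_1F_1$ simultaneously reduce to the compact form announced in the statement.
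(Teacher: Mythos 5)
Your proposal is correct and follows essentially the same route as the paper's proof in Appendix~\ref{app:Proof_GSOP}: both reduce the event $\{\Lambda<\theta\}$ via the piecewise form \eqref{eq:Frac_Eq_2} to $\textnormal{GSOP}=F_{\Phi}(2^{\theta R_s})$ and then substitute into \eqref{eq:Phi_CDF}, rewriting in terms of $\delta_{\theta}$. Your algebraic identities for the prefactor, the exponential, and the argument of the Kummer function all check out, so no gap remains.
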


\begin{proof}
The proof can be found in Appendix \ref{app:Proof_GSOP}.
\end{proof}

    In the special instance of $\theta = 1$, the GSOP given by \eqref{eq:GSOP_2} yields the classical SOP metric.

    It is important to indicate that the GSOP expression in \eqref{eq:GSOP_2} can be readily computed using standard mathematical software. Also, infinite summations can be efficiently truncated with a moderate number of terms, yielding highly accurate results. This aspect will be substantiated in Section \ref{sec:Num_Results}.

In \cite{Sharma_2023}, the partial secrecy regime is analyzed over FTR fading, which is a special case of the more general MFTR model used in our study. The authors explored a scenario with multiple antennas at \Eve{}, where it is shown that GSOP expressions consist of nested infinite summations, with the number of summations being proportional to the number of antennas at \Eve{}. In contrast, our GSOP expression, given by \eqref{eq:GSOP_2}, requires only two nested infinite summations, irrespective of the number of receiving antennas. Moreover, our analysis addresses a more intricate scenario by incorporating multiple receiving antennas at both \Bob{} and \Eve{}, as well as accounting for the possibility that the $\Alice{}\rightarrow\Bob{}$ and $\Alice{}\rightarrow\Eve{}$ links are subject to independently but not identically distributed fading channels.

\subsection{Average Fractional Equivocation (AFE)}

\begin{definition}
    AFE is an error-probability-based secrecy metric that serves as an asymptotic lower bound on the eavesdropper’s overall decoding error probability. Thus, a higher AFE indicates a higher average uncertainty at the eavesdropper, reflecting stronger secrecy performance.
    Given that the decoding error probability at the eavesdropper for a specific fading realization is asymptotically lower bounded by $\Lambda$, the average fractional equivocation is defined as the expected value of $\Lambda$ \cite{Biao_2016}, i.e., 
    \begin{align}
        \label{eq:AFE}
        \overline{\Lambda} = \mathbb{E}[\Lambda].
    \end{align}
\end{definition}

\begin{proposition} \label{prop:Proof_Exact_AFE}
An exact expression for calculating the average fractional equivocation for a wireless system over MFTR fading channels where the legitimate user and the eavesdropper employ MRC at the receivers is given by
\begin{align}
    \label{eq:AFE_Exact}
    \overline{\Lambda} = 1 - \frac{1}{\log(2^{R_{s}})} \int_{1}^{2^{R_{s}}} \frac{1}{z}F_{\Phi}(z)\text{d}z,
\end{align}
where $F_{\Phi}(z)$ is given by \eqref{eq:Phi_CDF} in Proposition \ref{prop:CDF_Phi}.
\end{proposition}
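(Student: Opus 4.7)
The plan is to exploit the piecewise definition of $\Lambda$ in \eqref{eq:Frac_Eq_2} together with the law of the unconscious statistician, writing $\overline{\Lambda} = \mathbb{E}[\Lambda]$ as an integral against $f_{\Phi}(z)$ and then converting back to the CDF $F_{\Phi}(z)$ of Proposition \ref{prop:CDF_Phi} via integration by parts. This conversion is desirable because the density $f_{\Phi}(z)$ does not appear in closed form in the excerpt, whereas $F_{\Phi}(z)$ does, so the final answer should be phrased in terms of $F_{\Phi}(z)$.

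First I would split the expectation into three contributions according to the three branches of \eqref{eq:Frac_Eq_2}. The $\Phi \leq 1$ branch contributes zero, the $\Phi \geq 2^{R_{s}}$ branch contributes exactly $1 \cdot \Pr(\Phi \geq 2^{R_{s}}) = 1 - F_{\Phi}(2^{R_{s}})$, and the middle branch contributes $\int_{1}^{2^{R_{s}}}\frac{\log_{2} z}{R_{s}} f_{\Phi}(z)\,\mathrm{d}z$. Rewriting $\log_{2} z / R_{s}$ as $\log z / \log(2^{R_{s}})$ puts the integrand in a form that is ready for integration by parts.

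Next I would apply integration by parts with $u = \log z / \log(2^{R_{s}})$ and $\mathrm{d}v = f_{\Phi}(z)\,\mathrm{d}z$, so that $\mathrm{d}u = \mathrm{d}z/(z\log(2^{R_{s}}))$ and $v = F_{\Phi}(z)$. The boundary term at $z = 2^{R_{s}}$ gives $F_{\Phi}(2^{R_{s}})$, while the boundary term at $z = 1$ vanishes because $\log 1 = 0$. Thus the middle contribution becomes
\begin{equation*}
F_{\Phi}(2^{R_{s}}) - \frac{1}{\log(2^{R_{s}})}\int_{1}^{2^{R_{s}}}\frac{F_{\Phi}(z)}{z}\,\mathrm{d}z.
\end{equation*}
Adding the $\Phi \geq 2^{R_{s}}$ contribution $1 - F_{\Phi}(2^{R_{s}})$ cancels the $F_{\Phi}(2^{R_{s}})$ terms, leaving exactly \eqref{eq:AFE_Exact}.

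There is no serious obstacle here; the only mild point to verify is that the boundary terms of the integration by parts are well defined and that $F_{\Phi}(z)$ is differentiable almost everywhere on $[1, 2^{R_{s}}]$, both of which follow immediately from Propositions \ref{prop:CDF_SNR} and \ref{prop:CDF_Phi}. If a more pedantic derivation is wanted, one could instead use the tail-integral identity $\mathbb{E}[\Lambda] = \int_{0}^{1}\Pr(\Lambda > t)\,\mathrm{d}t$, observe that $\{\Lambda > t\} = \{\Phi > 2^{t R_{s}}\}$ for $0 < t < 1$, and change variables via $z = 2^{t R_{s}}$; this yields the same expression without invoking $f_{\Phi}$ at all, which is arguably cleaner given that $F_{\Phi}$ is the natural object in this setting.
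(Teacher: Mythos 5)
Your derivation is correct and follows essentially the same route as the paper's own proof: decompose $\mathbb{E}[\Lambda]$ over the three branches of \eqref{eq:Frac_Eq_2}, integrate the middle term by parts with $v=F_{\Phi}(z)$, and observe that the boundary term at $2^{R_s}$ cancels against the contribution $1-F_{\Phi}(2^{R_s})$ from the $\Phi\geq 2^{R_s}$ event. The tail-integral alternative you mention at the end is a nice observation but is not needed and is not what the paper does.
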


\begin{proof}
The proof can be found in Appendix \ref{app:Proof_Exact_AFE}.
\end{proof}


\subsection{Average Information Leakage Rate (AILR)}

\begin{definition}
    AILR is a metric that quantifies the rate at which confidential information is leaked to the eavesdropper over time. While different transmission schemes may result in similar SOPs, they can still exhibit considerable differences in the amount of leaked information. A lower AILR value indicates more robust PLS performance, implying reduced information leakage to the eavesdropper. Thus, the AILR serves as a complementary secrecy metric that captures these differences by evaluating the actual information exposure to \Eve{}. For a fixed-rate transmission scheme, the AILR is given by~\cite{Biao_2016}:
    \begin{align}
        \label{eq:AILR}
        R_{L} = \left(1 - \overline{\Lambda}\right)R_{s}.
    \end{align}
\end{definition}

\begin{corollary} \label{cor:AILR}
An exact expression for calculating the average information leakage rate for a wireless system over MFTR fading channels where the legitimate user and the eavesdropper employ MRC at the receivers is given by
\begin{align}
    \label{eq:AILR_Exact}
    R_{L} = \frac{1}{\log(2)} \int_{1}^{2^{R_{s}}} \frac{1}{z}F_{\Phi}(z)\text{d}z,
\end{align}
where $F_{\Phi}(z)$ is given by \eqref{eq:Phi_CDF} in Proposition \ref{prop:CDF_Phi}.
\end{corollary}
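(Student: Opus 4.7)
The plan is to obtain Corollary \ref{cor:AILR} as a direct consequence of Proposition \ref{prop:Proof_Exact_AFE}, since the AILR is defined in \eqref{eq:AILR} purely as an affine function of the AFE. Accordingly, the strategy is a one-step substitution followed by an algebraic simplification of the logarithmic factor; no new integral manipulations or fading-specific calculations are needed.

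First, I would substitute the exact expression for $\overline{\Lambda}$ from \eqref{eq:AFE_Exact} into the definition $R_L = (1-\overline{\Lambda})R_s$ in \eqref{eq:AILR}. The constant term $1$ and the leading $1$ inside $\overline{\Lambda}$ cancel, leaving
\begin{align*}
R_L = \frac{R_s}{\log(2^{R_s})} \int_{1}^{2^{R_s}} \frac{1}{z} F_{\Phi}(z)\,\text{d}z.
\end{align*}

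Next, I would simplify the prefactor using the identity $\log(2^{R_s}) = R_s \log(2)$, which cancels the $R_s$ in the numerator and yields exactly \eqref{eq:AILR_Exact}. The CDF $F_{\Phi}(z)$ carried through the substitution is precisely the one given by \eqref{eq:Phi_CDF} in Proposition \ref{prop:CDF_Phi}, so no further identification is required.

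There is effectively no obstacle in this derivation; the only subtlety worth noting is to make sure the logarithm base convention is handled consistently, namely that $\log(\cdot)$ in \eqref{eq:AFE_Exact} is the natural logarithm (as declared in the Notation subsection), so that $\log(2^{R_s}) / R_s = \log 2$ produces the correct $1/\log(2)$ factor in the final expression rather than an unintended base-$2$ logarithm. Once this is checked, the result follows immediately.
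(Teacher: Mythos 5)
Your proposal is correct and follows exactly the same route as the paper: substituting the exact AFE expression \eqref{eq:AFE_Exact} into the definition \eqref{eq:AILR}, cancelling the constant terms, and simplifying $R_s/\log(2^{R_s})$ to $1/\log(2)$ via the natural-logarithm convention. Your explicit handling of the logarithm base is a nice check, but the argument is otherwise identical to the paper's one-line proof.
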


\begin{proof}
    The exact expression for computing the AILR can be obtained by substituting \eqref{eq:AFE_Exact} given in Proposition~\ref{prop:Proof_Exact_AFE} into \eqref{eq:AILR}, as formalized in Corollary~\ref{cor:AILR}, resulting in \eqref{eq:AILR_Exact}.
\end{proof}


Closed-form evaluation of \eqref{eq:AFE_Exact} and \eqref{eq:AILR_Exact} is feasible under specific assumptions, though these are impractical given the system parameters. Moreover, such an analysis requires incorporating more nested infinite summations.
Thus, we have omitted the evaluation of these expressions in closed form. Instead, we provide accurate approximations for computing the AFE and the AILR in Section \ref{sec:Approximate_Analysis}. 

\section{Approximate and Asymptotic Analysis}
\label{sec:Approximate_Analysis}

In this section, simpler analytical expressions are derived for the GSOP, AFE, and AILR metrics by focusing on the system performance at high SNR.

A high SNR approximation of $\Phi$, defined in \eqref{eq:Phi}, can be obtained considering that $\Psi_{\Bob{}} \gg 1$ and $\Psi_{\Eve{}} \gg 1$. In this case, we can define $\Phi \approx \Phi_{\textnormal{A}} = \Psi_{\Bob{}}/\Psi_{\Eve{}}$. The CDF of $\Phi_{\text{{A}}}$ is presented in the following proposition.

\begin{proposition} \label{prop:Proof_CDF_L}
The CDF of $\Phi_{\textnormal{A}}$ is given by 
\begin{align}
    \label{eq:CDF_Phi_Low}
    \nonumber &F_{\Phi_\textnormal{A}}(z) =\\
    &\sum_{i=0}^{\infty} \varphi_{i,\Bob{}} \sum_{j=0}^{\infty} \varphi_{j,\Eve{}} \left(1 - \frac{\rho_{\Bob{}}^{\nu_{j,\Eve{}}}}{\Gamma(\nu_{j,\Eve{}})} \sum_{a=0}^{\nu_{i,\Bob{}}-1} \frac{(z\rho_{\Eve{}})^{a} \Gamma(a + \nu_{j,\Eve{}})}{a! (\rho_{\Bob{}} + z\rho_{\Eve{}})^{a + \nu_{j,\Eve{}}}} \right),
\end{align}
where $\varphi_{i,\textnormal{\texttt{X}}}$, $\rho_{\textnormal{\texttt{X}}}$, and $\nu_{i,\textnormal{\texttt{X}}}$, for  $\textnormal{\texttt{X}}\in\{\Bob{},\Eve{}\}$ are calculated using \eqref{eq: Coefficients}, \eqref{eq:rho}, and \eqref{eq:nu}, respectively.
\end{proposition}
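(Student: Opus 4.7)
The plan is to obtain $F_{\Phi_{\textnormal{A}}}(z)$ by conditioning on $\Psi_{\Eve{}}=y$, so that
\begin{align*}
F_{\Phi_{\textnormal{A}}}(z) \;=\; \Pr\!\left(\Psi_{\Bob{}} \le z\,\Psi_{\Eve{}}\right) \;=\; \int_{0}^{\infty} F_{\Psi_{\Bob{}}}(z y)\, f_{\Psi_{\Eve{}}}(y)\,\mathrm{d}y,
\end{align*}
and then to substitute the MRC CDF from Proposition~\ref{prop:CDF_SNR} together with the MRC PDF from \eqref{eq: sum PDF}. Since $\mu_{\textnormal{\texttt{X}}}$ and $L_{\textnormal{\texttt{X}}}$ are positive integers, each $\nu_{i,\textnormal{\texttt{X}}}=i+\mu_{\textnormal{\texttt{X}}}L_{\textnormal{\texttt{X}}}$ is a positive integer, so I would rewrite the lower incomplete gamma function using its finite-sum closed form $\Upsilon(n,x)=\Gamma(n)\bigl(1-e^{-x}\sum_{a=0}^{n-1} x^a/a!\bigr)$, which is what produces the finite inner sum over $a$ in \eqref{eq:CDF_Phi_Low}.

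After this substitution the integrand becomes a product of three pieces that are amenable to termwise integration: a power of $y$, a decaying exponential of the form $\exp\!\bigl(-y(1/\rho_{\Eve{}}+z/\rho_{\Bob{}})\bigr)$, and polynomial/constant factors. After exchanging the (iterated) infinite summations with the integral, each inner integral is a standard Gamma integral that evaluates to $\Gamma(a+\nu_{j,\Eve{}})\bigl(1/\rho_{\Eve{}}+z/\rho_{\Bob{}}\bigr)^{-(a+\nu_{j,\Eve{}})}$. Rewriting $\bigl(1/\rho_{\Eve{}}+z/\rho_{\Bob{}}\bigr)^{-(a+\nu_{j,\Eve{}})} = (\rho_{\Bob{}}\rho_{\Eve{}})^{a+\nu_{j,\Eve{}}}(\rho_{\Bob{}}+z\rho_{\Eve{}})^{-(a+\nu_{j,\Eve{}})}$ and collecting the factors produced from the two MRC-series expansions yields exactly the claimed form of $F_{\Phi_{\textnormal{A}}}(z)$ in \eqref{eq:CDF_Phi_Low}. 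The constant "$1$" in the parenthesis of \eqref{eq:CDF_Phi_Low} comes from the $1$ in $\Upsilon(n,x)=\Gamma(n)(1-\cdots)$ after integrating the gamma PDF $f_{\Psi_{\Eve{}}}$, i.e. it is carried inside the $i,j$ summation.

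An equivalent and faster route, which I would also note, is to recognise \eqref{eq:CDF_Phi_Low} as the specialisation of \eqref{eq:Phi_CDF} obtained by removing the "$+1$" shifts in \eqref{eq:Phi}. When those unit shifts are absent, the argument of the confluent hypergeometric function collapses to $0$, and using $_1F_1(-a,1-a-\nu_{j,\Eve{}},0)=1$ together with $\exp(0)=1$ makes the corresponding factors in \eqref{eq:Phi_CDF} equal to unity, thereby reproducing \eqref{eq:CDF_Phi_Low}. This offers an immediate consistency check, since the derivation of Proposition~\ref{prop:CDF_Phi} (Appendix~\ref{app:Proof_CDF_Phi}) is structurally the same computation with $\Psi_{\Bob{}}\to 1+\Psi_{\Bob{}}$ and $\Psi_{\Eve{}}\to 1+\Psi_{\Eve{}}$.

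\textbf{Main obstacle.} The routine computations, the Gamma integral and the binomial-type recombination of the resulting factors, are straightforward. The only delicate point is technical justification for swapping the double infinite summations with the integral. I would handle this by invoking absolute convergence: the truncated partial sums of the MRC series produce bona-fide (non-negative) PDFs/CDFs, the integrand is non-negative, and Tonelli's theorem therefore allows the interchange; convergence of the resulting double series follows from the same arguments that make the MRC series \eqref{eq: sum PDF} a valid density in \cite{Garzon_2024}. Beyond that, the proof is essentially bookkeeping on powers of $\rho_{\Bob{}}$, $\rho_{\Eve{}}$ and $z$.
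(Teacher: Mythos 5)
Your proposal is correct and follows essentially the same route as the paper's proof in Appendix~\ref{app:Proof_CDF_L}: conditioning on $\Psi_{\Eve{}}$ to write $F_{\Phi_{\textnormal{A}}}(z)=\int_{0}^{\infty}F_{\Psi_{\Bob{}}}(zy)f_{\Psi_{\Eve{}}}(y)\,\mathrm{d}y$, expanding $\Upsilon(\nu_{i,\Bob{}},\cdot)$ via its finite-sum form for integer order, and evaluating the resulting Gamma integrals $\mathcal{I}_{1}(j)$ and $\mathcal{I}_{3}(i,j)$. Your added remarks—the Tonelli justification for the interchange and the consistency check that \eqref{eq:CDF_Phi_Low} is \eqref{eq:Phi_CDF} with the unit shifts removed (so the $_1F_1$ and exponential factors collapse to unity)—are sound but not part of the paper's argument.
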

\begin{proof}
The proof is in Appendix \ref{app:Proof_CDF_L}.
\end{proof}

\subsection{Generalized Secrecy Outage Probability}

An approximation for the $\textnormal{GSOP}$ in the high SNR regime, denoted as $\textnormal{GSOP}_{\textnormal{A}}$, is presented in the next proposition. 

\begin{proposition} \label{prop:GSOP_Low}
An approximate expression for calculating the GSOP in a wireless system over MFTR fading channels where the legitimate user and the eavesdropper employ MRC at the receivers is given by
\begin{align}
    \label{eq:GSOP_Low}
    \nonumber \textnormal{GSOP}_{\textnormal{A}} =&
    \sum_{i=0}^{\infty} \varphi_{i,\Bob{}} \sum_{j=0}^{\infty} \varphi_{j,\Eve{}}
    \left(1 - \frac{\delta_{\theta}^{\nu_{j,\Eve{}}}}{\Gamma(\nu_{j,\Eve{}})} \right.\\
    &\left. \times \sum_{a=0}^{\nu_{i,\Bob{}}-1} \frac{\Gamma(a + \nu_{j,\Eve{}})}{a!} \left(\delta_{\theta}+1\right)^{-(a + \nu_{j,\Eve{}})}\right),
\end{align}
where $\delta_{\theta}$ is given by \eqref{eq:delta_theta} in Proposition \ref{prop:GSOP}.

\end{proposition}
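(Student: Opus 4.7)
The plan is to mirror the derivation used for the exact GSOP in Proposition \ref{prop:GSOP}, but starting from the high--SNR surrogate $\Phi_{\textnormal{A}} = \Psi_{\Bob{}}/\Psi_{\Eve{}}$ in place of $\Phi$. Concretely, from the piecewise definition in \eqref{eq:Frac_Eq_2}, the event $\{\Lambda<\theta\}$ (with $0<\theta\le 1$) is equivalent to $\{\Phi < 2^{\theta R_s}\}$, so that $\textnormal{GSOP}=F_{\Phi}(2^{\theta R_s})$. Since for $\Psi_{\Bob{}}\gg 1$ and $\Psi_{\Eve{}}\gg 1$ one has $\Phi\approx \Phi_{\textnormal{A}}$, the high--SNR approximation is obtained as
\begin{align*}
\textnormal{GSOP}_{\textnormal{A}} = F_{\Phi_{\textnormal{A}}}\bigl(2^{\theta R_s}\bigr),
\end{align*}
which I would justify by noting that the unit offsets in the numerator and denominator of \eqref{eq:Phi} are negligible compared with the MRC--combined SNRs under the stated regime; this is precisely the same rationale that produced Proposition \ref{prop:Proof_CDF_L}.

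Next, I would substitute $z=2^{\theta R_s}$ into the expression \eqref{eq:CDF_Phi_Low} for $F_{\Phi_{\textnormal{A}}}$ and use the definition $\delta_{\theta}=\rho_{\Bob{}}/(2^{\theta R_s}\rho_{\Eve{}})$ from \eqref{eq:delta_theta} to collapse the two appearances of $\rho_{\Bob{}}+z\rho_{\Eve{}}$ and $z\rho_{\Eve{}}$. The key identities are $z\rho_{\Eve{}} = \rho_{\Bob{}}/\delta_{\theta}$ and $\rho_{\Bob{}}+z\rho_{\Eve{}} = (\delta_{\theta}+1)\rho_{\Bob{}}/\delta_{\theta}$, so the ratio inside the $a$--sum becomes
\begin{align*}
\frac{(z\rho_{\Eve{}})^{a}}{(\rho_{\Bob{}}+z\rho_{\Eve{}})^{a+\nu_{j,\Eve{}}}} = \frac{\delta_{\theta}^{\nu_{j,\Eve{}}}}{\rho_{\Bob{}}^{\nu_{j,\Eve{}}}\,(\delta_{\theta}+1)^{a+\nu_{j,\Eve{}}}}.
\end{align*}
The prefactor $\rho_{\Bob{}}^{\nu_{j,\Eve{}}}/\Gamma(\nu_{j,\Eve{}})$ in \eqref{eq:CDF_Phi_Low} then cancels the $\rho_{\Bob{}}^{\nu_{j,\Eve{}}}$ produced here, leaving exactly the $\delta_{\theta}^{\nu_{j,\Eve{}}}/\Gamma(\nu_{j,\Eve{}})$ factor appearing in \eqref{eq:GSOP_Low}.

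The expected main obstacle is simply bookkeeping: one must be careful to track which powers of $\rho_{\Bob{}}$ and $\rho_{\Eve{}}$ cancel, and to verify that no exponential or hypergeometric remnant survives from \eqref{eq:CDF_Phi_Low}. The exponential $\exp(-(z-1)/\rho_{\Bob{}})$ and the confluent hypergeometric $\,{}_1F_1(\cdot)$ factor of the exact CDF \eqref{eq:Phi_CDF} are already absent in \eqref{eq:CDF_Phi_Low} by construction of Proposition \ref{prop:Proof_CDF_L}, so only the algebraic reduction described above is required. Collecting terms yields the double series in \eqref{eq:GSOP_Low}, completing the proof.
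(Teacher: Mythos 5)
Your proposal is correct and follows exactly the paper's route: the paper likewise argues that the derivation of Appendix B carries over with $\Phi_{\textnormal{A}}$ in place of $\Phi$, yielding $\textnormal{GSOP}_{\textnormal{A}} = F_{\Phi_{\textnormal{A}}}(2^{\theta R_s})$, and then evaluates \eqref{eq:CDF_Phi_Low} at $z=2^{\theta R_s}$ using $\delta_\theta$. Your explicit algebra (the identities $z\rho_{\Eve{}}=\rho_{\Bob{}}/\delta_\theta$ and $\rho_{\Bob{}}+z\rho_{\Eve{}}=(\delta_\theta+1)\rho_{\Bob{}}/\delta_\theta$) checks out and is in fact more detailed than the paper's one-line justification.
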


\begin{proof}
The expression for $\textnormal{GSOP}_{\textnormal{A}}$ can be derived by employing a procedure analogous to that outlined in Appendix \ref{app:Proof_GSOP}, with the key distinction of considering $\Phi_{\textnormal{A}}$ instead of $\Phi$ in \eqref{eq:Frac_Eq_2}. Consequently, the analysis incorporates the CDF of $\Phi_{\textnormal{A}}$, $F_{\Phi_{\textnormal{A}}}(z)$, as defined in \eqref{eq:CDF_Phi_Low}, leading to \eqref{eq:GSOP_Low}.
\end{proof}

We now obtain an asymptotic GSOP expression assuming that $\overline{\gamma_{\Bob{}}} \rightarrow \infty$, as presented in Proposition \ref{prop:GSOP_Assympt}. 

\begin{proposition} \label{prop:GSOP_Assympt}
An asymptotic expression for the GSOP as $\overline{\gamma_{\Bob{}}}$ tends to infinity is given by
\begin{align}
    \label{eq:GSOP_Assymp}
    \textnormal{GSOP}^{\infty} = \mathcal{G}_{c}\left(\frac{1}{\,\overline{\gamma_{\Bob{}}}\,}\right)^{\mathcal{G}_{d}},
\end{align}
where $\mathcal{G}_{c}$ is the coding gain and it is given by
\begin{align}
    \label{eq:Gc}
    \nonumber \mathcal{G}_{c} =& \frac{\varphi_{0,\Bob{}}\left(\rho_{\Eve{}}\mu_{\Bob{}}(K_{\Bob{}}+1)2^{\theta R_{s}}\right)^{\mu_{\Bob{}}L_{\Bob{}}}}{\Gamma(\mu_{\Bob{}}L_{\Bob{}})\mu_{\Bob{}}L_{\Bob{}}}\sum_{j=0}^{\infty} \varphi_{j,\Eve{}} \left(\nu_{j,\Eve{}}\right)_{\mu_{\Bob{}}L_{\Bob{}}}\\
    &\times  \, _1F_1 \left(\mu_{\Bob{}}L_{\Bob{}},1-\mu_{\Bob{}}L_{\Bob{}}-\nu_{j,\Eve{}},\frac{1-2^{\theta R_{s}}}{\rho_{\Eve{}}}\right),
\end{align}
and $\mathcal{G}_{d}$ is the diversity gain given by
\begin{align}
    \label{eq:Gd}
    \mathcal{G}_{d} = \mu_{\Bob{}}L_{\Bob{}}.
\end{align}
\end{proposition}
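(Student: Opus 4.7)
The plan is to derive the asymptote directly from the integral formulation of the GSOP rather than from Proposition \ref{prop:GSOP_Low}. Specifically, by conditioning on $\Psi_{\Eve{}}$ and recalling that $\textnormal{GSOP} = \Pr(\Phi < 2^{\theta R_{s}})$, I would write
\begin{align*}
\textnormal{GSOP} = \mathbb{E}_{\Psi_{\Eve{}}}\!\left[F_{\Psi_{\Bob{}}}\!\bigl(2^{\theta R_{s}}(1+\Psi_{\Eve{}})-1\bigr)\right].
\end{align*}
This representation is convenient because, as $\overline{\gamma_{\Bob{}}}\to\infty$ (equivalently $\rho_{\Bob{}}\to\infty$), the argument of $F_{\Psi_{\Bob{}}}$ stays bounded for each realization of $\Psi_{\Eve{}}$, so only the small-argument behaviour of $F_{\Psi_{\Bob{}}}$ matters.

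The first key step is to extract the leading-order asymptote of $F_{\Psi_{\Bob{}}}(z)$ for large $\rho_{\Bob{}}$ at fixed $z$. In the series \eqref{eq: sum PDF} the $i$-th term carries the factor $\rho_{\Bob{}}^{-\nu_{i,\Bob{}}} = \rho_{\Bob{}}^{-(\mu_{\Bob{}} L_{\Bob{}} + i)}$, so only $i=0$ survives at leading order; the exponential $\exp(-z/\rho_{\Bob{}})\to 1$. Integrating that dominant PDF term from $0$ to $z$ yields
\begin{align*}
F_{\Psi_{\Bob{}}}(z) \sim \frac{\varphi_{0,\Bob{}}}{\Gamma(\mu_{\Bob{}} L_{\Bob{}}+1)}\left(\frac{z}{\rho_{\Bob{}}}\right)^{\mu_{\Bob{}} L_{\Bob{}}},
\end{align*}
which, upon using $\rho_{\Bob{}} = \overline{\gamma_{\Bob{}}}/[\mu_{\Bob{}}(K_{\Bob{}}+1)]$, exhibits the factor $(1/\overline{\gamma_{\Bob{}}})^{\mu_{\Bob{}} L_{\Bob{}}}$ that determines $\mathcal{G}_{d} = \mu_{\Bob{}} L_{\Bob{}}$ in \eqref{eq:Gd}.

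Next I would substitute this asymptote back into the expectation and reduce the coding-gain calculation to
\begin{align*}
\mathbb{E}_{\Psi_{\Eve{}}}\!\left[\bigl(2^{\theta R_{s}}(1+\Psi_{\Eve{}})-1\bigr)^{\mu_{\Bob{}} L_{\Bob{}}}\right],
\end{align*}
which, by \eqref{eq: sum PDF} applied to \Eve{}, becomes a series over $j$ whose $j$-th term is an integral of the shape $\int_{0}^{\infty}(c-1+cy)^{n}\,y^{\nu_{j,\Eve{}}-1}e^{-y/\rho_{\Eve{}}}\mathrm{d}y$ with $c=2^{\theta R_{s}}$ and $n=\mu_{\Bob{}} L_{\Bob{}}$. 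Since $n$ is the diversity order and $(c-1+cy)^{n}$ is a polynomial in $y$, a binomial expansion followed by the Gamma integral produces a finite sum in which the Pochhammer symbol $(\nu_{j,\Eve{}})_{n}$ appears naturally, and I would then recognise the remaining terminating series as a Kummer confluent hypergeometric function through the identity $\binom{n}{k}=(-1)^{k}(-n)_{k}/k!$ together with the Pochhammer reflection $(a)_{n-k}=(-1)^{k}(a)_{n}/(1-a-n)_{k}$, eventually applying Kummer's transformation $_{1}F_{1}(a,b,z)=e^{z}\,_{1}F_{1}(b-a,b,-z)$ to cast the result in the exact form $_{1}F_{1}\!\bigl(\mu_{\Bob{}} L_{\Bob{}},\,1-\mu_{\Bob{}} L_{\Bob{}}-\nu_{j,\Eve{}},\,(1-2^{\theta R_{s}})/\rho_{\Eve{}}\bigr)$ stated in \eqref{eq:Gc}.

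The main obstacle I anticipate is precisely the last step: converting the binomial sum to the target $_{1}F_{1}$ with the specific parameter triple used in \eqref{eq:Gc}, rather than some equivalent but less compact hypergeometric representation. Care is also needed in justifying that the $i\ge 1$ terms of $F_{\Psi_{\Bob{}}}$ contribute strictly higher-order corrections in $1/\overline{\gamma_{\Bob{}}}$ and that the interchange of limit and expectation is legitimate; both can be handled by dominating the tail of the series by its value at $z = 2^{\theta R_{s}}(1+y)-1$ and invoking the integrability of $\Psi_{\Eve{}}^{n}$ under $f_{\Psi_{\Eve{}}}$, which follows from the Gamma-type factors in \eqref{eq: sum PDF}.
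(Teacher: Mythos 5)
Your proposal follows essentially the same route as the paper's Appendix~\ref{app:GSOP_Assympt}: both reduce the metric to $\mathbb{E}_{\Psi_{\Eve{}}}\!\left[F_{\Psi_{\Bob{}}}\!\left(2^{\theta R_{s}}(1+\Psi_{\Eve{}})-1\right)\right]$, extract the leading $(z/\rho_{\Bob{}})^{\mu_{\Bob{}}L_{\Bob{}}}$ behaviour of $F_{\Psi_{\Bob{}}}$ (the paper via the Maclaurin expansion of the incomplete gamma and retention of the $i=0$, $a=0$ terms, you directly from the $i=0$ term of the series), and then evaluate the resulting moment-type integral over $\Psi_{\Eve{}}$ term by term, which yields the same diversity gain $\mu_{\Bob{}}L_{\Bob{}}$ and the same Pochhammer factor $\left(\nu_{j,\Eve{}}\right)_{\mu_{\Bob{}}L_{\Bob{}}}$. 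The one caveat concerns your final step: the binomial/Pochhammer manipulation lands on a terminating ${}_1F_1\!\left(-\mu_{\Bob{}}L_{\Bob{}},\,1-\mu_{\Bob{}}L_{\Bob{}}-\nu_{j,\Eve{}},\,\cdot\right)$ exactly as the paper's own intermediate result \eqref{eq:sum_CDF_New_Approx} does, and the Kummer transformation you invoke would send the first parameter to $1-\nu_{j,\Eve{}}$ (with an extra exponential prefactor) rather than to the $+\mu_{\Bob{}}L_{\Bob{}}$ printed in \eqref{eq:Gc}, so the residual discrepancy lies in the proposition's bookkeeping between the appendix and \eqref{eq:Gc}, not in your argument.
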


\begin{proof}
The proof can be found in Appendix \ref{app:GSOP_Assympt}.
\end{proof}

The coding gain represents the margin of SNR improvement achieved through MRC while maintaining a constant diversity gain. On the other hand, the diversity gain represents the improvement in system performance resulting from the combination of multiple signal replicas at the receiver. Thus, from \eqref{eq:Gc} and \eqref{eq:Gd}, it is evident that as $\overline{\gamma_{\Bob{}}} \rightarrow \infty$, the coding gain in GSOP depends on the fading distribution parameters of the $\Alice{}\rightarrow \Bob{}$ and $\Alice{}\rightarrow \Eve{}$ links, the number of MRC branches at both $\Bob{}$ and $\Eve{}$, and the SNR at $\Eve{}$. Conversely, the diversity gain is solely determined by the number of MRC branches at $\Bob{}$, $L_{\Bob{}}$, and the total number of multipath clusters within the MFTR fading model at the $\Alice{}\rightarrow \Bob{}$ link, $\mu_{\Bob{}}$.

\subsection{Average Fractional Equivocation}

The following proposition introduces an approximation for the AFE in the high SNR regime.

\begin{proposition} \label{prop:AFE_Approx}
An approximate expression for calculating the average fractional equivocation in a wireless system over MFTR fading channels where the legitimate user and the eavesdropper employ MRC at the receivers is given by
\begin{align}
    \label{eq:AFE_Approx}
    \nonumber\overline{\Lambda}_{\textnormal{A}} =& 1 - \sum_{i=0}^{\infty}\varphi_{i,\Bob{}}\sum_{j=0}^{\infty} \varphi_{j,\Eve{}} \left(\vphantom{\int_{0_{0}}^{R_{s}^{2}}} 1 - \frac{(-1)^{\nu_{j,\Eve{}}}}{\log(2^{R_{s}})\Gamma(\nu_{j,\Eve{}})} \right. \\
    &\nonumber \times  \sum_{a=0}^{\nu_{i,\Bob{}}-1} \frac{\Gamma(a + \nu_{j,\Eve{}})}{a!} \left(\mathcal{B}\left(-\frac{\rho_{\Bob{}}}{\rho_{\Eve{}}},\nu_{j,\Eve{}},1-a-\nu_{j,\Eve{}} \right) \right.\\
    &\left. \left. -\mathcal{B}\left(-\frac{\rho_{\Bob{}}}{2^{R_{s}}\rho_{\Eve{}}},\nu_{j,\Eve{}},1-a-\nu_{j,\Eve{}} \right)   \right) \vphantom{\int_{0_{0}}^{R_{s}^{2}}} \right),
\end{align}
where $\varphi_{i,\textnormal{\texttt{X}}}$, $\rho_{\textnormal{\texttt{X}}}$, and $\nu_{i,\textnormal{\texttt{X}}}$, for  $\textnormal{\texttt{X}}\in\{\Bob{},\Eve{}\}$ are calculated using \eqref{eq: Coefficients}, \eqref{eq:rho}, and \eqref{eq:nu}, respectively.
\end{proposition}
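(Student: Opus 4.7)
The plan is to derive \eqref{eq:AFE_Approx} by starting from the exact AFE formula of Proposition~\ref{prop:Proof_Exact_AFE} and substituting the high-SNR CDF $F_{\Phi_{\textnormal{A}}}(z)$ from Proposition~\ref{prop:Proof_CDF_L} in place of $F_{\Phi}(z)$. This yields
\begin{equation*}
\overline{\Lambda}_{\textnormal{A}} = 1 - \frac{1}{\log(2^{R_{s}})}\int_{1}^{2^{R_{s}}}\frac{F_{\Phi_{\textnormal{A}}}(z)}{z}\,\text{d}z,
\end{equation*}
so the problem reduces to evaluating a single one-dimensional definite integral once the infinite sums over $i,j$ and the finite sum over $a$ inside $F_{\Phi_{\textnormal{A}}}(z)$ are pulled outside.

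First, I would distribute those summations through the integral. The constant ``$1$'' term inside the parenthesis of \eqref{eq:CDF_Phi_Low} contributes $\int_{1}^{2^{R_{s}}}z^{-1}\,\text{d}z = \log(2^{R_{s}})$, which cancels the $1/\log(2^{R_{s}})$ prefactor and thereby generates the outer $1-\sum_{i}\sum_{j}\varphi_{i,\Bob{}}\varphi_{j,\Eve{}}(1-\cdots)$ skeleton of \eqref{eq:AFE_Approx}. All remaining $z$-dependence is collected into
\begin{equation*}
J_{a,j} = \int_{1}^{2^{R_{s}}}\frac{z^{a-1}}{(\rho_{\Bob{}} + z\rho_{\Eve{}})^{a+\nu_{j,\Eve{}}}}\,\text{d}z,
\end{equation*}
which must be computed in closed form.

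The key step is the change of variable $u = -\rho_{\Bob{}}/(z\rho_{\Eve{}})$, under which $\rho_{\Bob{}} + z\rho_{\Eve{}}$ becomes $\rho_{\Bob{}}(u-1)/u$ and the product $z^{a-1}\,\text{d}z$ combines with the other factors so that the integrand collapses to the Beta-kernel form $u^{\nu_{j,\Eve{}}-1}(1-u)^{-(a+\nu_{j,\Eve{}})}$, multiplied by a prefactor that, after consolidating the powers of $-1$ arising from $z^{a-1}$ and from rewriting $(u-1)^{a+\nu_{j,\Eve{}}} = (-1)^{a+\nu_{j,\Eve{}}}(1-u)^{a+\nu_{j,\Eve{}}}$, reduces to $(-1)^{\nu_{j,\Eve{}}}/(\rho_{\Bob{}}^{\nu_{j,\Eve{}}}\rho_{\Eve{}}^{a})$. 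The endpoints $z=1$ and $z=2^{R_{s}}$ map to $u=-\rho_{\Bob{}}/\rho_{\Eve{}}$ and $u=-\rho_{\Bob{}}/(2^{R_{s}}\rho_{\Eve{}})$, respectively, so $J_{a,j}$ equals this prefactor times $\mathcal{B}(-\rho_{\Bob{}}/\rho_{\Eve{}},\nu_{j,\Eve{}},1-a-\nu_{j,\Eve{}}) - \mathcal{B}(-\rho_{\Bob{}}/(2^{R_{s}}\rho_{\Eve{}}),\nu_{j,\Eve{}},1-a-\nu_{j,\Eve{}})$. Plugging this back and grouping reproduces \eqref{eq:AFE_Approx}.

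The main obstacle I anticipate is not analytical but bookkeeping: the substitution introduces several powers of $-1$ (from $z^{a-1}$, from $u-1 = -(1-u)$, and from the reversal of the integration limits) which must be consolidated into the single $(-1)^{\nu_{j,\Eve{}}}$ and the specific order of the two $\mathcal{B}$ terms displayed in \eqref{eq:AFE_Approx}. A related subtlety is that the incomplete Beta function is being evaluated at negative first arguments, so one must appeal to its analytic continuation as given by the definition in~\cite{Osborn_1968} rather than to the elementary real-valued integral on $[0,1]$.
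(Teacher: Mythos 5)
Your proposal is correct and follows essentially the same route as the paper's proof in Appendix~\ref{app:AFE_Approx}: substitute $F_{\Phi_{\textnormal{A}}}(z)$ from \eqref{eq:CDF_Phi_Low} into \eqref{eq:AFE_Exact}, let the constant term integrate to $\log(2^{R_s})$ and cancel the prefactor, and express the remaining integral (the paper's $\mathcal{I}_{5}(i,j)$) via the incomplete Beta function, with the factor $(-\rho_{\Bob{}})^{-\nu_{j,\Eve{}}}=(-1)^{\nu_{j,\Eve{}}}\rho_{\Bob{}}^{-\nu_{j,\Eve{}}}$ absorbing the sign bookkeeping you describe. The only difference is cosmetic: the paper invokes \cite[eq. (8.391)]{Gradshteyn08} directly, whereas you make the underlying change of variable explicit.
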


\begin{proof}
The proof can be found in Appendix \ref{app:AFE_Approx}.
\end{proof}

\subsection{Average Information Leakage Rate}

An approximation for the AILR in the high SNR regime can be obtained from Proposition \ref{prop:AFE_Approx} and \eqref{eq:AILR}, as follows.
\begin{corollary} \label{cor:Approx_AILR}
An approximate expression for calculating the average information leakage rate in a wireless system over MFTR fading channels where the legitimate user and the eavesdropper employ MRC at the receivers is given by
\begin{align}
    \label{eq:AILR_Approx}
    \nonumber R_{L,\textnormal{A}} =&\, R_{s}\sum_{i=0}^{\infty}\varphi_{i,\Bob{}}\sum_{j=0}^{\infty} \varphi_{j,\Eve{}} \left(\vphantom{\int_{0_{0}}^{R_{s}^{2}}} 1 - \frac{(-1)^{\nu_{j,\Eve{}}}}{\log(2^{R_{s}})\Gamma(\nu_{j,\Eve{}})} \right. \\
    &\nonumber \times  \sum_{a=0}^{\nu_{i,\Bob{}}-1} \frac{\Gamma(a + \nu_{j,\Eve{}})}{a!} \left(\mathcal{B}\left(-\frac{\rho_{\Bob{}}}{\rho_{\Eve{}}},\nu_{j,\Eve{}},1-a-\nu_{j,\Eve{}} \right) \right.\\
    &\left. \left. -\mathcal{B}\left(-\frac{\rho_{\Bob{}}}{2^{R_{s}}\rho_{\Eve{}}},\nu_{j,\Eve{}},1-a-\nu_{j,\Eve{}} \right)   \right) \vphantom{\int_{0_{0}}^{R_{s}^{2}}} \right),
\end{align}
where $\varphi_{i,\textnormal{\texttt{X}}}$, $\rho_{\textnormal{\texttt{X}}}$, and $\nu_{i,\textnormal{\texttt{X}}}$, for  $\textnormal{\texttt{X}}\in\{\Bob{},\Eve{}\}$ are calculated using \eqref{eq: Coefficients}, \eqref{eq:rho}, and \eqref{eq:nu}, respectively.
\end{corollary}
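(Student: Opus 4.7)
The plan is to obtain $R_{L,\textnormal{A}}$ by directly substituting the high-SNR approximation $\overline{\Lambda}_{\textnormal{A}}$ from Proposition~\ref{prop:AFE_Approx} into the defining relation \eqref{eq:AILR}, $R_{L}=(1-\overline{\Lambda})R_{s}$. Since the approximation sits entirely on the $\overline{\Lambda}$ side, replacing $\overline{\Lambda}$ with $\overline{\Lambda}_{\textnormal{A}}$ inside \eqref{eq:AILR} yields $R_{L,\textnormal{A}}=(1-\overline{\Lambda}_{\textnormal{A}})R_{s}$, so the derivation reduces to a one-line algebraic substitution followed by an immediate simplification.

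Concretely, I would first note that \eqref{eq:AFE_Approx} has the form $\overline{\Lambda}_{\textnormal{A}}=1-S$, where $S$ is the double series over $i$ and $j$ with the inner finite sum over $a$. Then $1-\overline{\Lambda}_{\textnormal{A}}=S$ collapses to exactly the double series appearing in \eqref{eq:AILR_Approx}, and multiplying through by $R_{s}$ transfers the overall prefactor out front. Because $S$ already contains the factor $1/\log(2^{R_{s}})$ inside, no further manipulation of the incomplete Beta terms is required; the expression is reported verbatim with the same $\varphi_{i,\texttt{X}}$, $\rho_{\texttt{X}}$, and $\nu_{i,\texttt{X}}$ as defined in \eqref{eq: Coefficients}, \eqref{eq:rho}, and \eqref{eq:nu}.

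There is essentially no obstacle here, since Proposition~\ref{prop:AFE_Approx} does all of the analytical heavy lifting (the Beta-function representation of $\int 1/z\, F_{\Phi_{\textnormal{A}}}(z)\,\mathrm{d}z$). The only small care point is notational: I would verify that the sign of the Beta arguments, the ordering of the two Beta terms (upper limit $2^{R_{s}}$ minus lower limit $1$), and the factor $(-1)^{\nu_{j,\Eve{}}}/\log(2^{R_{s}})$ are transcribed consistently from \eqref{eq:AFE_Approx} into the statement of \eqref{eq:AILR_Approx}, so that the sign conventions agree when the $1-\overline{\Lambda}_{\textnormal{A}}$ subtraction is carried out. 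Given this, the corollary follows at once, and the proof can be stated in a single sentence referring back to Proposition~\ref{prop:AFE_Approx} and \eqref{eq:AILR}.
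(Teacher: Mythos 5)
Your proposal is correct and matches the paper's own (implicit) argument exactly: the paper obtains Corollary~\ref{cor:Approx_AILR} by substituting $\overline{\Lambda}_{\textnormal{A}}$ from Proposition~\ref{prop:AFE_Approx} into $R_{L}=(1-\overline{\Lambda})R_{s}$, so that $1-\overline{\Lambda}_{\textnormal{A}}$ collapses to the double series and the factor $R_{s}$ moves out front. Your transcription of the signs, the Beta-function arguments, and the $(-1)^{\nu_{j,\Eve{}}}/\log(2^{R_{s}})$ factor is consistent with \eqref{eq:AFE_Approx}, so nothing further is needed.
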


\section{Computational Complexity and Truncation Error Analysis}
\label{sec:CompTrun_Analysis}

In this section, we present a computational complexity analysis of the proposed closed-form expressions, along with an evaluation of the truncation error introduced when the infinite series involved are truncated to a finite value $T$. The analysis is conducted for the exact and approximate GSOP expressions, which are selected as representative cases since the analytical approximate formulations of the AFE and AILR exhibit similar mathematical structures.

\subsection{Computational Complexity}
From \eqref{eq:GSOP_2}, the GSOP can be evaluated by truncating its infinite summations to a finite upper limit $T$, as follows
\begin{align}
    \label{eq:GSOP_T}
    \nonumber &\textnormal{GSOP} \approx \\
    \nonumber & \sum_{i=0}^{T} \varphi_{i,\Bob{}} \sum_{j=0}^{T} \varphi_{j,\Eve{}} \left(1 - \frac{\delta_{\theta}^{\nu_{j,\Eve{}}}}{\Gamma(\nu_{j,\Eve{}})}\exp\left(\frac{1}{\rho_{\Bob{}}} - \frac{1}{\delta_{\theta}\rho_{\Eve{}}}\right) \right.\\
    \nonumber &\times \sum_{a=0}^{\nu_{i,\Bob{}}-1} \frac{\Gamma(a + \nu_{j,\Eve{}})}{a!}\left(\delta_{\theta}+1\right)^{-(a + \nu_{j,\Eve{}})}\\
    &\times \left.\, _1F_1 \left(-a,1-a-\nu_{j,\Eve{}},\left(\delta_{\theta}+1\right)\left(\frac{1}{\delta_{\theta}\rho_{\Eve{}}} - \frac{1}{\rho_{\Bob{}}}\right)\right)\right).
\end{align}

The overall computational complexity of evaluating \eqref{eq:GSOP_T} grows polynomially with the dominant contribution arising from its three nested summations. The computation begins with the pre-calculation of the recursive coefficients $\varphi_{i,\Bob{}}$ and $\varphi_{j,\Eve{}}$ for $i = 0, \dots, T$. For $i = 0$, $\varphi_{0,_{\texttt{X}}} = \omega_{0,\texttt{X}}^{L_{\texttt{X}}}$, which requires $\mathcal{O}(\log L_{\texttt{X}})$ multiplications via binary exponentiation. For $i \geq 1$, each $\varphi_{i,x}$ involves a summation with $i$ terms, where each term requires one addition, one subtraction, and three multiplications, plus two multiplications and one division outside the summation. The total cost for computing each $\varphi_{i,x}$ is therefore $i$ additions, $i$ subtractions, and $3i+2$ multiplications. Summing over $i = 1, \dots, T$ yields $\mathcal{O}(T^2)$ additions, subtractions, and multiplications for the complete pre-computation.

However, the main cost comes from the triple summation, where the two outer loops (indices $i$ and $j$) each iterate up to the truncation value $T$. Within these, the innermost loop over index $a$ has complexity $\mathcal{O}(i^3)$. This is because its limit, $\nu_{i,B} - 1$, is proportional to $i$, meaning the loop runs approximately $i$ times. The most computationally demanding operation inside this loop is the evaluation of the Kummer confluent hypergeometric function, $_1F_1(-a, \cdot, \cdot)$. Here, the first argument $-a$ is a negative integer, so the series expansion terminates after $a+1$ terms. Computing this finite series requires $\mathcal{O}(a^2)$ operations, and summing over $a = 0$ to $\nu_{i,\Bob{}} - 1$ yields $\mathcal{O}(i^3)$. When aggregated over all $i = 0$ to $T$, the complexity becomes $\mathcal{O}(T^4)$. Since the $j$-loop repeats this process $T$ times, we have that the total computational complexity evaluation of \eqref{eq:GSOP_T} amounts to $\mathcal{O}(T^5)$.

The computation of the GSOP can be simplified further by considering the high-SNR regime ($\overline{\gamma_{\Bob{}}} \rightarrow \infty$). In this case, we can employ the asymptotic expression given in \eqref{eq:GSOP_Assymp}. If the summation is truncated to a finite value $T$, the resulting expression becomes
\begin{align}
    \label{eq:GSOP_Assymp_trunc}
    \nonumber&\textnormal{GSOP}^{\infty} \approx\\ \nonumber&\frac{\varphi_{0,\Bob{}}\left(\rho_{\Eve{}}\mu_{\Bob{}}(K_{\Bob{}}+1)2^{\theta R_{s}}\right)^{\mu_{\Bob{}}L_{\Bob{}}}}{\Gamma(\mu_{\Bob{}}L_{\Bob{}})\mu_{\Bob{}}L_{\Bob{}}}\sum_{j=0}^{T} \varphi_{j,\Eve{}} \left(\nu_{j,\Eve{}}\right)_{\mu_{\Bob{}}L_{\Bob{}}}\\
    &\times  \, _1F_1 \left(\mu_{\Bob{}}L_{\Bob{}},1-\mu_{\Bob{}}L_{\Bob{}}-\nu_{j,\Eve{}},\frac{1-2^{\theta R_{s}}}{\rho_{\Eve{}}}\right)\left(\frac{1}{\,\overline{\gamma_{\Bob{}}}\,}\right)^{\mu_{\Bob{}}L_{\Bob{}}}.
\end{align}
The total computational complexity for calculating \eqref{eq:GSOP_Assymp_trunc} is primarily determined by the single infinite summation over index $j$. This summation is truncated at a value $T$ and has two main sources of complexity. First, the recursive coefficients $\varphi_{j,\Eve{}}$ must be pre-computed for all values up to $T$ using the recursive formula, which has a computational cost of $\mathcal{O}(T^2)$. Second, within the summation loop that runs from $j=0$ to $T$, the most computationally intensive task is evaluating the Kummer's confluent hypergeometric function, $_1F_1(\cdot,\cdot,\cdot)$. Since its first parameter, $\mu_{\Bob{}} L_{\Bob{}}$, is a positive value and thus not a non-positive integer, its series does not terminate naturally and must be truncated, for instance, at a value $T$, leading to a complexity of $\mathcal{O}(T^2)$ per evaluation. Therefore, the total complexity of the summation is the number of iterations times the complexity per iteration, resulting in $\mathcal{O}(T \cdot T^2) = \mathcal{O}(T^3)$. Thus, the overall complexity of evaluating \eqref{eq:GSOP_Assymp_trunc} is the sum of these two components, which simplifies to $\mathcal{O}(T^3)$.

It is worth noting that, unlike other approaches in the literature where secrecy performance metrics involve deeply nested summations, whose complexity can grow exponentially with the number of antennas at \Bob{} or \Eve{}, the proposed formulation maintains a polynomial complexity that is independent of the antenna array size. This makes our expressions more scalable and computationally efficient for multi-antenna scenarios.

\subsection{Truncation Error}

In general, obtaining a closed-form expression for the truncation error of the series in~\eqref{eq:GSOP_T} or \eqref{eq:GSOP_Assymp_trunc} is not feasible due to the intricate structure of the summation terms and their dependence on multiple system parameters. To illustrate this and, for the sake of simplicity, we focus on the approximate GSOP expression in \eqref{eq:GSOP_Assymp_trunc}, where the infinite series is truncated to a finite value $T$. In this case, let $a_j$ denote the $j$-th term of the truncated series, i.e.,
\begin{align}\label{eq:a_j}
\nonumber a_j = &\varphi_{j,\Eve{}} \left(\nu_{j,\Eve{}}\right)_{\mu_{\Bob{}}L_{\Bob{}}}\\
    &\times  \, _1F_1 \left(\mu_{\Bob{}}L_{\Bob{}},1-\mu_{\Bob{}}L_{\Bob{}}-\nu_{j,\Eve{}},\frac{1-2^{\theta R_{s}}}{\rho_{\Eve{}}}\right),
\end{align}
where $\varphi_{j,\Eve{}}$ is defined recursively from the coefficients $\omega_{i,\Eve{}}$ given in~\eqref{eq: omega_i}. These coefficients involve gamma functions, finite combinatorial sums, and Gauss hypergeometric functions ${}_2F_1(\cdot,\cdot;\cdot;\cdot)$ whose parameters depend explicitly on the summation index $j$. As a result, the general term $a_j$ cannot be re-expressed as a standard hypergeometric function in $j$ for which the remainder from $T+1$ to infinity admits a known closed-form representation. Furthermore, the presence of parameters $(m_{\Eve{}}, \mu_{\Eve{}}, K_{\Eve{}}, \Delta_{\Eve{}})$ that may take arbitrary real positive values prevents simplification to special parameter cases where such a closed form could exist.

In principle, a geometric bound for the truncation error could be derived from the structure of $\omega_{i,\Eve{}}$, as it exhibits a dominant geometric factor when the sufficient condition $m_{\Eve{}} > \mu_{\Eve{}} K_{\Eve{}} (1 + \Delta_{\Eve{}})$ is satisfied. However, since this condition is not guaranteed for all parameter configurations of interest, such a bound is not a viable general solution. Instead, Appendix \ref{app:Ratio_Bound} shows that the truncation error can be estimated using 
\begin{align}\label{eq:ratio_bound}
|\mathcal{R}_T| & \le \frac{\varphi_{0,\Bob{}}\left(\rho_{\Eve{}}\mu_{\Bob{}}(K_{\Bob{}}+1)2^{\theta R_{s}}\right)^{\mu_{\Bob{}}L_{\Bob{}}}|a_{T+1}|}{\Gamma(\mu_{\Bob{}}L_{\Bob{}})\mu_{\Bob{}}L_{\Bob{}}(1-r_{T+1})} \left(\frac{1}{\,\overline{\gamma_{\Bob{}}}\,}\right)^{\mu_{\Bob{}}L_{\Bob{}}},
\end{align}
where $a_{j}$ is given in \eqref{eq:a_j} and 
\begin{align}\label{eq:rj}
r_{j} &= \frac{|a_{j+1}|}{|a_{j}|},
\end{align}
which depends only on the last two computed terms of the series and adapts to any set of parameters. 

\section{Numerical Results and Discussions}
\label{sec:Num_Results}

In this section, we present numerical results and discussions regarding the secrecy performance of the analyzed system across diverse representative scenarios. Monte Carlo simulations validate the accuracy of the analytical expressions. 

Theoretical curves in the following figures are computed by truncating the infinite sums in the analytical expressions to 50 terms. Additionally, the Monte Carlo simulations were carried out in \textsc{Matlab} using $8 \times 10^7$ iterations, where in each iteration $10^4$ random samples were generated, thereby corresponding to a total of $8 \times 10^{11}$ realizations. The system parameters considered in each figure were chosen to illustrate representative scenarios rather than to reproduce any specific measurement campaign. These parameters are consistent with the ranges reported in~\cite[Table I]{Vega_2023}, which are known to emulate a variety of fading conditions. In particular, different parameter combinations can reproduce well-known fading distributions such as Rayleigh, Nakagami-$m$, Rician, $\kappa$-$\mu$, and FTR, among others. 
Finally, $R_s$ is set to $1$ bit/s/Hz unless stated otherwise.

\subsection{GSOP}

In the following, the exact theoretical GSOP is determined using \eqref{eq:GSOP_2}, the approximate GSOP is plotted employing \eqref{eq:GSOP_Low}, and the asymptotic GSOP curves are generated using \eqref{eq:GSOP_Assymp}.

Fig. \ref{fig:GSOP_fEsNob_parLb_Le_EsNoe_8dB_Theta05} shows the GSOP versus $\overline{\gamma_{\Bob{}}}$, parameterized by $L_{\Bob{}}$ and $L_{\Eve{}}$, considering $m_{\Bob{}}=4, \mu_{\Bob{}} = 2, \sigma_{\Bob{}}^{2}=0.5, K_{\Bob{}}= 2.5, \Delta_{\Bob{}}=0.8$, $\overline{\gamma_{\Eve{}}} = 8$ dB, $m_{\Eve{}}=3, \mu_{\Eve{}} = 2, \sigma_{\Eve{}}^{2}=0.5, K_{\Eve{}}= 6.5, \Delta_{\Eve{}}=0.9$ and $\theta = 0.5$. In the figure, observe the accuracy of both the exact and approximate theoretical curves in comparison to the simulations for any $\overline{\gamma_{\Bob{}}}$, and notice the accuracy of the asymptotic curves in the high SNR region. These findings validate the assertions made in Subsection \ref{subsec:GSOP}. Note that a diversity gain is achieved as $L_{\Bob{}}$ increases. Thus, the slope of the GSOP curves changes, leading to an improvement in security terms, i.e., a decrease in GSOP for a given $\overline{\gamma_{\Bob{}}}$. On the other hand, notice that as the number of antennas at \Eve{} increases, the GSOP increases. However, this decline in security performance is manifested as a loss in SNR, yet the slope of the GSOP curves remains unchanged. This implies a loss in terms of coding gain. Hence, increasing $L_{\Bob{}}$ ensures a significantly greater enhancement in secrecy performance compared to the secrecy performance loss that occurs when $L_{\Eve{}}$ increases. These results corroborate the findings stated in Section \ref{sec:Approximate_Analysis}.

\begin{figure}[t]
    \centerline{\includegraphics[width=\linewidth]
    {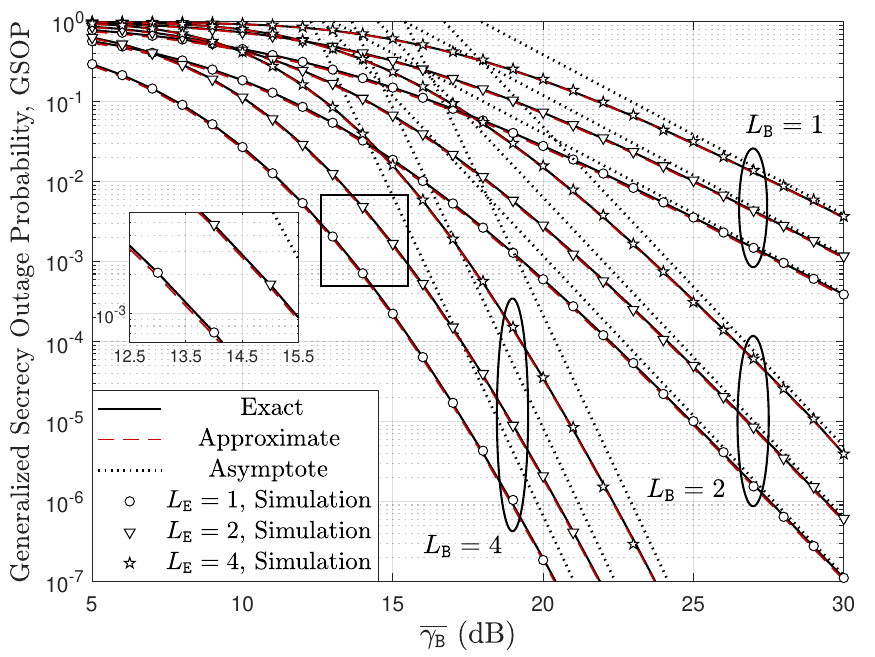}}
    \caption{GSOP as a function of $\overline{\gamma_{\Bob{}}}$, parameterized by $L_{\Bob{}}$ and $L_{\Eve{}}$ considering $m_{\Bob{}}=4, \mu_{\Bob{}} = 2, \sigma_{\Bob{}}^{2}=0.5, K_{\Bob{}}= 2.5, \Delta_{\Bob{}}=0.8$, $\overline{\gamma_{\Eve{}}} = 8$ dB, $m_{\Eve{}}=3, \mu_{\Eve{}} = 2, \sigma_{\Eve{}}^{2}=0.5, K_{\Eve{}}= 6.5$, and $\Delta_{\Eve{}}=0.9$, and $\theta = 0.5$.}
    \label{fig:GSOP_fEsNob_parLb_Le_EsNoe_8dB_Theta05}
    \vspace{-3mm}
\end{figure}

\begin{figure}[t]
    \centerline{\includegraphics[width=\linewidth]
    {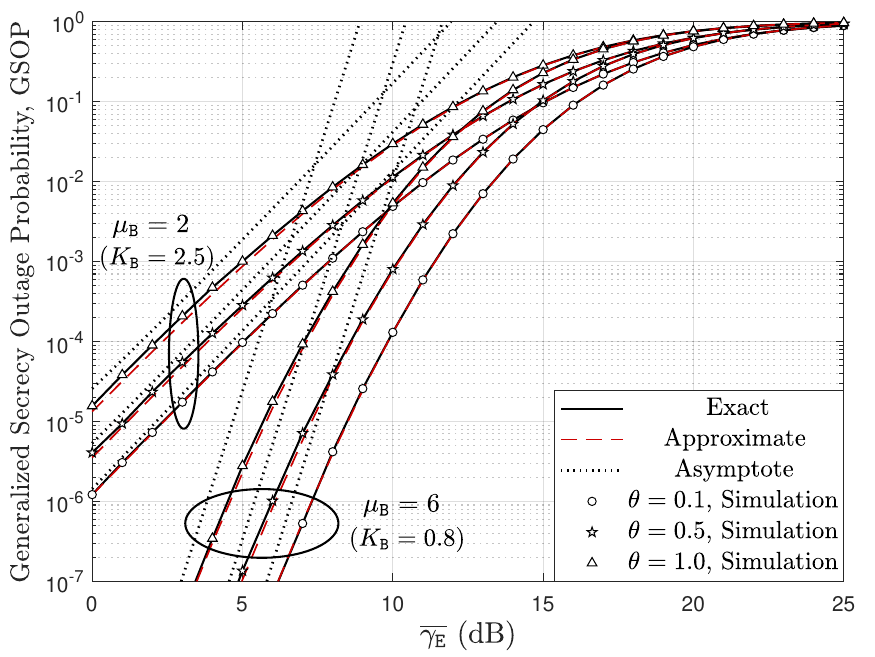}}
    \caption{GSOP as a function of $\overline{\gamma_{\Eve{}}}$, parameterized by $\mu_{\Bob{}}$, $K_{\Bob{}}$, and $\theta$ considering $L_{\Bob{}}=L_{\Eve{}}= 2, \overline{\gamma_{\Bob{}}} = 20$ dB, $m_{\Bob{}}=4, \sigma_{\Bob{}}^{2}=0.5, \Delta_{\Bob{}}=0.8, m_{\Eve{}}=3, \mu_{\Eve{}} = 2, \sigma_{\Eve{}}^{2}=0.5, K_{\Eve{}}= 6.5$, and $\Delta_{\Eve{}}=0.9$.}
    \label{fig:GSOP_fEsNoe_parTheta_mu_LbLe2_EsNob_20dB}
    \vspace{-3mm}
\end{figure}

Fig. \ref{fig:GSOP_fEsNoe_parTheta_mu_LbLe2_EsNob_20dB} shows the GSOP as a function of $\overline{\gamma_{\Eve{}}}$, parameterized by $\mu_{\Bob{}}$, $K_{\Bob{}}$, and $\theta$, assuming $L_{\Bob{}}=L_{\Eve{}}= 2, \overline{\gamma_{\Bob{}}} = 20$ dB, $m_{\Bob{}}=4, \sigma_{\Bob{}}^{2}=0.5, \Delta_{\Bob{}}=0.8, m_{\Eve{}}=3, \mu_{\Eve{}} = 2, \sigma_{\Eve{}}^{2}=0.5,  K_{\Eve{}}= 6.5$, and $\Delta_{\Eve{}}=0.9$. As evident from \eqref{eq:K}, altering $\mu_{\Bob{}}$ while keeping the other parameters constant ($V_{1,\Bob{}},V_{2,\Bob{}}$, and $U_{n,\Bob{}}$) leads to a change in $K_{\Bob{}}$. For this reason, the figure displays varying values of $\mu_{\Bob{}}$ and $K_{\Bob{}}$. In the results, notice that as $\overline{\gamma_{\Eve{}}}$ increases, the GSOP also increases. As $\overline{\gamma_{\Eve{}}}$ rises, it can be interpreted as $\Eve{}$ being closer to $\Alice{}$, thereby having improved reception power and, consequently, increased channel capacity. This represents a deterioration in terms of secrecy performance. On the other hand, when $\mu_{\Bob{}}$ increases, an improvement in secrecy diversity gain is achieved, according to \eqref{eq:Gd}, implying a gain in terms of secrecy performance, or equivalently, a lower GSOP. Hence, a higher number of multipath clusters in the $\Alice{}\rightarrow\Bob{}$ link fading channel implies better propagation conditions for \Bob{}, resulting in improved secrecy performance. 
Notice that, to achieve equivalent performance in terms of GSOP, a difference of approximately 3 dB in $\overline{\gamma_{\Eve{}}}$ is necessary when comparing the scenario where \Eve{}'s ability to decode the confidential message is very low ($\theta=0.1$) with that of the classical SOP approach ($\theta=1$). In scenarios where it is possible to relax the secrecy requirement due to a lower capability of \Eve{}, greater proximity of \Eve{} to $\Alice{}$ can be tolerated, or equivalently, a higher value of $\overline{\gamma_{\Eve{}}}$ can be endured if GSOP is considered as a design criterion.

Fig. \ref{fig:GSOP_fRs_parKb_mb_Lb3Le2_EsNob25_EsNoe15} shows the GSOP as a function of $R_{s}$, parameterized by $m_{\Bob{}}$ and $K_{\Bob{}}$, considering $L_{\Bob{}}=3$, $L_{\Eve{}}= 2, \overline{\gamma_{\Bob{}}} = 25$ dB, $\overline{\gamma_{\Eve{}}} = 15$ dB, $\Delta_{\Bob{}}=0.1, m_{\Eve{}}=2, \mu_{\Eve{}} = 2, K_{\Eve{}}= 4$, $\Delta_{\Eve{}}=0.3$, and $\theta=0.5$. 
Note from Fig.~\ref{fig:GSOP_fRs_parKb_mb_Lb3Le2_EsNob25_EsNoe15} that as the target secrecy rate $R_s$ increases, the GSOP also increases, which is an expected result. This trend can be understood by analyzing the behavior of FE, $\Lambda$, given by \eqref{eq:Frac_Eq}, which quantifies the level of uncertainty that \Eve{} has about the transmitted message. Thus, FE depends on the difference between the instantaneous capacities of the main and wiretap links, $C_B - C_E$, relative to $R_s$.  As $R_s$ increases, the condition $C_B - C_E \geq R_s$ becomes increasingly difficult to satisfy, reducing the probability that the FE reaches its maximum value of 1. Consequently, the probability that FE will fall below a given threshold $\theta$ increases. Since the GSOP is defined as the probability that $\Lambda < \theta$ (refer to \eqref{eq:GSOP}), an increase in $R_s$ naturally leads to an increase in GSOP. 

In addition, in Fig \ref{fig:GSOP_fRs_parKb_mb_Lb3Le2_EsNob25_EsNoe15} notice that as $m_{\Bob{}}$ increases, the GSOP decreases for a given value of $K_{\Bob{}}$. Specifically, as $m_{\Bob{}}$ increases, the severity of the shadowing affecting the specular waves for the link $\Alice{}\rightarrow\Bob{}$ is reduced, which means better propagation conditions for \Bob{}, and consequently, a greater level of security. Interestingly, note from Fig. \ref{fig:GSOP_fRs_parKb_mb_Lb3Le2_EsNob25_EsNoe15} that, for $m_{\Bob{}}=1$, the GSOP decreases as $K_{\Bob{}}$ decreases. Conversely, for $m_{\Bob{}}=20$, the GSOP decreases as $K_{\Bob{}}$ increases. At this point, it is important to indicate that, when $m_{\Bob{}}$ is small, the shadowing that affects the specular components of the signal received at \Bob{} is highly severe. In other words, the variance of $g_{\Bob{}}$ in \eqref{eq:R2} around its mean value of $1$ is large. Additionally, an increase in $K_{\Bob{}}$ implies that the specular components $V_{1,\Bob{}}$ and $V_{2,\Bob{}}$ increase, provided that $\sigma_{\Bob{}}^2$ and $\mu_{\Bob{}}$ remain constant (refer to \eqref{eq:K}). Hence, since $V_{1,\Bob{}}$ and $V_{2,\Bob{}}$ are multiplied by $g_{\Bob{}}$, if these amplitudes increase, there is also an increase in the variance of the received signal around its mean value. This results in even more severe shadowing as $K_{\Bob{}}$ increases, which, in turn, leads to a higher GSOP. On the other hand, when $m_{\Bob{}}$ is large, the variance of $g_{\Bob{}}$ decreases. In this case, an increase in $V_{1,\Bob{}}$ and $V_{2,\Bob{}}$ directly enhances the power of the specular components, which remain unaffected by the randomness of shadowing. Therefore, the $\Alice{}\rightarrow \Bob{}$ link experiences improved propagation conditions, resulting in a lower GSOP.

\begin{figure}[t]
    \centerline{\includegraphics[width=\linewidth]
    {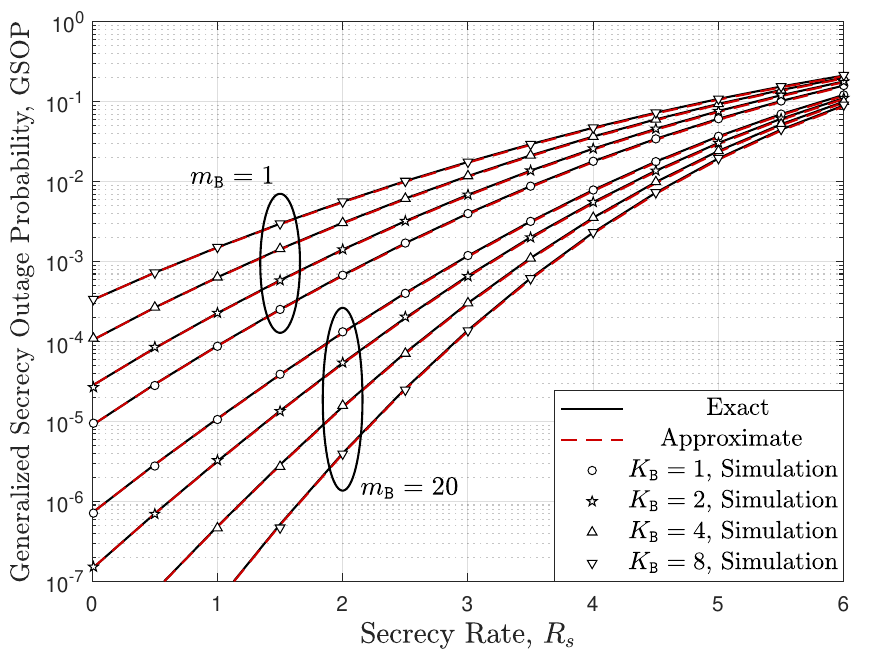}}
    \caption{GSOP as a function of $R_{s}$, parameterized by $m_{\Bob{}}$ and $K_{\Bob{}}$ considering $L_{\Bob{}}=3$, $L_{\Eve{}}= 2, \overline{\gamma_{\Bob{}}} = 25$ dB, $\Delta_{\Bob{}}=0.1, \sigma_{\Bob{}}^{2}=0.5$, $\overline{\gamma_{\Eve{}}} = 15$ dB, $m_{\Eve{}}=2, \mu_{\Eve{}} = 2, \sigma_{\Eve{}}^{2}=0.5, K_{\Eve{}}= 4$, $\Delta_{\Eve{}}=0.3$, and $\theta=0.5$.}
    \label{fig:GSOP_fRs_parKb_mb_Lb3Le2_EsNob25_EsNoe15}
    \vspace{-3mm}
\end{figure}

Fig. \ref{fig:GSOPSim_fEsNob_parLb_Le1_EsNoe10_Theta1_MFTR_Chan} shows the GSOP as a function of $\overline{\gamma_{\Bob{}}}$, parameterized by $L_{\Bob{}}$ and the fading channel experiencing by \Eve{}, considering $m_{\Bob{}}=3, \mu_{\Bob{}} = 2, \sigma_{\Bob{}}^{2}=0.5, K_{\Bob{}}= 2.5, \Delta_{\Bob{}}=0.8$, $L_{\Eve{}}=1$, the fading parameters in Table \ref{tab:Fad_Par_Eve}, and $\theta=1$. Thus, utilizing the versatility of the MFTR distribution as a comprehensive model for simulating various fading channel scenarios, the figure explores a scenario where the $\Alice{}\rightarrow\Bob{}$ link experiences MFTR fading, and the $\Alice{}\rightarrow\Eve{}$ link experiences commonly used fading models, including Rayleigh and Rician, as well as those pertinent to mmWave, sub-THz and THz propagation environments, such as $\kappa$-$\mu$ and FTR, respectively. This figure seeks to illustrate how changes in the fading channel parameters of the $\Alice{}\rightarrow\Eve{}$ link affect secrecy performance.
Firstly, it is noteworthy that, when $\Alice{}\rightarrow\Eve{}$ link is over Rayleigh fading, the highest GSOP is observed for both $L_{\Bob{}}=1$ and $L_{\Bob{}}=3$. Conversely, in the case of Rician fading, there is an intriguing observation: the GSOP decreases compared to the first scenario, and this reduction becomes more pronounced with an increase of the $K_{\Eve{}}$ factor. Moreover, the decrease in GSOP is more substantial for $L_{\Bob{}}=3$ than for $L_{\Bob{}}=1$ (the region enclosed in the square has been magnified for better visibility of the results). Note also that the second-to-last minimal GSOP occurs when the $\Alice{}\rightarrow\Eve{}$ link is subject to $\kappa$-$\mu$ fading. Furthermore, when this link experiences FTR fading, the GSOP approximates that observed under Rayleigh fading.

\begin{figure}[t]
    \centerline{\includegraphics[width=\linewidth]
    {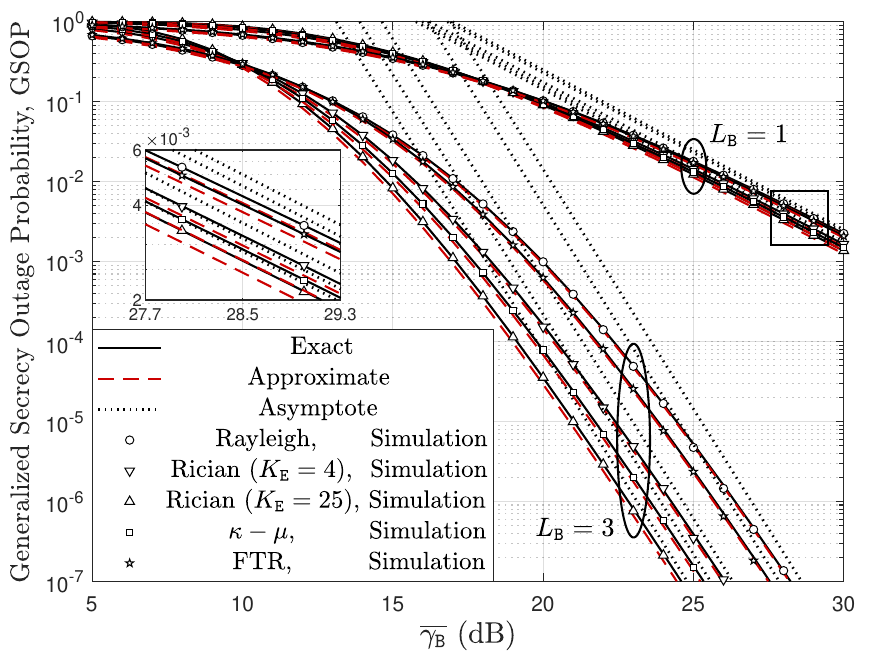}}
    \caption{GSOP as a function of $\overline{\gamma_{\Bob{}}}$, parameterized by $L_{\Bob{}}$ and the fading channel experiencing by \Eve{}, considering $m_{\Bob{}}=3, \mu_{\Bob{}} = 2, \sigma_{\Bob{}}^{2}=0.5, K_{\Bob{}}= 2.5, \Delta_{\Bob{}}=0.8$, $L_{\Eve{}}=1$, the fading parameters in Table \ref{tab:Fad_Par_Eve}, and $\theta=1$.}
    \label{fig:GSOPSim_fEsNob_parLb_Le1_EsNoe10_Theta1_MFTR_Chan}
    \vspace{-1mm}
\end{figure}

\begin{table}[t]
\centering
\scriptsize
\caption{Fading parameters for the $\Alice{}\rightarrow\Eve{}$ link used for obtaining the results shown in Fig. \ref{fig:GSOPSim_fEsNob_parLb_Le1_EsNoe10_Theta1_MFTR_Chan}.}
\begin{tabular}{@{}cccccc@{}}
\toprule
\textbf{Fading Distribution} & $m_{\Eve{}}$ & $\mu_{\Eve{}}$ & $\sigma_{\Eve{}}^2$ & $K_{\Eve{}}$ & $\Delta_{\Eve{}}$ \\
\midrule
Rayleigh & $100$ & $1$ & $0.5$ & $10^{-4}$ & $0$ \\
\midrule
Rician & $100$ & $1$ & $0.5$ & $4$ & $0$ \\
\midrule
Rician & $100$ & $1$ & $0.5$ & $25$ & $0$ \\
\midrule
$\kappa$-$\mu$ & $100$ & $3$ & $0.5$ & $1.6$ & $0$ \\
\midrule
FTR & $5$ & $1$ & $0.5$ & $3.3$ & $0.9$ \\
\bottomrule
\end{tabular}
\label{tab:Fad_Par_Eve}
\end{table}

To comprehend the results described in the previous paragraph, let us examine the classical definition of SOP, which can be derived from GSOP, given by \eqref{eq:GSOP_2}, employing $\theta = 1$ (i.e., perfect secrecy, as assumed in Fig. \ref{fig:GSOPSim_fEsNob_parLb_Le1_EsNoe10_Theta1_MFTR_Chan}). In this case, the SOP is obtained as 
\begin{align}
    \label{eq:SOP}
    \nonumber \textnormal{SOP} &= \text{P}\left(\Phi \leq 2^{R_{s}}\right) \\
    \nonumber&=\text{P}\left(\Psi_{\Bob{}}\right. \leq \underbrace{2^{R_{s}} + 2^{R_{s}}\Psi_{\Eve{}} -1}_{\Psi_{\Eve{},\text{eq}}} \left. \vphantom{2^{R_{s}}}\right)\\
    &=\int_{0}^{\infty} F_{\Psi_{\Bob{}}}\left(2^{R_{s}} + 2^{R_{s}}\Psi_{\Eve{}} -1\right) f_{\Psi_{\Eve{},\text{eq}}} (\Psi_{\Eve{}}) \text{d}\Psi_{\Eve{}},
\end{align}
where we have used the definition of $\Phi$, given by \eqref{eq:Phi}, $F_{\Psi_{\Bob{}}}(z)$ is the CDF of $\Psi_{\Bob{}}$, given by \eqref{eq: sum CDF}, and it can be shown that the PDF of $\Psi_{\Eve{},\text{eq}}$ is obtained as
\begin{align}
    \label{eq:PDF_Psieq}
    f_{\Psi_{\Eve{},\text{eq}}}(z) = 2^{-R_{s}}f_{\Psi_{\Eve{}}}\left((z+1)2^{-R_{s}} - 1\right),
\end{align}
where $f_{\Psi_{\Eve{}}}(z)$ is given by \eqref{eq: sum PDF}. 

\begin{figure*}[t]
  \centering
  \begin{tabular}[c]{cc}
    \begin{subfigure}[c]{0.3\textwidth}
      \includegraphics[width=4.7cm, height=5.2cm, left]{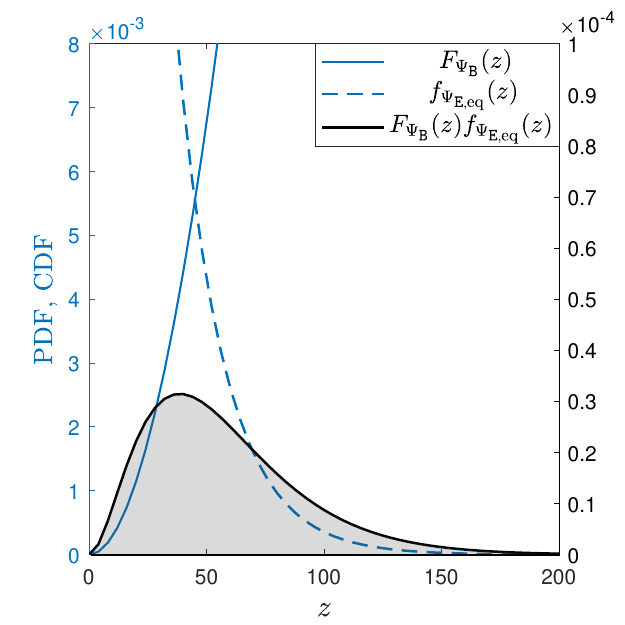}
      \caption{$L_{\Bob{}}=1$, Channel \Eve{}: Rayleigh.}
      \label{fig:MFTR_Ray_Lb1Le1_PDFCDFs}
    \end{subfigure}

    \begin{subfigure}[c]{0.3\textwidth}
      \includegraphics[width=4.7cm, height=5.2cm, left]{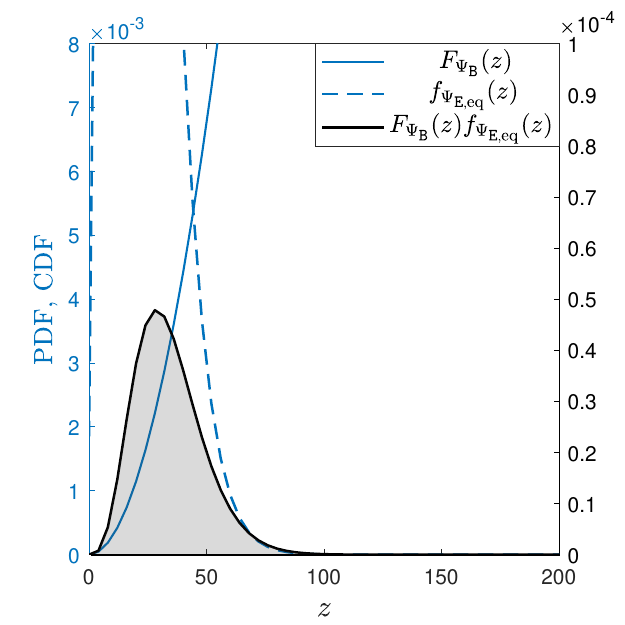}
      \caption{$L_{\Bob{}}=1$, Channel \Eve{}: Rician, $K_{\Eve{}}=4$.}
      \label{fig:MFTR_RicK4_Lb1Le1_PDFCDFs}
    \end{subfigure}
    \vspace{0.2cm}

    \begin{subfigure}[c]{0.3\textwidth}
      \includegraphics[width=4.7cm, height=5.2cm, left]{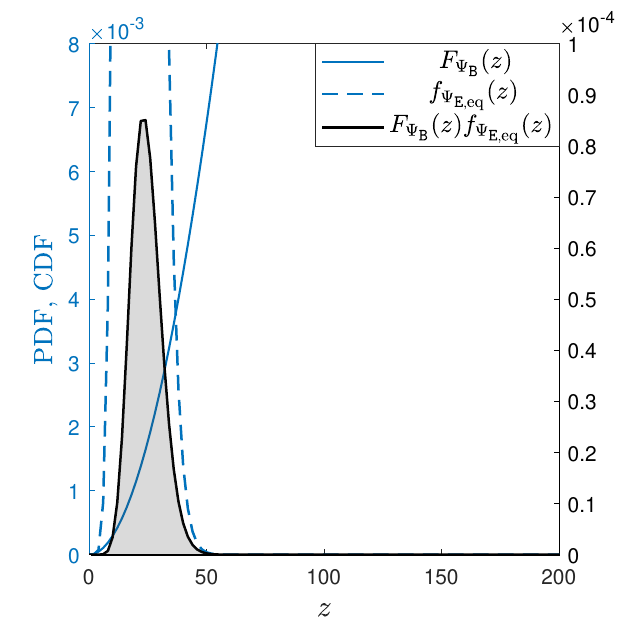}
      \caption{$L_{\Bob{}}=1$, Channel \Eve{}: Rician, $K_{\Eve{}}=25$.}
      \label{fig:MFTR_RicK25_Lb1Le1_PDFCDFs}
    \end{subfigure}
        \\
    \begin{subfigure}[c]{0.3\textwidth}
      \includegraphics[width=4.7cm, height=5.2cm, left]{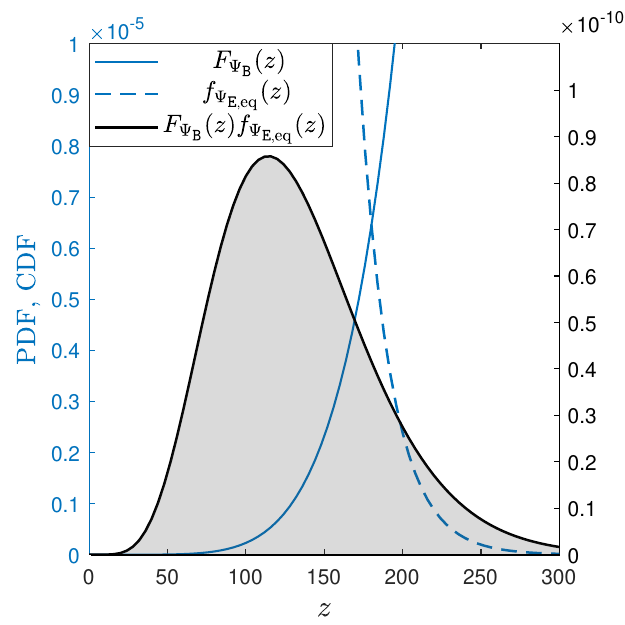}
      \caption{$L_{\Bob{}}=3$, Channel \Eve{}: Rayleigh.}
      \label{fig:MFTR_Ray_Lb3Le1_PDFCDFs}
    \end{subfigure}

    \begin{subfigure}[c]{0.3\textwidth}
      \includegraphics[width=4.7cm, height=5.2cm, left]{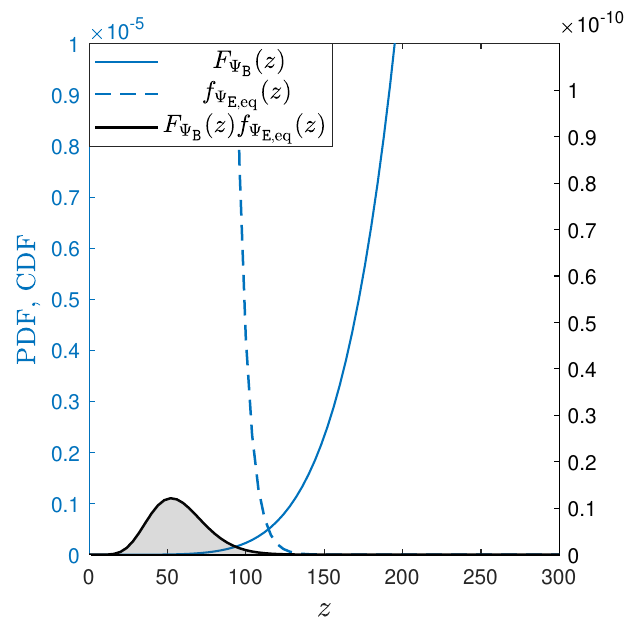}
      \caption{$L_{\Bob{}}=3$, Channel \Eve{}: Rician, $K_{\Eve{}}=4$.}
      \label{fig:MFTR_RicK4_Lb3Le1_PDFCDFs}
    \end{subfigure}
    \vspace{0.2cm}

    \begin{subfigure}[c]{0.3\textwidth}
      \includegraphics[width=4.7cm, height=5.2cm, left]{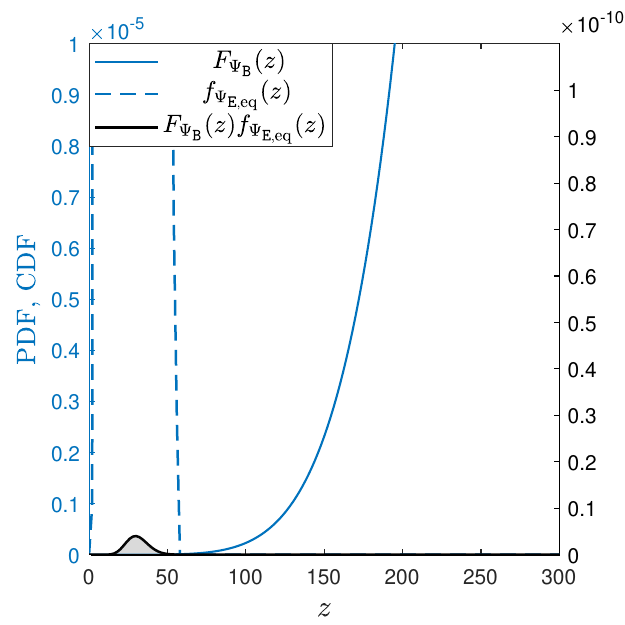}
      \caption{$L_{\Bob{}}=3$, Channel \Eve{}: Rician, $K_{\Eve{}}=25$.}
      \label{fig:MFTR_RicK25_Lb3Le1_PDFCDFs}
    \end{subfigure}
    
  \end{tabular}
  \vspace{-0.3cm}
  \caption{$F_{\Bob{}}(z)$ and $f_{\Eve{}\text{eq}}(z)$ and their product accordingly to \eqref{eq:SOP}, parameterized by $L_{\Bob{}}$  and the fading channel experiencing by \Eve{}, considering $m_{\Bob{}}=3, \mu_{\Bob{}} = 2, \sigma_{\Bob{}}^{2}=0.5, K_{\Bob{}}= 2.5, \Delta_{\Bob{}}=0.8$ in $F_{\Psi_{\Bob{}}}(z)$, $L_{\Eve{}}=1$, and the fading parameters in Table \ref{tab:Fad_Par_Eve} in $f_{\Psi_{\Eve{},\text{eq}}}(z)$, and $\theta=1$. The left vertical axis is associated with $F_{\Bob{}}(z)$ and $f_{\Eve{}\text{eq}}(z)$, while the right vertical axis is associated with their product.}
  \label{fig:MFTR_EveChannel_PDFCDFs}
\end{figure*}

With \eqref{eq:SOP} and \eqref{eq:PDF_Psieq}, we can notice that the SOP is calculated as the area under the curve of the product of the CDF of $\Psi_{\Bob{}}$ and the PDF of $\Psi_{\Eve{},\text{eq}}$. These functions are displayed in Fig.~\ref{fig:MFTR_EveChannel_PDFCDFs} for different scenarios presented in Fig. \ref{fig:GSOPSim_fEsNob_parLb_Le1_EsNoe10_Theta1_MFTR_Chan}, along with the resulting product function $F_{\Psi_{\Bob{}}}(z) f_{\Psi_{\Eve{},\text{eq}}}(z)$. Additionally, the area under this curve has been shaded as it represents the SOP value. For better understanding and without loss of generality, we have selected scenarios where the fading affecting the $\Alice{}\rightarrow\Eve{}$ link is Rayleigh, Rician with $K_{\Eve{}} = 4$, and Rician with $K_{\Eve{}} = 25$. Additionally, the results are shown for $L_{\Bob{}}=1$ and $L_{\Bob{}}=3$. When comparing Figs. \ref{fig:MFTR_Ray_Lb1Le1_PDFCDFs}, \ref{fig:MFTR_RicK4_Lb1Le1_PDFCDFs}, and \ref{fig:MFTR_RicK25_Lb1Le1_PDFCDFs}, note that $F_{\Psi_{\Bob{}}}(z)$ curve does not change, as in all these cases, $L_{\Bob{}}=1$ is used; on the other hand, the $f_{\Psi_{\Eve{},\text{eq}}}(z)$ curve changes from figure to figure because the kind of fading affecting the $\Alice{}\rightarrow\Eve{}$ link is modified. When the fading distribution changes from Rayleigh to Rician, the tail of the product function decreases considerably, and the maximum value of its highest peak increases slightly. As a result, the area under this curve tends to decrease when the channel is Rician, and this reduction is greater as $K_{\Eve{}}$ increases. Nevertheless, in this specific instance with $L_{\Bob{}}=1$, notice that the numerical value of the shaded area remains 
similar across all three scenarios. Consequently, the GSOP shows a comparable trend in these cases, as depicted in the enlarged area of Fig. \ref{fig:GSOPSim_fEsNob_parLb_Le1_EsNoe10_Theta1_MFTR_Chan}.

In Figs. \ref{fig:MFTR_Ray_Lb3Le1_PDFCDFs}, \ref{fig:MFTR_RicK4_Lb3Le1_PDFCDFs}, and \ref{fig:MFTR_RicK25_Lb3Le1_PDFCDFs}, the $F_{\Psi_{\Bob{}}}(z)$ curve changes in relation to the figures analyzed in the previous paragraph; this is because this new CDF considers the case $L_{\Bob{}}=3$. On the other hand, the $f_{\Psi_{\Eve{},\text{eq}}}(z)$ curve is the same as presented in Figs. \ref{fig:MFTR_Ray_Lb1Le1_PDFCDFs}, \ref{fig:MFTR_RicK4_Lb1Le1_PDFCDFs}, and \ref{fig:MFTR_RicK25_Lb1Le1_PDFCDFs}, according to the corresponding fading channel affecting the $\Alice{}\rightarrow\Eve{}$ link\footnote{The $f_{\Psi_{\Eve{},\text{eq}}}(z)$ curve appears different across Figs. \ref{fig:MFTR_Ray_Lb1Le1_PDFCDFs}–\ref{fig:MFTR_RicK25_Lb1Le1_PDFCDFs} and Figs. \ref{fig:MFTR_Ray_Lb3Le1_PDFCDFs}–\ref{fig:MFTR_RicK25_Lb3Le1_PDFCDFs} due to varying axis scales used to enhance result visualization.}. 
In Figs. \ref{fig:MFTR_Ray_Lb3Le1_PDFCDFs}, \ref{fig:MFTR_RicK4_Lb3Le1_PDFCDFs}, and \ref{fig:MFTR_RicK25_Lb3Le1_PDFCDFs}, note that the resulting product function changes considerably when the type of fading affecting the $\Alice{}\rightarrow\Eve{}$ link is modified. Thus, when comparing, for example, Fig. \ref{fig:MFTR_Ray_Lb3Le1_PDFCDFs} with Fig. \ref{fig:MFTR_RicK25_Lb3Le1_PDFCDFs}, a significant reduction in the area under this curve is evident in this last figure. In the case of Fig. \ref{fig:MFTR_Ray_Lb3Le1_PDFCDFs}, the highest peak of the PDF is in a region where the CDF is increasing.
On the other hand, in Fig. \ref{fig:MFTR_RicK25_Lb3Le1_PDFCDFs}, the highest peak of the PDF is in a region where the CDF has minimal values. As a result, in this latter case, the area under the product function is relatively small compared to that of Fig. \ref{fig:MFTR_Ray_Lb3Le1_PDFCDFs}, implying a much lower GSOP when the fading channel affecting the link $\Alice{}\rightarrow\Eve{}$ is Rician with $\mathcal{K}_{\Eve{}}=25$. This is also evident in Fig. \ref{fig:MFTR_EveChannel_PDFCDFs} when comparing Rayleigh and Rician scenarios for $L_{\Bob{}}=3$.

The previous results show that PLS depends on the type of fading channel affecting both the $\Alice{}\rightarrow\Bob{}$ and $\Alice{}\rightarrow\Eve{}$ links, or equivalently, on the behavior of $F_{\Psi_{\Bob{}}}(z)$ and $f_{\Psi_{\Eve{},\text{eq}}}(z)$, respectively. It is also interesting to note that $L_{\Bob{}}$ has a significant impact on system security. Specifically, it influences the GSOP diversity order and modifies the coding gain in accordance with the fading characteristics of the $\Alice{}\rightarrow\Eve{}$ link.

\subsection{AFE and AILR}

In the following, the exact AFE and AILR are calculated using \eqref{eq:AFE_Exact} and \eqref{eq:AILR_Exact}, respectively, while the approximate AFE and AILR are plotted using \eqref{eq:AFE_Approx} and \eqref{eq:AILR_Approx}.

Figs. \ref{fig:AFE_AILR_fEsNob_parLb_Le2_EsNoe_8dB} and \ref{fig:AFE_AILR_fEsNob_parLe_Lb2_EsNoe_8dB} depict AFE and AILR versus $\overline{\gamma_{\Bob{}}}$. The former sets $L_{\Eve{}}=2$ while varying $L_{\Bob{}}$, whereas the latter fixes $L_{\Bob{}}=2$ and varies $L_{\Eve{}}$.
Besides, in both figures is assumed that $\overline{\gamma_{\Eve{}}} = 8$ dB, $m_{\Bob{}}=4, \mu_{\Bob{}} = 2, \sigma_{\Bob{}}^{2}=0.5, K_{\Bob{}}= 2.5, \Delta_{\Bob{}}=0.8, m_{\Eve{}}=3, \mu_{\Eve{}} = 2, \sigma_{\Eve{}}^{2}=0.5, K_{\Eve{}}= 6.5$, and $\Delta_{\Eve{}}=0.9$. Notice that our exact expressions align with the simulation results, while the approximations exhibit high accuracy in capturing the simulation outcomes. Observe that with the increase in $\overline{\gamma_{\Bob{}}}$, the AFE also rises. Since the AFE serves as an asymptotic lower bound for \Eve{}'s decoding error probability, as $\overline{\gamma_{\Bob{}}}$ increases, the error probability for \Eve{} also increases.
On the other hand, as $\overline{\gamma_{\Bob{}}}$ increases, notice that the AILR decreases. This indicates that as \Bob{}'s SNR rises, the rate of information leakage to \Eve{} diminishes.
Moreover, note from Fig. \ref{fig:AFE_AILR_fEsNob_parLb_Le2_EsNoe_8dB} that, as $L_{\Bob{}}$ increases, the AFE increases while the AILR diminishes. This indicates that the secrecy performance improves as the diversity order at \Bob{} increases. 
In contrast, in Fig. \ref{fig:AFE_AILR_fEsNob_parLe_Lb2_EsNoe_8dB}, as $L_{\Eve{}}$ increases, the AFE diminishes while the AILR increases. This indicates that enhancing the diversity gain at \Eve{} lowers its decoding error probability while increasing the information leakage rate, as expected.

\begin{figure}[t]
    \centerline{\includegraphics
    [width=8.3cm, height=5.3cm]
    {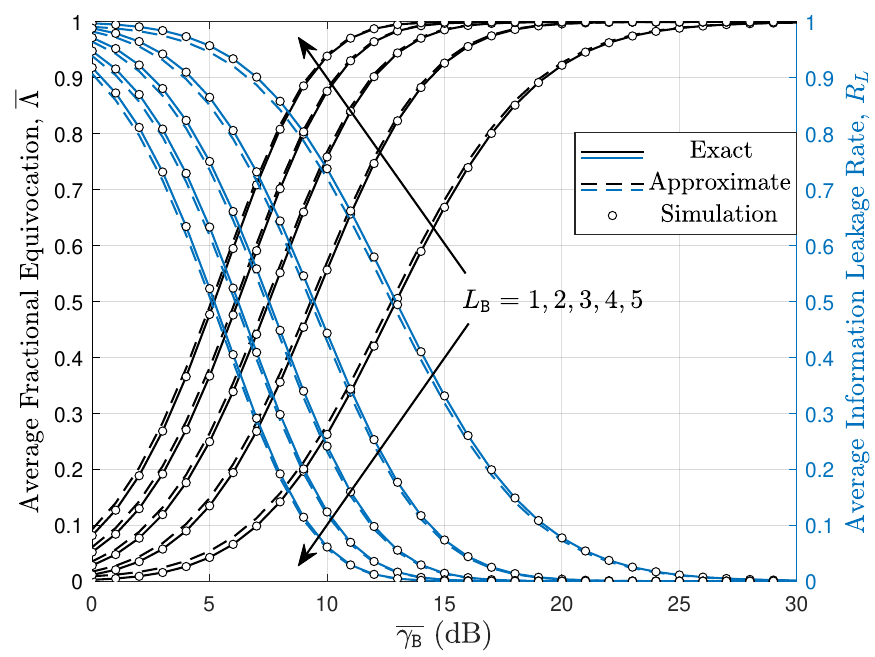}}
    \caption{AFE and AILR as a function of $\overline{\gamma_{\Bob{}}}$, parameterized by $L_{\Bob{}}$ considering $L_{\Eve{}}= 2, \overline{\gamma_{\Eve{}}} = 8$ dB, $m_{\Bob{}}=4, \mu_{\Bob{}} = 2, \sigma_{\Bob{}}^{2}=0.5, K_{\Bob{}}= 2.5, \Delta_{\Bob{}}=0.8, m_{\Eve{}}=3, \mu_{\Eve{}} = 2, \sigma_{\Eve{}}^{2}=0.5, K_{\Eve{}}= 6.5$, and $\Delta_{\Eve{}}=0.9$.}
    \label{fig:AFE_AILR_fEsNob_parLb_Le2_EsNoe_8dB}
    \vspace{-3mm}
\end{figure}

\begin{figure}[t]
    \centerline{\includegraphics
    [width=8.3cm, height=5.3cm]
    {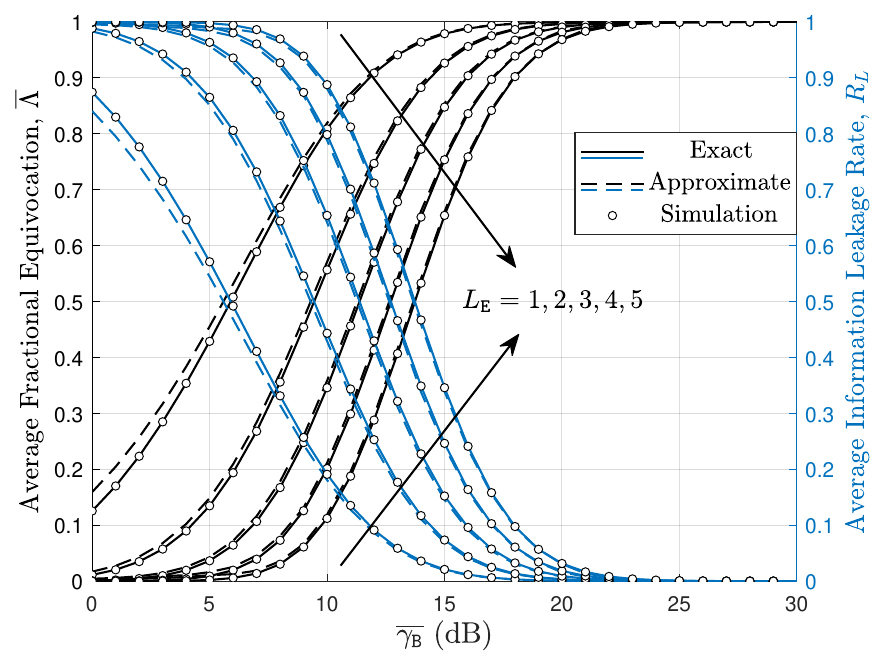}}
    \caption{AFE and AILR as a function of $\overline{\gamma_{\Bob{}}}$, parameterized by $L_{\Eve{}}$ considering $L_{\Bob{}}= 2, \overline{\gamma_{\Eve{}}} = 8$ dB, $m_{\Bob{}}=4, \mu_{\Bob{}} = 2, \sigma_{\Bob{}}^{2}=0.5, K_{\Bob{}}= 2.5, \Delta_{\Bob{}}=0.8, m_{\Eve{}}=3, \mu_{\Eve{}} = 2, \sigma_{\Eve{}}^{2}=0.5, K_{\Eve{}}= 6.5$, and $\Delta_{\Eve{}}=0.9$.}
    \label{fig:AFE_AILR_fEsNob_parLe_Lb2_EsNoe_8dB}
    \vspace{-3mm}
\end{figure}

Fig. \ref{fig:AFE_AILR_fRs_parme_EsNoe_LbLe2_EsNob20dB} shows the AFE and AILR versus $R_{s}$, parameterized by $m_{\Eve{}}$ and $\overline{\gamma_{\Eve{}}}$, considering $L_{\Bob{}}= 2, \overline{\gamma_{\Bob{}}} = 20$ dB, $m_{\Bob{}}=4, \mu_{\Bob{}} = 2, \sigma_{\Bob{}}^{2}=0.5, K_{\Bob{}}= 2.5, \Delta_{\Bob{}}=0.8, \mu_{\Eve{}} = 2, \sigma_{\Eve{}}^{2}=0.5, K_{\Eve{}}= 6.5$, and $\Delta_{\Eve{}}=0.9$. 
The results show that as $R_s$ increases, the AFE gradually decreases. This behavior reflects the fact that the FE---an asymptotic lower bound on \Eve{}’s decoding error probability---is less likely to approach 1 under more stringent secrecy constraints. As a result, its expected value, the AFE, diminishes, indicating that \Eve{}'s capacity to extract confidential information increases as the system targets higher secrecy rates. This illustrates a trade-off: while a higher $R_s$ represents a stronger secrecy requirement, it also reduces $\Eve{}$'s decoding error probability. 

In contrast, we can observe from Fig. \ref{fig:AFE_AILR_fRs_parme_EsNoe_LbLe2_EsNob20dB} that the AILR increases monotonically with $R_s$. As defined in \eqref{eq:AILR}, the leakage rate grows as a result of both the increase in $R_s$ and the decrease in AFE. This implies that although a higher $R_s$ enforces stricter secrecy conditions, it also raises the amount of information potentially leaked to \Eve{} over time. Hence, increasing $R_s$ does not inherently improve the overall secrecy performance; rather, it must be balanced carefully with the system's fading characteristics and \Eve{}'s channel quality.
Note also that, with an increase in $\overline{\gamma_{\Eve{}}}$, the AFE decreases while the AILR increases for a given $R_s$. This aligns with expectations, as a higher SNR at \Eve{} reduces the lower bound on its decoding error probability while increasing the information leakage rate.
Finally, the influence of $m_{\Eve{}}$ is evident, where an increase from 1 to 20 signifies a less severe shadowing impact on the specular components of the $\Alice{}\rightarrow\Eve{}$ link. This improvement in the propagation conditions for \Eve{} results in a lower AFE and a higher AILR as $R_s$ increases.

\begin{figure}[t]
    \centerline{\includegraphics[width=\linewidth]
    {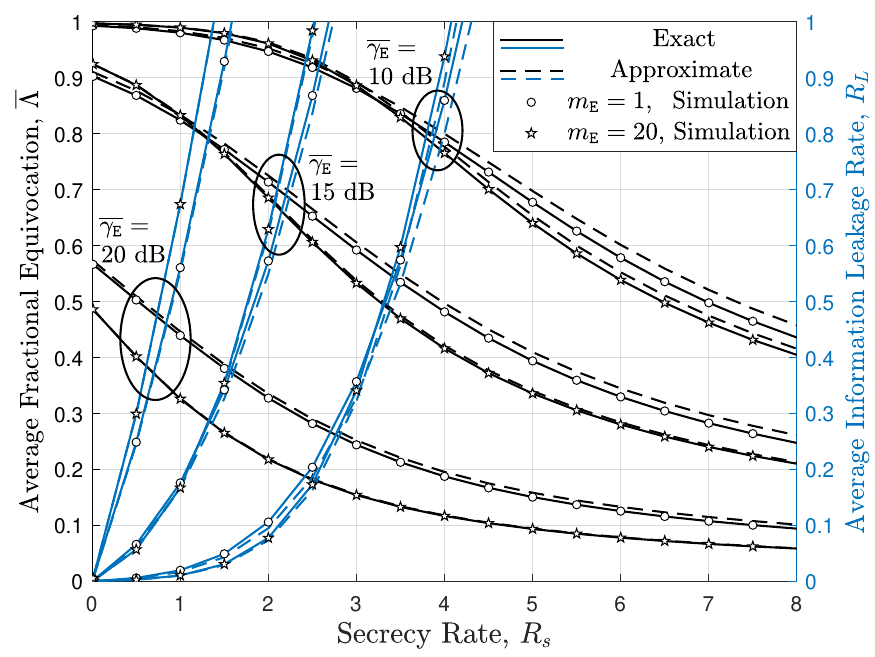}}
    \caption{AFE and AILR as a function of $R_{s}$, parameterized by $m_{\Eve{}}$ and $\overline{\gamma_{\Eve{}}}$ considering $L_{\Bob{}}= 2, \overline{\gamma_{\Bob{}}} = 20$ dB, $m_{\Bob{}}=4, \mu_{\Bob{}} = 2, \sigma_{\Bob{}}^{2}=0.5, K_{\Bob{}}= 2.5, \Delta_{\Bob{}}=0.8, \mu_{\Eve{}} = 2, \sigma_{\Eve{}}^{2}=0.5, K_{\Eve{}}= 6.5$, and $\Delta_{\Eve{}}=0.9$.}
    \label{fig:AFE_AILR_fRs_parme_EsNoe_LbLe2_EsNob20dB}
\end{figure}

\subsection{Truncation Error}

Fig. \ref{fig:Trunc_fT_parLb_Le_EsNob_30dB_EsNoe_8dB} shows the truncation error for the GSOP, $\mathcal{R}_{T}$, as a function of the truncation limit, $T$, parameterized by $L_{\Bob{}}$ and $L_{\Eve{}}$, considering $\overline{\gamma_{\Bob{}}} = 30$ dB, $\overline{\gamma_{\Bob{}}} = 8$ dB, and the same channel parameters used in Fig. \ref{fig:GSOP_fEsNob_parLb_Le_EsNoe_8dB_Theta05}. Specifically, these results were obtained using~\eqref{eq:ratio_bound}. The curves are plotted only after the stabilization of the ratio $r_j$ defined in \eqref{eq:rj}. In this context, stabilization means that the sequence of ratios, which may initially fluctuate due to oscillations in the recursive coefficients $\varphi_{j,\Eve{}}$ and the confluent hypergeometric term, reaches a regime where the ratio decreases with $j$. Beyond this point, $r_j$ can be regarded as representative of the subsequent terms, so the ratio bound in~\eqref{eq:ratio_bound} provides a reliable estimate of the truncation error.

Fig.~\ref{fig:Trunc_fT_parLb_Le_EsNob_30dB_EsNoe_8dB} shows that as the truncation limit $T$ increases, the truncation error $\mathcal{R}_{T}$ decreases, which is an expected behavior. An interesting observation is that when $L_{\Bob{}}$ increases, the truncation error becomes smaller. This is consistent with the fact that a larger number of antennas at \Bob{} reduces the GSOP, and thus the truncation error follows the same trend. Conversely, when $L_{\Eve{}}$ increases, the truncation error grows, since more antennas at \Eve{} lead to a higher GSOP, and the truncation error again reflects this behavior. 
To illustrate, consider two representative cases. First, when $L_{\Bob{}}=1$, $L_{\Eve{}}=2$, and $T=50$, the truncation error in Fig.~\ref{fig:Trunc_fT_parLb_Le_EsNob_30dB_EsNoe_8dB} is approximately $\mathcal{R}_{T}\approx 3\times 10^{-5}$. From Fig.~\ref{fig:GSOP_fEsNob_parLb_Le_EsNoe_8dB_Theta05}, the corresponding GSOP for $\overline{\gamma_{\Bob{}}} = 30$ dB is about $1.1 \times 10^{-3}$. This demonstrates that truncating the infinite series with $T=50$ terms makes the truncation error practically negligible. As a second example, when $L_{\Bob{}}=2$, $L_{\Eve{}}=4$, and $T=50$, we obtain $\mathcal{R}_{T}\approx 6\times 10^{-7}$, while the GSOP in Fig.~\ref{fig:GSOP_fEsNob_parLb_Le_EsNoe_8dB_Theta05} is approximately $4 \times 10^{-6}$ for $\overline{\gamma_{\Bob{}}} = 30$ dB. Again, the truncation error remains insignificant.

These results highlight that our analytical expressions can be truncated with a relatively small number of terms while still providing accurate and reliable evaluations of the GSOP. Furthermore, as both $L_{\Bob{}}$ and $L_{\Eve{}}$ increase, a larger number of terms may be required to achieve the same level of accuracy. Nevertheless, this number can be estimated from~\eqref{eq:ratio_bound}, ensuring that the truncation remains computationally feasible in practice.

\begin{figure}[t]
    \centerline{\includegraphics[width=\linewidth]
    {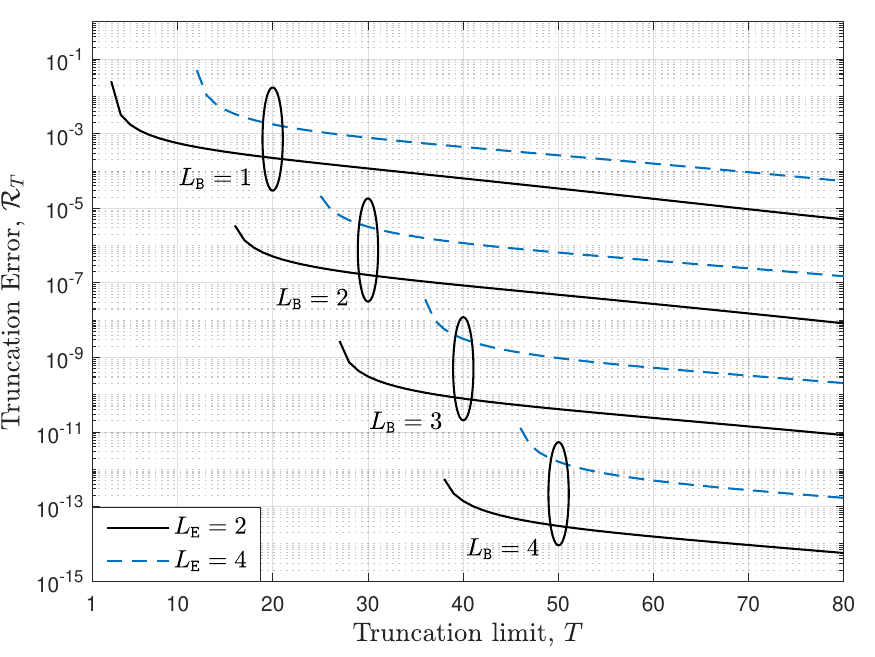}}
    \caption{Truncation error as a function of $T$, parameterized by $L_{\Bob{}}$ and $L_{\Eve{}}$ considering $m_{\Bob{}}=4, \mu_{\Bob{}} = 2, \sigma_{\Bob{}}^{2}=0.5, K_{\Bob{}}= 2.5, \Delta_{\Bob{}}=0.8$, $\overline{\gamma_{\Eve{}}} = 8$ dB, $m_{\Eve{}}=3, \mu_{\Eve{}} = 2, \sigma_{\Eve{}}^{2}=0.5, K_{\Eve{}}= 6.5$, $\Delta_{\Eve{}}=0.9$, and $\theta = 0.5$.}
    \label{fig:Trunc_fT_parLb_Le_EsNob_30dB_EsNoe_8dB}
    \vspace{-3mm}
\end{figure}


\section{Conclusions}
\label{sec: Conclusions}

This study advances the understanding of PLS in wireless systems operating over generalized MFTR fading channels by addressing critical gaps in existing research. We analyzed the partial secrecy regime, exploring a scenario involving a transmitter, a legitimate receiver, and an external eavesdropper.
Our system model assumed that both \Bob{} and \Eve{} are equipped with antenna arrays and apply maximal ratio combining (MRC). Additionally, the fading across the diversity branches at each receiver is modeled as i.i.d., while the fading affecting the $\Alice{}\rightarrow\Bob{}$ and $\Alice{}\rightarrow\Eve{}$ links is modeled as i.n.i.d.

We provided exact expressions for calculating the GSOP, AFE, and AILR, using the FE definition, offering insights into \Eve{}'s decoding capability and the amount of leaked information. 
Unlike previous studies based on simpler fading models, where diversity is typically considered only at \Eve{} and the number of nested infinite summations in SOP expressions scales with the number of diversity branches, our approach offers a significant improvement. Specifically, our derived expressions consist of only two infinite summations, regardless of the number of antennas at \Bob{} and \Eve{}. These summations can be efficiently truncated, enabling accurate and fast evaluation of secrecy performance.

Our study also offered closed-form approximate expressions in the high SNR regime for GSOP, AFE, and AILR, along with an asymptotic expression for GSOP. This revealed the impact of diversity and coding gain on secrecy performance. 
An important conclusion is that as the number of antennas at the legitimate receiver increases, the diversity order also increases, which is an expected outcome. However, as the number of antennas at the eavesdropper increases, a loss in SNR occurs, while the diversity order remains unchanged. Interestingly, the results indicate that increasing $L_{\Bob{}}$ leads to a significantly greater improvement in secrecy performance compared to the degradation observed when $L_{\Eve{}}$ increases. The advantage for the legitimate user becomes apparent, as it is able to better exploit diversity, even though both \Bob{} and \Eve{} employ MRC. 
Monte Carlo simulations validated the precision of our expressions across diverse scenarios. 

For future work, a scenario in which \Alice{} also employs multiple antennas, i.e., a multiple-input multiple-output (MIMO) system, could be considered. In particular, strategies such as transmit antenna selection (TAS) could be explored to optimize the trade-off between secrecy performance and system complexity under the MFTR fading. Our mathematical modeling can serve as a foundational framework for this purpose. However, extending such a scenario from our study is not trivial; it presents significant challenges that require careful consideration. 
Another promising direction involves incorporating correlation effects, either between the fading channels from \Alice{} to \Bob{} and from \Alice{} to \Eve{}, or across the antennas themselves. 
In this regard, a key open research challenge lies in the statistical characterization of correlated MFTR fading channels, which has not yet been addressed in the literature. 
Understanding the impact of such correlations on secrecy performance would be crucial for accurately modeling realistic propagation environments and strengthening the practical relevance of the proposed framework.

\section*{Acknowledgments}
We thank the Deep Learning Laboratory at Universidad San Francisco de Quito (USFQ) for providing computational resources and AI infrastructure that supported this research.

\begin{appendices}
\section{Proof of Proposition 2}
\label{app:Proof_CDF_Phi}
By definition, the CDF of $\Phi$ is obtained as
\begin{align}
    \label{eq:CDF_Phi}
    F_{\Phi}(z) = \text{P}(\Phi \leq z).
\end{align}
In light of \eqref{eq:Phi}, the expression in \eqref{eq:CDF_Phi} can be rewritten as
\begin{align}
    \label{eq:CDF_Phi_2}
    \nonumber F_{\Phi}(z) &= \text{P}\left(\Psi_{\Bob{}} \leq (\Psi_{\Eve{}}+1)z-1 \vphantom{2^2}\right) \\
    &= \int_{0}^{\infty}F_{\Psi_{\Bob{}}} \left((y+1)z-1 \vphantom{2^{2}}\right)f_{\Psi_{\Eve{}}}(y) \text{d}y.
\end{align}
Using \eqref{eq: sum PDF} and \eqref{eq: sum CDF}, considering that $\nu_{i,\Bob{}}\in \mathbb{Z}$, and the series representation of the lower incomplete gamma function \cite[eq. (8.69)]{Arfken2005}, \eqref{eq:CDF_Phi_2} can be now expressed as
\begin{align}
    \label{eq:CDF_Phi_3}
    \nonumber F_{\Phi} (z) =& \sum_{i=0}^{\infty} \varphi_{i,\Bob{}} \sum_{j=0}^{\infty} \varphi_{j,\Eve{}} \frac{\rho_{\Eve{}}^{-\nu_{j,\Eve{}}}}{\Gamma(\nu_{j,\Eve{}})} \\
    \nonumber &\times \left(\vphantom{\int_{0}^{\infty}}\right. \underbrace{ \int_{0}^{\infty} y^{\nu_{j,\Eve{}}-1}\exp\left(-\frac{y}{\rho_{\Eve{}}}\right)\text{d}y}_{\mathcal{I}_{1}(j)}\\
    \nonumber &- \sum_{a=0}^{\nu_{i,\Bob{}}-1}\frac{1}{a!}
    \underbrace{\int_{0}^{\infty} y^{\nu_{j,\Eve{}}-1} \left(\frac{(y+1)z-1}{\rho_{\Bob{}}}\right)^{a}}_{\mathcal{I}_{2}(i,j)} \\
    &   \underbrace{\times\exp\left(-\frac{y}{\rho_{\Eve{}}}\right) \exp\left(-\frac{(y+1)z-1}{\rho_{\Bob{}}}\right) \text{d}y}_{\mathcal{I}_{2}(i,j)}\left.\vphantom{\int_{0}^{\infty}} \right).
\end{align}
Using \cite[eq. (6.1.1)]{abramowitz72}, we obtain that $\mathcal{I}_{1}(j) = \rho_{\Eve{}}^{\nu_{j,\Eve{}}}\Gamma(\nu_{j,\Eve{}})$. Then, by solving $\mathcal{I}_{2}(i,j)$, it follows that
\begin{align}
    \label{eq:I2}
    \nonumber &\mathcal{I}_{2}(i,j) =\\
    \nonumber &\Gamma(a+\nu_{j,\Eve{}})\exp\left(- \frac{z-1}{\rho_{\Bob{}}}\right) \left(\frac{z}{\rho_{\Bob{}}}\right)^{a} \left(\frac{\rho_{\Bob{}}\rho_{\Eve{}}}{\rho_{\Bob{}} + z\rho_{\Eve{}}}\right)^{a+\nu_{j,\Eve{}}}\\
    &\times \,_1F_1 \left(-a,1-a-\nu_{j,\Eve{}},\frac{(z-1)(\rho_{\Bob{}} + z\rho_{\Eve{}})}{z\rho_{\Bob{}}\rho_{\Eve{}}}\right).
\end{align}
Finally, by employing the results of $\mathcal{I}_{1}(j)$ and $\mathcal{I}_{2}(i,j)$ in \eqref{eq:CDF_Phi_3}, and after some algebraic simplifications, $F_{\Phi}(z)$ can be rewritten as \eqref{eq:Phi_CDF}, which concludes the proof.

\section{Proof of Proposition 3} \label{app:Proof_GSOP}

From \eqref{eq:Frac_Eq_2} and \eqref{eq:GSOP}, the GSOP can be calculated as
\begin{align}
    \label{eq:GSOP_3}
    \nonumber \text{GSOP} =&\, \text{P}(0 < \theta | \Phi \leq 1) \text{P}(\Phi \leq 1) \,+ \\
    \nonumber &\, \text{P}\left(\frac{1}{R_{s}}\log_{2}\Phi < \theta \Big| 1 < \Phi < 2^{R_{s}}\right) \text{P}\left(1 < \Phi < 2^{R_{s}}\right)\\
    \nonumber  =&\, F_{\Phi}(1) + \text{P}\left(\frac{1}{R_{s}}\log_{2}\Phi < \theta \cap 1 < \Phi < 2^{R_{s}}\right) \\
    =&\, F_{\Phi}(1) + \text{P}\left(\Phi < 2^{\theta R_{s}} \cap 1 < \Phi < 2^{R_{s}}\right),
\end{align}
where we have used that $\text{P}(0 < \theta | \Phi \leq 1)=1$, given that $0<\theta\leq 1$, along with $\text{P}(A|B)=\text{P}(A \cap B)/\text{P}(B)$ and $\text{P}(Z<z)=F_{Z}(z)$. Now, by noticing that $1 < 2^{\theta R_{s}} \leq 2^{R_{s}}$, the GSOP can be reformulated as
\begin{align}
    \label{eq:GSOP_4}
    \text{GSOP} =\, F_{\Phi}(1) + \text{P}\left(1< \Phi \leq 2^{\theta R_{s}}\right).
\end{align}
Subsequently, employing the fact that $\text{P}(1 < \Phi \leq 2^{\theta R_{s}}) = F_{\Phi}(2^{\theta R_{s}}) - F_{\Phi}(1)$ in \eqref{eq:GSOP_4}, the GSOP can be written as
\begin{align}
    \label{eq:GSOP_5}
    \textnormal{GSOP} = F_{\Phi}\left(2^{\theta R_{s}}\right),
\end{align}
where $F_{\Phi}(z)$ is given by \eqref{eq:Phi_CDF} in Proposition \ref{prop:CDF_Phi}. Thus, the GSOP can eventually be written as in \eqref{eq:GSOP_2}. With this statement, the proof is concluded.

\section{Proof of Proposition 4}
\label{app:Proof_Exact_AFE}

From \eqref{eq:Frac_Eq_2} and \eqref{eq:AFE}, the AFE can be calculated as
\begin{align}
    \label{eq:AFE_2}
    \overline{\Lambda} = \frac{1}{R_{s}} \int_{1}^{2^{R_{s}}} \log_{2} (z) f_{\Phi}(z) \text{d}z + \int_{2^{R_{s}}}^{\infty} f_{\Phi}(z) \text{d}z
\end{align}
Now, employing the identity $\log_{2}(z) = \log(z)/\log(2)$, applying integration by parts to the first integral and using that the second integral is equal to $1 - F_{\Phi}\left(2^{R_{s}}\right)$, we obtain
\begin{align}
    \label{eq:AFE_3}
    \nonumber \overline{\Lambda} =& \frac{1}{\log \left(2^{R_{s}}\right)} \left( \log(z)F_{\Phi}(z)\Big|_{1}^{2^{R_{s}}} - \int_{1}^{2^{R_{s}}} \frac{1}{z}F_{\Phi}(z)\text{d}z \right)\\
    & + 1 - F_{\Phi}\left(2^{R_{s}}\right).
\end{align}
Then, using that $\log(1) = 0$ and after some simplifications, we finally obtain \eqref{eq:AFE_Exact}, which concludes the proof. 

\section{Proof of Proposition 5} \label{app:Proof_CDF_L}

The CDF of $\Phi_{\textnormal{A}} = \Psi_{\Bob{}}/\Psi_{\Eve{}}$ can be obtained as
\begin{align}
    \label{eq:CDF_Phi_L}
    \nonumber F_{\Phi_{\text{A}}}(z) &= \text{P}(\Phi_{\text{A}} \leq z)\\
    \nonumber & = \text{P}\left(\Psi_{\Bob{}} \leq z \Psi_{\Eve{}} \vphantom{2^2}\right) \\
    & = \int_{0}^{\infty}F_{\Psi_{\Bob{}}} (yz \vphantom{2^{2}})f_{\Psi_{\Eve{}}}(y) \text{d}y.
\end{align}
Now, by employing \eqref{eq: sum PDF} and \eqref{eq: sum CDF}, considering that $\nu_{i,\Bob{}}\in \mathbb{Z}$ and \cite[eq. (8.69)]{Arfken2005}, \eqref{eq:CDF_Phi_L} can be rewritten as
\begin{align}
    \label{eq:CDF_Phi_L_2}
    \nonumber F_{\Phi_{\text{A}}} (z) =& \sum_{i=0}^{\infty} \varphi_{i,\Bob{}} \sum_{j=0}^{\infty} \varphi_{j,\Eve{}} \frac{\rho_{\Eve{}}^{-\nu_{j,\Eve{}}}}{\Gamma(\nu_{j,\Eve{}})} \\
    \nonumber &\times \left(\vphantom{\int_{0}^{\infty}}\right. \underbrace{ \int_{0}^{\infty} y^{\nu_{j,\Eve{}}-1}\exp\left(-\frac{y}{\rho_{\Eve{}}}\right)\text{d}y}_{\mathcal{I}_{1}(j)} -  \sum_{a=0}^{\nu_{i,\Bob{}}-1}\frac{1}{a!}\\
    &\times \underbrace{\int_{0}^{\infty} y^{\nu_{j,\Eve{}}-1} \left(\frac{yz}{\rho_{\Bob{}}}\right)^{a}\exp\left(-\frac{y}{\rho_{\Eve{}}}-\frac{yz}{\rho_{\Bob{}}}\right)\text{d}y}_{\mathcal{I}_{3}(i,j)}\left.\vphantom{\int_{0}^{\infty}} \right).
\end{align}
Here, $\mathcal{I}_{1}(j)$ is the same integral defined in \eqref{eq:CDF_Phi_3}, thus being equal to $\mathcal{I}_{1}(j) = \rho_{\Eve{}}^{\nu_{j,\Eve{}}}\Gamma(\nu_{j,\Eve{}})$. Moreover, by solving $\mathcal{I}_{3}(i,j)$ we obtain
\begin{align}
    \label{eq:I3}
    \mathcal{I}_{3}(i,j) = \Gamma(a+\nu_{j,\Eve{}}) \left(\frac{z}{\rho_{\Bob{}}}\right)^{a} \left(\frac{\rho_{\Bob{}}\rho_{\Eve{}}}{\rho_{\Bob{}} + z\rho_{\Eve{}}}\right)^{a+\nu_{j,\Eve{}}}.
\end{align}
Using the results of $\mathcal{I}_{1}(j)$ and $\mathcal{I}_{3}(i,j)$ in \eqref{eq:CDF_Phi_L_2}, and after some simplifications, we can obtain $F_{\Phi_{\text{A}}}(z)$ as presented in \eqref{eq:CDF_Phi_Low}, which concludes the proof.

\section{Proof of Proposition 7} \label{app:GSOP_Assympt}

Upon substituting the exponential function $e^{-t}$ by its Mclaurin series expansion \cite[eq. (4.2.1)]{abramowitz72} in the integral definition of the lower incomplete gamma function \cite[eq. (6.5.2)]{abramowitz72} within the CDF of $\Psi_{\textnormal{\texttt{X}}}$, given by \eqref{eq: sum CDF}, we obtain
\begin{align}
    \label{eq:sum_CDF_New}
    \nonumber F_{\Psi_{\textbf{\texttt{X}}}} (z) &= \sum_{i=0}^{\infty} \frac{\varphi_{i,\textnormal{\texttt{X}}} }{\Gamma (\nu_{i,\textnormal{\texttt{X}}})}\int_{0}^{\frac{z}{\rho_{\texttt{X}}}} \sum_{a=0}^{\infty} \frac{(-1)^{a}}{a!} t^{\nu_{i,\texttt{X}}+a-1}\text{d}t \\
    &= \sum_{i=0}^{\infty} \frac{\varphi_{i,\textnormal{\texttt{X}}} }{\Gamma (\nu_{i,\textnormal{\texttt{X}}})} \sum_{a=0}^{\infty} \frac{(-1)^{a}}{a!(\nu_{i,\texttt{X}}+a)}\left(\frac{z}{\rho_{\texttt{X}}}\right)^{\nu_{i,\texttt{X}} + a}.
\end{align}
Building on the preceding outcome, we can establish an alternative expression for the CDF of $\Phi$, given by \eqref{eq:Phi}. Using \eqref{eq: sum PDF} and \eqref{eq:sum_CDF_New} into \eqref{eq:CDF_Phi_2}, and after some manipulations, we arrive at this alternative CDF expression as
\begin{align}
    \label{eq:sum_CDF_New_2}
    \nonumber &F_{\Phi} (z) = \\
    \nonumber &\sum_{i=0}^{\infty} \frac{\varphi_{i,\Bob{}}}{\Gamma (\nu_{i,\Bob{}})} \sum_{a=0}^{\infty} \frac{(-1)^{a}}{a!(\nu_{i,\Bob{}}+a)}\left(\frac{1}{\rho_{\Bob{}}}\right)^{\nu_{i,\Bob{}} + a}\sum_{j=0}^{\infty} \varphi_{j,\Eve{}} \frac{\rho_{\Eve{}}^{-\nu_{j,\Eve{}}}}{\Gamma(\nu_{j,\Eve{}})} \\ &\times \int_{0}^{\infty}  y^{\nu_{j,\Eve{}}-1} \left((y+1)z-1\right)^{\nu_{i,\Bob{}}+a} \exp\left(-\frac{y}{\rho_{\Eve{}}}\right)\text{d}y.
\end{align}
Upon solving the integral in \eqref{eq:sum_CDF_New_2} and simplifying the resultant expression, the alternative CDF of $\Phi$ can be reformulated~as
\begin{align}
    \label{eq:sum_CDF_New_3}
    \nonumber F_{\Phi} (z) = &\sum_{i=0}^{\infty} \frac{\varphi_{i,\Bob{}}}{\Gamma (\nu_{i,\Bob{}})} \sum_{a=0}^{\infty} \frac{(-1)^{a}}{a!(\nu_{i,\Bob{}}+a)}\left(\frac{z \rho_{\Eve{}}}{\rho_{\Bob{}}}\right)^{\nu_{i,\Bob{}} + a}\\
    \nonumber &\times \sum_{j=0}^{\infty} \varphi_{j,\Eve{}} \frac{\Gamma(\nu_{i,\Bob{}} + \nu_{j,\Eve{}} + a)}{\Gamma(\nu_{j,\Eve{}})}\\
    &\times \, _1F_1 \left(-a - \nu_{i,\Bob{}},1 - \nu_{i,\Bob{}} - \nu_{j,\Eve{}} - a,\frac{1-z}{\rho_{\Eve{}}}\right).
\end{align}
Using \eqref{eq:rho} and \eqref{eq:nu} to calculate $\rho_{\Bob{}}$ and $\nu_{i,\Bob{}}$, respectively, under the condition $\overline{\gamma_{\Bob{}}} \rightarrow \infty$, the most significant terms in the series of \eqref{eq:sum_CDF_New_3} are obtained for $i=0$ and $a=0$. Therefore, focusing on these predominant terms allows for an approximation of \eqref{eq:sum_CDF_New_3}, which is expressed as
\begin{align}
    \label{eq:sum_CDF_New_Approx}
    \nonumber F_{\Phi}^{\infty} (z) \approx & \frac{\varphi_{0,\Bob{}}}{\Gamma (\mu_{\Bob{}}L_{\Bob{}})\mu_{\Bob{}}L_{\Bob{}}} \left(\frac{z \rho_{\Eve{}}}{\rho_{\Bob{}}}\right)^{\mu_{\Bob{}}L_{\Bob{}}} \sum_{j=0}^{\infty} \varphi_{j,\Eve{}} \frac{\Gamma(\mu_{\Bob{}}L_{\Bob{}} + \nu_{j,\Eve{}})}{\Gamma(\nu_{j,\Eve{}})}\\
    &\times \, _1F_1 \left(- \mu_{\Bob{}}L_{\Bob{}},1 - \mu_{\Bob{}}L_{\Bob{}} - \nu_{j,\Eve{}} ,\frac{1-z}{\rho_{\Eve{}}}\right).
\end{align}
Lastly, leveraging the outcome of \eqref{eq:GSOP_2} as developed in Proposition \ref{prop:GSOP}, we can derive an approximation for the GSOP when $\overline{\gamma_{\Bob{}}}$ tends to infinity, expressed as $\textnormal{GSOP}^{\infty} = F_{\Phi}^{\infty} (2^{\theta R_{s}})$. This result leads to \eqref{eq:GSOP_Assymp}, which concludes the proof.

\section{Proof of Proposition 8} \label{app:AFE_Approx}

By substituting the CDF of $\Phi_{\textnormal{A}}$, as provided in \eqref{eq:CDF_Phi_Low}, into \eqref{eq:AFE_Exact} instead of the CDF of $\Phi$, an approximation for the AFE in the high SNR regime can be obtained as
\begin{align}
    \label{eq:AFE_Approx_2}
    \nonumber\overline{\Lambda}\vphantom{x}_{\textnormal{A}} =& 1 - \frac{1}{\log(2^{R_{s}})} \sum_{i=0}^{\infty}\varphi_{i,\Bob{}}\sum_{j=0}^{\infty} \varphi_{j,\Eve{}} \left(\vphantom{\int_{1}^{2^{R_{s}}}} \right.\underbrace{\int_{1}^{2^{R_{s}}}\frac{1}{z}\text{d}z}_{\mathcal{I}_{4}} - \frac{\rho_{\Bob{}}^{\nu_{j,\Eve{}}}}{\Gamma(\nu_{j,\Eve{}})} \\
    &\times  \sum_{a=0}^{\nu_{i,\Bob{}}-1} \frac{\Gamma(a + \nu_{j,\Eve{}})}{a!}\underbrace{\int_{1}^{2^{R_{s}}}\frac{1}{z}\frac{(z\rho_{\Eve{}})^{a}}{(\rho_{\Bob{}} + z\rho_{\Eve{}})^{a+\nu_{j,\Eve{}}}}\text{d}z}_{\mathcal{I}_{5}(i,j)} \left. \vphantom{\int_{1}^{2^{R_{s}}}}\right),
\end{align}
where $\mathcal{I}_{4} = \log(2^{R_{s}})$. Also, with the aid of \cite[eq. (8.391)]{Gradshteyn08} and after some manipulations, $\mathcal{I}_{5}(i,j)$ can be solved as
\begin{align}
    \label{eq:I5}
    \mathcal{I}_{5}(i,j) =& (-\rho_{\Bob{}})^{-\nu_{j,\Eve{}}}\left(\mathcal{B}\left(-\frac{\rho_{\Bob{}}}{\rho_{\Eve{}}},\nu_{j,\Eve{}},1-a-\nu_{j,\Eve{}} \right) \right.\\
    &\left. -\mathcal{B}\left(-\frac{\rho_{\Bob{}}}{2^{R_{s}}\rho_{\Eve{}}},\nu_{j,\Eve{}},1-a-\nu_{j,\Eve{}} \right)   \right).
\end{align}
Using the outcomes of $\mathcal{I}_{4}$ and $\mathcal{I}_{5}(i,j)$ in \eqref{eq:AFE_Approx_2}, and with some simplifications, we arrive at \eqref{eq:AFE_Approx}, concluding the proof.

\section{Truncation Error Estimation} \label{app:Ratio_Bound}

Consider the truncated series in~\eqref{eq:GSOP_Assymp_trunc}, indexed by $j$, for $j=0$ to $T$. The remainder after truncating at index $T$ is given by the sum of all subsequent terms. In other words, the truncation error of the series is defined as
\begin{align}
\zeta_T = \sum_{j=T+1}^{\infty} a_{j},
\end{align}
where $a_{j}$ is given in \eqref{eq:a_j}.
To obtain an upper bound for the absolute value of the remainder, we can use the triangle inequality as follows
\begin{align}
|\zeta_T| = \left|\sum_{j=T+1}^{\infty} a_{j}\right| \le \sum_{j=T+1}^{\infty} |a_{j}|.
\end{align}
Now, we define the ratio of successive terms at the truncation point as $r_{T+1} = |a_{T+2}| / |a_{T+1}|$. If we assume that this ratio and all subsequent ratios are less than or equal to\footnote{The convergence of the series does not guarantee that $|a_{j+1}|/|a_j| \le r_{T+1}$ for all $j > T$. In practice, the ratios may oscillate before stabilizing, so this condition must be verified numerically before applying the ratio bound.} $r_{T+1}$, we can rewrite the subsequent terms of the series
\begin{align}
|a_{T+2}| &\le r_{T+1} |a_{T+1}|, \\
|a_{T+3}| &\le r_{T+2} |a_{T+2}| \le r_{T+1}^2 |a_{T+1}|, \\
|a_{T+n}| &\le r_{T+1}^{n-1} |a_{T+1}|.
\end{align}
By substituting these inequalities into the upper bound of the remainder, we obtain a geometric series that bounds it as
\begin{align}
|\zeta_T| \le |a_{T+1}| + r_{T+1} |a_{T+1}| + r_{T+1}^2 |a_{T+1}| + \dots
\end{align}
Factoring out $|a_{T+1}|$, we obtain
\begin{align}
|\zeta_T| \le |a_{T+1}| (1 + r_{T+1} + r_{T+1}^2 + \dots)
\end{align}
Since the series is convergent, $r_{T+1} < 1$, and the sum of the infinite geometric series in parentheses is $\frac{1}{1-r_{T+1}}$ \cite[eq. (0.231.1)]{Gradshteyn08}. By substituting this sum, we obtain 
\begin{align}
|\zeta_T| \le \frac{|a_{T+1}|}{1 - r_{T+1}}.
\end{align}
Finally, by including the terms in~\eqref{eq:GSOP_Assymp_trunc} that are independent of the summation index $j$, an upper bound for the truncation error of the GSOP can obtained as
\begin{align}\label{eq:ratio_bound_1}
|\mathcal{R}_T| & \le \frac{\varphi_{0,\Bob{}}\left(\rho_{\Eve{}}\mu_{\Bob{}}(K_{\Bob{}}+1)2^{\theta R_{s}}\right)^{\mu_{\Bob{}}L_{\Bob{}}}}{\Gamma(\mu_{\Bob{}}L_{\Bob{}})\mu_{\Bob{}}L_{\Bob{}}} |\zeta_{T}|\left(\frac{1}{\,\overline{\gamma_{\Bob{}}}\,}\right)^{\mu_{\Bob{}}L_{\Bob{}}},
\end{align}
which can be rewritten as~\eqref{eq:ratio_bound}.

\end{appendices}

\bibliographystyle{IEEEtran}
\bibliography{Bibliographyr}

\begin{thebibliography}{10}
\providecommand{\url}[1]{#1}
\csname url@samestyle\endcsname
\providecommand{\newblock}{\relax}
\providecommand{\bibinfo}[2]{#2}
\providecommand{\BIBentrySTDinterwordspacing}{\spaceskip=0pt\relax}
\providecommand{\BIBentryALTinterwordstretchfactor}{4}
\providecommand{\BIBentryALTinterwordspacing}{\spaceskip=\fontdimen2\font plus
\BIBentryALTinterwordstretchfactor\fontdimen3\font minus \fontdimen4\font\relax}
\providecommand{\BIBforeignlanguage}[2]{{%
\expandafter\ifx\csname l@#1\endcsname\relax
\typeout{** WARNING: IEEEtran.bst: No hyphenation pattern has been}%
\typeout{** loaded for the language `#1'. Using the pattern for}%
\typeout{** the default language instead.}%
\else
\language=\csname l@#1\endcsname
\fi
#2}}
\providecommand{\BIBdecl}{\relax}
\BIBdecl

\bibitem{Biao_2016}
B.~He, X.~Zhou, and A.~L. Swindlehurst, ``On secrecy metrics for physical layer security over quasi-static fading channels,'' \emph{IEEE Trans. Wireless Commun.}, vol.~15, no.~10, pp. 6913--6924, 2016.

\bibitem{Moya_2020}
D.~P. Moya~Osorio, H.~Alves, and E.~E. Benitez~Olivo, ``On the secrecy performance and power allocation in relaying networks with untrusted relay in the partial secrecy regime,'' \emph{IEEE Trans. Inf. Forens. Security}, vol.~15, pp. 2268--2281, 2020.

\bibitem{Mahmoud_2017}
M.~A. Khodeir, M.~F. Al-Mistarihi, R.~Shatnawi, and L.~Alsaleh, ``{Security enhancement using maximum ratio combining (MRC) diversity over Nakagami-$m$ fading channels},'' in \emph{Proc. 14th Annu. IEEE Consum. Commun. Netw. Conf. (CCNC)}, 2017, pp. 1--5.

\bibitem{Yun_2018}
Y.~Ai, M.~Cheffena, A.~Mathur, and H.~Lei, ``On physical layer security of double {Rayleigh} fading channels for vehicular communications,'' \emph{IEEE Wirel. Commun. Lett.}, vol.~7, no.~6, pp. 1038--1041, 2018.

\bibitem{Akter_2019}
M.~L. Akter, M.~I. Rumman, M.~K. Kundu, M.~Z.~I. Sarkar, and A.~S.~M. Badrudduza, ``Security analysis in correlated {Nakagami-$m$} fading channels with diversity combining,'' in \emph{Proc. 3rd Int. Conf. Electr., Comput. Telecommun. Eng. (ICECTE)}, 2019, pp. 69--72.

\bibitem{Devi2023}
P.~Devi and M.~R. Bharti, ``{Physical layer security for IoT over Nakagami-$m$ and mixed Rayleigh–Nakagami-$m$ fading channels},'' \emph{Wireless Networks}, vol.~29, pp. 3479--3491, 2023.

\bibitem{Wei_2024}
W.~Shi, J.~Xu, W.~Xu, C.~Yuen, A.~Lee~Swindlehurst, and C.~Zhao, ``On secrecy performance of {RIS}-assisted {MISO} systems over {Rician} channels with spatially random eavesdroppers,'' \emph{IEEE Trans. Wirel. Commun.}, vol.~23, no.~8, pp. 8357--8371, 2024.

\bibitem{Mora_2024}
H.~R.~C. Mora, N.~V.~O. Garzón, F.~D.~A. García, J.~D.~V. Sánchez, and O.~L.~A. López, ``Secure transmission for uplink {SCMA} systems over $\kappa$-$\mu$ fading channels,'' \emph{IEEE Trans. Veh. Technol.}, vol.~73, no.~1, pp. 754--770, 2024.

\bibitem{Marins21}
T.~R.~F. {Marins \textit{et al.}}, ``Fading evaluation in standardized {5G} millimeter-wave band,'' \emph{IEEE Access}, vol.~9, pp. 67\,268--67\,280, Apr. 2021.

\bibitem{Tekbıyık_2021}
K.~Tekbıyık, A.~R. Ekti, G.~K. Kurt, A.~G\"{o}r\c{c}in, and S.~Yarkan, ``Modeling and analysis of short distance sub-terahertz communication channel via mixture of gamma distribution,'' \emph{IEEE Trans. Veh. Technol.}, vol.~70, no.~4, pp. 2945--2954, 2021.

\bibitem{Papasotiriou_2021}
E.~N. Papasotiriou, A.-A.~A. Boulogeorgos, K.~Haneda, M.~F. de~Guzman, and A.~Alexiou, ``{An experimentally validated fading model for THz wireless systems},'' \emph{Scientific Reports}, no. 18717, pp. 1--14, 2021.

\bibitem{Vega_2023}
J.~D.~V. Sánchez, F.~J. López-Martínez, J.~F. Paris, and J.~M. Romero-Jerez, ``The multi-cluster fluctuating two-ray fading model,'' \emph{IEEE Trans. Wirel. Commun.}, 2023.

\bibitem{Paris_2014}
J.~F. Paris, ``Statistical characterization of $\kappa{ - }\mu$ shadowed fading,'' \emph{IEEE Trans. Veh. Technol.}, vol.~63, no.~2, pp. 518--526, 2014.

\bibitem{He_2011}
F.~He, H.~Man, and W.~Wang, ``Maximal ratio diversity combining enhanced security,'' \emph{IEEE Commun. Lett.}, vol.~15, no.~5, pp. 509--511, 2011.

\bibitem{Rizve_2024}
A.~Z. Rizve, M.~Z.~I. Sarkar, S.~S. Gupta, and M.~M. Hasan, ``Secure wireless multicasting with {MRC} diversity in the {Neyman-Pearson} receiver,'' in \emph{Proc. 3rd Int. Conf. Adv. Electr. Electron. Eng. (ICAEEE)}, 2024, pp. 1--6.

\bibitem{Hasan_2024}
M.~M. Hasan, M.~Z.~I. Sarkar, A.~Z. Rizve, and M.~M. Hasan, ``Enhancing security in multicasting employing generalized receiver with {SC} diversity and opportunistic relaying,'' in \emph{Proc. 6th Int. Conf. Electr. Eng. Inf. Commun. Technol. (ICEEICT)}, 2024, pp. 1140--1145.

\bibitem{Rizve_2024_2}
A.~Z. Rizve, M.~Z.~I. Sarkar, M.~M. Hasan, and S.~S. Gupta, ``Secure wireless multicasting using generalized receiver with {MRC} diversity technique,'' in \emph{Proc. 6th Int. Conf. Electr. Eng. Inf. Commun. Technol.}, 2024, pp. 1--5.

\bibitem{Donghun_2024}
D.~Lee, ``Secrecy performance of diversity scheme using generalized selection combining in dual-hop {AF}-relay systems,'' \emph{IEEE Trans. Veh. Technol.}, vol.~73, no.~10, pp. 14\,864--14\,873, 2024.

\bibitem{Wang_2014}
L.~Wang, N.~Yang, M.~Elkashlan, P.~L. Yeoh, and J.~Yuan, ``Physical layer security of maximal ratio combining in two-wave with diffuse power fading channels,'' \emph{IEEE Trans. Inf. Forens. Security}, vol.~9, no.~2, pp. 247--258, 2014.

\bibitem{Durgin_2002}
G.~Durgin, T.~Rappaport, and D.~de~Wolf, ``{New analytical models and probability density functions for fading in wireless communications},'' \emph{IEEE Trans. Commun.}, vol.~50, no.~6, pp. 1005--1015, 2002.

\bibitem{ftr}
J.~M. Romero-Jerez, F.~J. Lopez-Martinez, J.~F. Paris, and A.~J. Goldsmith, ``The fluctuating two-ray fading model: Statistical characterization and performance analysis,'' \emph{IEEE Trans. Wireless Commun.}, vol.~16, no.~7, pp. 4420--4432, Jul. 2017.

\bibitem{Ouyang_2020}
C.~Ouyang, S.~Wu, C.~Jiang, D.~W.~K. Ng, and H.~Yang, ``Secrecy performance for finite-alphabet inputs over fluctuating two-ray channels in {FDA} communications,'' \emph{IEEE Wireless Commun. Lett.}, vol.~9, no.~10, pp. 1638--1642, 2020.

\bibitem{Zhao_2019}
H.~Zhao, L.~Yang, G.~Pan, and M.-S. Alouini, ``{Secrecy outage analysis over fluctuating two-ray fading channels},'' \emph{Electronics Lett.}, vol.~55, no.~15, pp. 866--868, 2019.

\bibitem{Zhou_2011}
X.~Zhou, M.~R. McKay, B.~Maham, and A.~Hjørungnes, ``Rethinking the secrecy outage formulation: A secure transmission design perspective,'' \emph{IEEE Commun. Lett.}, vol.~15, no.~3, pp. 302--304, 2011.

\bibitem{Cheng_2021}
Q.~Cheng, S.~Wang, V.~Fusco, F.~Wang, J.~Zhu, and C.~Gu, ``Physical-layer security for frequency diverse array-based directional modulation in fluctuating two-ray fading channels,'' \emph{IEEE Trans. Wireless Commun.}, vol.~20, no.~7, pp. 4190--4204, 2021.

\bibitem{Gao_2022}
X.~Gao, R.~Zhao, H.~Lin, and X.~Luo, ``Secrecy performance of {MISO} fluctuating two-ray fading channels,'' in \emph{Proc. IEEE 22nd Int. Conf. Commun. Technol. (ICCT)}, 2022, pp. 940--945.

\bibitem{Mathur_2018}
A.~Mathur, Y.~Ai, M.~R. Bhatnagar, M.~Cheffena, and T.~Ohtsuki, ``{On Physical Layer Security of $\alpha$ - $\eta$ - $\kappa$ - $\mu$ Fading Channels},'' \emph{IEEE Communications Letters}, vol.~22, no.~10, pp. 2168--2171, 2018.

\bibitem{Mathur_2019}
A.~Mathur, Y.~Ai, M.~Cheffena, and G.~Kaddoum, ``Secrecy performance of correlated $\alpha$ - $\mu$ fading channels,'' \emph{IEEE Communications Letters}, vol.~23, no.~8, pp. 1323--1327, 2019.

\bibitem{Kong_2019}
L.~Kong, G.~Kaddoum, and H.~Chergui, ``{On Physical Layer Security Over Fox's $H$-Function Wiretap Fading Channels},'' \emph{IEEE Transactions on Vehicular Technology}, vol.~68, no.~7, pp. 6608--6621, 2019.

\bibitem{Sharma_2023}
G.~Sharma and R.~K. Mallik, ``Secrecy performance of terahertz based systems over fluctuating two-ray fading,'' \emph{IEEE Trans. Wireless Commun.}, vol.~23, no.~6, pp. 5600 -- 5615, 2024.

\bibitem{Mathematics2022}
\BIBentryALTinterwordspacing
A.~A. Hussein and F.~Yilmaz, ``Secrecy performance analysis of a wiretap channel over shadowed beaulieu–xie fading channels,'' \emph{Mathematics}, vol.~10, no.~20, p. 3724, 2022. [Online]. Available: \url{https://www.mdpi.com/2227-7390/10/20/3724}
\BIBentrySTDinterwordspacing

\bibitem{abramowitz72}
M.~Abramowitz and I.~A. Stegun, \emph{Handbook of Mathematical Functions with Formulas, Graphs, and Mathematical Tables}, 10th~ed.\hskip 1em plus 0.5em minus 0.4em\relax Washington, DC: US Dept. of Commerce: National Bureau of Standards, 1972.

\bibitem{Osborn_1968}
D.~Osborn and R.~Madey, ``The incomplete beta function and its ratio to the complete beta function,'' \emph{Mathematics of Computation}, vol.~22, no. 101, pp. 159--162, 1968.

\bibitem{Olver10}
F.~W.~J. Olver, D.~W. Lozier, R.~F. Boisvert, and C.~W. Clark, \emph{NIST Handbook of Mathematical Functions}, 1st~ed.\hskip 1em plus 0.5em minus 0.4em\relax Washington, DC: US Dept. of Commerce: NIST, 2010.

\bibitem{Da_Costa_2008}
D.~B. Da~Costa, M.~D. Yacoub, and J.~Filho, ``Highly accurate closed-form approximations to the sum of $\alpha$-$\mu$ variates and applications,'' \emph{IEEE Trans. Wireless Commun.}, vol.~7, no.~9, pp. 3301--3306, 2008.

\bibitem{Garzon_2024}
N.~O. Garzón, H.~C. Mora, F.~A. García, J.~Vega-Sánchez, and D.~M. Osorio, ``Performance evaluation of {MRC} systems over sub-terahertz {MFTR} fading channels,'' \emph{IEEE Commun. Lett.}, vol.~28, no.~10, pp. 2422--2426, 2024.

\bibitem{cover2006elements}
T.~M. Cover and J.~A. Thomas, \emph{Elements of Information Theory}, 2nd~ed.\hskip 1em plus 0.5em minus 0.4em\relax Wiley-Interscience, 2006.

\bibitem{Arfken2005}
G.~B. Arfken and H.~J. Weber, \emph{Mathematical Methods for Physicists}, 6th~ed.\hskip 1em plus 0.5em minus 0.4em\relax Burlington, MA 01803, USA: Elsevier Academic Press, 2005.

\bibitem{Gradshteyn08}
I.~S. Gradshteyn and I.~M. Ryzhik, \emph{Table of Integrals, Series, and Products}, 8th~ed.\hskip 1em plus 0.5em minus 0.4em\relax New York: Academic, 2014.

\end{thebibliography}

\end{document}